\newcommand{\mysize}{\small}
\begin{document}
\title{\textbf{The Power of Local Information in PageRank}\footnote{This is the full version of the WWW 2013 poster~\cite{Bressan&2013b}. Last update: July 8th, 2013.}}

\author{
{\mysize Marco Bressan}\\
{\mysize Dip.\ Informatica}\\
{\mysize Sapienza Universit\`a di Roma}\\
{\mysize Roma, Italy}\\
{\mysize bressan@di.uniroma1.it}
\and
{\mysize Enoch Peserico}\\
{\mysize Dip.\ Ing.\ Informazione}\\
{\mysize Universit\`a di Padova}\\
{\mysize Padova, Italy}\\
{\mysize enoch@dei.unipd.it}
\and
{\mysize Luca Pretto}\\
{\mysize Dip.\ Ing.\ Informazione}\\
{\mysize Universit\`a di Padova}\\
{\mysize Padova, Italy}\\
{\mysize pretto@dei.unipd.it}
}

\date{}

\newtheorem{theorem}{Theorem}
\newtheorem{lemma}{Lemma}
\newtheorem{observation}{Observation}
\newtheorem{definition}{Definition}
\newtheorem{property}{Property}
\newcommand{\normone}[1]{\lVert #1 \rVert}
\newcommand{\normtwo}[1]{\lVert #1 \rVert_2}
\newcommand{\minset}[1]{\operatorname{minset} #1 }
\newcommand{\rankset}[1]{\operatorname{rankset} #1 }
\newcommand{\frontier}[1]{\operatorname{frontier} #1 }
\newcommand{\kernel}[1]{\operatorname{kernel} #1 }
\newcommand{\links}{\emph{links()} }
\newcommand{\linksNoSp}{\emph{links()}}
\newcommand{\jump}{\emph{jump()} }
\newcommand{\jumpNoSp}{\emph{jump()}}
\newcommand{\crawl}{\emph{crawl()} }
\newcommand{\crawlNoSp}{\emph{crawl()}}
\newcommand*{\LJ}{LiveJournal}
\newcommand*{\IT}{\texttt{.it}}
\newcommand*{\figsize}{0.75}
\newcommand*{\subfigsize}{0.75}
\newcommand*{\GREXT}{pdf}
\algloopdefx{NoEndIf}[1]{\textbf{if} #1 \textbf{then}}

\maketitle

\begin{abstract}
How large a fraction of a graph must one explore to rank a small set of nodes according to their PageRank scores?
We show that the answer is quite nuanced, and depends crucially on the interplay between the correctness guarantees one requires and the way one can access the graph.
On the one hand, assuming the graph can be accessed only via ``natural'' exploration queries that reveal small pieces of its topology, we prove that deterministic and Las Vegas algorithms must in the worst case perform $n - o(n)$ queries and explore essentially the entire graph, independently of the specific types of query employed.
On the other hand we show that, depending on the types of query available, Monte Carlo algorithms can perform asymptotically better: if allowed to both explore the local topology around single nodes and access nodes at random in the graph they need $\Omega(n^{2/3})$ queries in the worst case, otherwise they still need $\Omega(n)$ queries similarly to Las Vegas algorithms.
All our bounds generalize and tighten those already known, cover the different types of graph exploration queries appearing in the literature, and immediately apply also to the problem of approximating the PageRank score of single nodes.
\\[.01\textheight]
\textbf{Keywords:} Local algorithms, graph ranking, PageRank, graph exploration
models
\end{abstract}
\thispagestyle{empty}
\clearpage

\section{Introduction}
\label{sec:intro}
Suppose one has to compute the relative ranking induced by PageRank on a small set of nodes of a large, hardly
accessible graph.
Which fraction of the graph must be explored to assess this relative ranking, and to what extent do different
exploration primitives influence the exploration cost?
We investigate the interplay between algorithms for this ``local PageRank problem'' and graph exploration models,
focusing on the correctness guarantees of the former and on the locality properties of the latter.
For algorithms that guarantee always a correct output (i.e.\ Las Vegas and deterministic algorithms), we show that no
``natural'' exploration model exempts from exploring almost the entire input graph in the worst case. In particular, we
prove lower bounds on the number of nodes to visit that are tighter than all the existing ones, going as far as 
$n - o(n)$ for graphs on $n$ nodes; and that hold simultaneously under any exploration model, for any single algorithm
execution, and for any value of the PageRank scores and of their relative distance. 
The cornerstone of such a level of generality is the novel notion of \emph{ranking subgraph}, which characterizes the
cost of any algorithm returning always a correct result.
While we exploit this fact to compute tight worst-case lower bounds, we also show that, in general, even approximating
them within a factor $o(\log(n))$ is an NP-complete problem.
On the other hand, for algorithms that can return an incorrect output (i.e.\ Monte Carlo algorithms), we show that the
interplay with the exploration model is crucial: while under any ``strictly local'' exploration model they still undergo
worst-case lower bounds similar to those of Las Vegas algorithms, they turn to be very efficient with just a
``minimum amount'' of non-locality.
More precisely we prove that, if one can only visit neighbours of already visited nodes, then a worst-case $\Omega(n)$
lower bound applies, for any possible value of the PageRank scores and of their relative distance. If instead one is
allowed to jump to random nodes in the graph, then the exploration cost can be radically reduced. In this case, using
just two spare exploration primitives, we build an algorithm that ranks an arbitrary subset of nodes with
arbitrarily small error probability by visiting a portion of the graph inversely proportional to the smallest of
their scores -- thus ranging between $O(1)$ and $O(n)$. 
By proving (the first) lower bounds for Monte Carlo algorithms holding under any exploration model, we show that our
algorithm is optimal also among those employing more informative primitives, except for very small scores: in this case,
an even more efficient (and fully sublinear?)\ local algorithm for computing the PageRank ranking might exist.

While strengthening the existing bounds~\cite{Bar-Yossef&2008b,Bressan&2011}, our results delineate a clear separation
in cost between algorithms that guarantee always a correct solution, and algorithms that accept a small probability  of
error -- but only when these latter can access the graph in a ``global'' fashion.
Incidentally, this justifies why all the existing algorithms for local PageRank computations with worst-case cost
guarantees have a positive probability of error and rely on random access to the graph.
Although devised for the problem of computing locally the PageRank ranking, our lower and upper bounds apply also to the
problem of approximating locally the PageRank scores; the two problems appear then essentially equivalent,
answering an open question of~\cite{Bar-Yossef&2008b}.

\paragraph{Motivations and Related Work.}
\label{par:INTRO_related}
PageRank~\cite{BP98} is a prominent node centrality measure, proving impressively successful in tasks as diverse as
crawl seeding~\cite{CGMP98}, spam detection~\cite{Gyongyi&2004}, graph partitioning~\cite{Andersen&2006}, trendsetter
identification~\cite{Baeza&2012} and bibliometrics~\cite{Yan07}.
Indeed, PageRank has been rated among the top $10$ algorithms in data mining~\cite{Wu&2007}, and its large and growing
success has attracted a continuous stream of research.
While its algebraic properties have been explored intensely in the past (see~\cite{Bahmani&2010} and~\cite{LM04} for an
overview), in the last years the focus has moved to the algorithmic side, and especially on the issues of its parallel,
distributed, incremental, and local computation; we address this latter here. 

In this paper, by ``locality'' we mean the possibility of computing the output of a problem by examining a
limited part of the input. When the input is a graph, this typically means that an algorithm should
explore a small portion of it, starting from a given set of nodes and obtaining information on the
global structure one piece at a time, each time incurring a cost.
Locality, in this sense, has been studied since the eighties, especially with regard to classical problems on graphs
(see e.g.~\cite{Deng&1990,Papadimitriou&1989,Kalyanasundaram&1994}).
More recently, the difficulty to store, ``snapshot'', or even access in their entirety massive graphs (such as the web
or social networks) has urged to investigate locality as one of the crucial issues for the wide range of
computational problems that naturally arise in large graphs; and this has led to the introduction of models and
algorithms for solving (global) graph problems in a local fashion.
Notably,~\cite{Brautbar&2010} introduced the ``jump and crawl'' graph exploration model, along with local algorithms for
finding vertices with extreme topological properties (e.g.\ with high clustering coefficient),
while~\cite{Borgs&2012b,Borgs&2012c} investigated local algorithms for three graph problems -- finding the node
of maximum degree, a path between any two nodes, and a minimum dominating set.

The problem of estimating locally the PageRank score of a given node was introduced by Chen et al.\ in \cite{CGS04}, together with heuristic exploration strategies which give good score approximations by visiting only a handful of nodes.
From the theoretical point of view, however, under their model any algorithm requires to visit a worst-case $\Omega(n)$ nodes, as shown by our lower bounds.
In~\cite{Fogaras&2005}, Fogaras et al.\ indirectly lower bound the number of nodes one must explore to approximate
Personalized PageRank scores~\cite{H02}. Since (as they note) by a simple reduction this allows to compute the standard PageRank scores, our lower bounds also apply; and, indeed, special cases of their bounds match special cases of ours.
Avrachenkov et al.~\cite{Avrachenkov&2007} describe several variants of Monte Carlo score approximation algorithms based on random walks (on which leverages the algorithm of Section~\ref{sec:montecarlo}); however, as they require visiting $\Omega(n)$ nodes, they are not apt to local computations.
The first lower bounds on the cost of locally approximating the score of a node within a factor $1+\epsilon$ are given by Bar-Yossef et al.\ in \cite{Bar-Yossef&2008b}, in a scenario where algorithms are not limited to crawl
back from the target node (differently from~\cite{CGS04}), but still under a specific exploration model.
Their $\Omega(n)$ bound for deterministic algorithms and $\Omega(\sqrt{n})$ bounds for Las Vegas and Monte Carlo
algorithms hold for a restricted choice of the damping factor $\alpha$ and for $\epsilon \in O(1)$, and do not allow to
choose the target node's score; in contrast, we give (tighter) bounds holding for any $\alpha$, for any possible value
of $\epsilon$ and of the scores (even when non-constant functions of $n$), and under any graph exploration model.
In~\cite{Andersen&2008} Andersen et al.\ give a local algorithm, based on back crawling, to approximate the score contribution of each node in the graph towards a given node $v$ within an additive error $\epsilon_a$.
This algorithm can be used to compute an approximation of $v$'s score, but in the worst case it requires to visit essentially the entire graph -- in line with our $\Omega(n)$ lower bound for ``strictly local'' algorithms.
In~\cite{Bressan&2011} Bressan et al.\ provide, under the same model of~\cite{Bar-Yossef&2008b}, the first formalization
and analysis of the local PageRank ranking problem, which asks to rank a set of nodes in nondecreasing order of PageRank
score, allowing ties for scores within a factor $1+\epsilon$.
Their $\Omega(n)$ bounds for deterministic algorithms and $\Omega(\sqrt{n})$ bounds for Las Vegas and Monte Carlo
algorithms match those of~\cite{Bar-Yossef&2008b}
for the local score approximation problem, leaving open the question if computing the ranking is as hard as computing
the score; and, similarly to~\cite{Bar-Yossef&2008b}, they hold for $\epsilon \in O(1)$ and do not allow choosing scores
arbitrarily.
Finally, Borgs et al.~\cite{Borgs&2012}, in a model that allows random jumps, provide a Monte Carlo algorithm that returns all the nodes with score at least $\Delta$ but no node with score below $\Delta/c$, for any given $c > 3$, by visiting only $\tilde{O}(1/\Delta)$ nodes overall.

\paragraph{Organization of the Paper.}
\label{par:INTRO_organization}
After a short review of PageRank (Section~\ref{sec:pagerank}), Section~\ref{sec:localrank} formally introduces graph
exploration models and formally defines the problem at hand. Section~\ref{sec:det_and_lv} introduces ranking subgraphs,
uses them to prove lower bounds for deterministic and Las Vegas local ranking algorithms under any exploration model,
and analyses the computational complexity of determining their size.
Section~\ref{sec:montecarlo} gives lower bounds for Monte Carlo algorithms under different exploration models, and
provides an algorithm performing almost optimally under every model even if exploiting only a minimum degree of
``non-locality''.

\section{PageRank}
\label{sec:pagerank}
Let $G$ be an $n$-node graph with no dangling nodes (i.e.\ nodes with no outgoing arcs; see the note below). The
PageRank score~\cite{LM06} of a node $v$ is defined as:
\begin{align}
  \label{eqn:pagerank_series_influences}
P(v) = \frac{1-\alpha}{n}\sum_{\tau=0}^{+\infty}\alpha^{\tau}\sum_{z \in
G}\operatorname{inf}_{\tau}(z,v)
\end{align}
where the \textit{damping factor} $\alpha$ is some constant in $(0,1)$, and the $\tau$-step influence
$\operatorname{inf}_{\tau}(z,v)$ of $z$ on $v$ is the probability that a ``random surfer'' starting in $z$, and at each
step following an outgoing arc at random, is in $v$ after exactly $\tau$ steps. 
Using $\operatorname{inf}_{\tau}(z,v)$ to define the \textit{contribution} $P(z,v)$ of $z$ to $v$, the score of $v$ can
be rewritten as:
\begin{align}
  \label{eqn:pagerank_series_contribs}
P(v) = \sum_{z \in G}P(z,v) \qquad\; \text{where} \quad\, P(z,v) =
\frac{1-\alpha}{n}\sum_{\tau=0}^{+\infty}\alpha^{\tau}\operatorname{inf}_{\tau} (z , v)
\end{align}
When $G$ contains dangling nodes, the scores given by Equation~(\ref{eqn:pagerank_series_influences}) add up to less
than $1$, and thus do not form a probability distribution; in this case, PageRank is not defined.
This is typically solved by preprocessing the graph, adding to each dangling node $n$ outgoing arcs towards each
node of $G$. One can prove (see~\cite{Boldi&2006}) that the resulting PageRank scores are a rescaled version of those
given by Equation~(\ref{eqn:pagerank_series_influences}) in the original graph, and thus conserve their ratio and the
ranking they induce on the nodes. For this reason, even in the presence of dangling nodes, we will use the scores given
by Equation~(\ref{eqn:pagerank_series_influences}), which are equivalent to (but easier to compute than) the scores
obtained after the preprocessing. 
Finally, it is worth remarking that the score of a dangling node multiplies by $1/(1-\alpha)$ when adding a self-loop to
it (see~\cite{Bressan&2011}).

\section{Graph\!\ Exploration,\!\ Local\!\ Algorithms,\!\ and\!\ Problem\!\ Definition}
\label{sec:localrank}
This section introduces graph exploration models and local algorithms (Subsection~\ref{sub:explor_models}), formalizes
the local PageRank ranking problem (Subsection~\ref{sub:localcomp}), concluding with remarks on the information
available to algorithms (Subsection~\ref{sub:awareness}).

\subsection{Graph Exploration Models}
\label{sub:explor_models}
A \emph{graph exploration model} describes how algorithms can access a graph by defining a set of \emph{queries}, where
each query receives an input (possibly empty) and reveals a (typically tiny) portion of the graph in output.
In the popular \emph{link server} graph exploration model~\cite{Bar-Yossef&2008b}, a query accepts a node in input and
returns in output a list of all the nodes pointing to it and a list of all the nodes pointed by it.
The \emph{jump and crawl} model~\cite{Brautbar&2010} allows a \textit{jump} query that returns a random node of
the graph, and a \textit{crawl} query equivalent to the link server query; in~\cite{Borgs&2012}, the
\textit{crawl} query returns instead a random child of the input node (and this will be our definition of
\textit{crawl}). In the more minimalist \emph{edge probing} model (see e.g.~\cite{Anagnostopoulos&2012}), a query allows
to check for an arc's existence.
An algorithm obeying the exploration model is called a \emph{local algorithm}, and its goal is to evaluate one or more
nodes in terms of some global graph metric using as few queries as possible -- the number of queries is the \emph{cost}
incurred by the algorithm.
Note that, since the number of queries needed to solve a problem should be a measure of its ``exploration complexity'',
a query should disclose no information about the global arc structure of the graph (see the note in~\ref{apx:1}), if not
for a small portion near the queried node. Formally, given a graph $G$, we require that any graph exploration model
satisfy the following property.
\begin{property}
\label{pty:localmodel}
The output of a query is a (possibly empty) connected subgraph of $G$. For \emph{global queries}, the output must not
depend on the arcs of $G$, but can depend on its set of nodes. For \emph{local queries}, it can depend on the arcs
pointing to and from a node $u$ in input.
\end{property}
\noindent Intuitively, while global queries allow to discover remote nodes of the graph, local queries allow to assess
its local structure around a given node. This property captures any ``natural'' graph exploration model, including all
those proposed in literature (and those mentioned above). It will turn out that the model at hand is irrelevant in some
cases but crucial in others, and that interesting models are built on three types of query:
\begin{itemize}
\item $links(u)$ returns all the parents/children of $u$ and their arcs to/from $u$.
\item $crawl(u)$ returns a child of $u$ uniformly at random, or an empty graph if $u$ has no children.
\item $jump()$ returns a node of $G$ uniformly at random.
\end{itemize}
To capture the hardness of generating unknown, but valid IDs from scratch (think of e.g.\ web URLs), we do not allow
algorithms to query undiscovered nodes, which would require ``guessing'' their ID; instead, nodes should first be
discovered via the queries provided by the model.

\subsection{Local Computation of PageRank}
\label{sub:localcomp}
Given a graph $G$, a damping factor $\alpha$, a set of \emph{target nodes} $v_1, \ldots, v_k \in G$, and a separation
$\epsilon > 0$, the \emph{local PageRank ranking problem} asks to rank the target nodes in nonincreasing order of their
PageRank scores using local algorithms.
Any ranking is valid for nodes whose score ratio is below $1 + \epsilon$, in which case we say there is a tie; otherwise
we say that the nodes are $\epsilon$-separated, and require that the ranking induced by the scores be provided.
The goal of a local (PageRank) ranking algorithm is to compute a correct ranking using as few queries as possible.
It is easy to see that a local algorithm approximating the PageRank scores within a factor $\sqrt{1+\epsilon}$ can
immediately be turned into a local ranking algorithm for separations greater or equal to $\epsilon$ -- therefore, any
lower bound on the cost of local ranking for a separation $\epsilon$ applies to the cost of local
$(\sqrt{1+\epsilon}\,)$-approximation of the score, and the converse is true for upper bounds.
We denote an instance of the problem as $(G,\{v_1,\ldots,v_k\})$, omitting $\epsilon$ and $\alpha$ for easiness of
notation.
We write $u \succ v$ if $u$ ranks higher than $v$.
Special attention will be paid to scenarios where the target nodes are among the top ranking in the graph, which
are often the only ones of interest.

\subsection{Awareness Degree of the Algorithms}
\label{sub:awareness}
The cost incurred by an algorithm may reduce if the algorithm is aware of some global property of the graph (e.g.\ its
size or its diameter). However, assuming any knowledge appears arbitrary, since in many cases only partial information
is available (especially in massive or evolving graphs), and this information also varies from case to case.
For these reasons, and consistently with the (sometimes tacit) assumptions of previous work~\cite{Andersen&2008,
Bahmani&2012}, we will consider obliviousness as the reference case -- the algorithms do not have any kind of awareness
about global properties of the graph, but possess only the information provided by the output of the queries.

\section{Deterministic and Las Vegas}
\label{sec:det_and_lv}
This section investigates the interplay between graph exploration models and local ranking algorithms that return always
a correct ranking (i.e.,\ deterministic and randomized Las Vegas algorithms).
The variety of graph exploration models satisfying Property~\ref{pty:localmodel} makes it non-trivial to analyse these
local ranking algorithms ``in general''. Surprisingly, this can be done by analysing the algorithms under a specific,
strictly local exploration model that allows only \links queries; in this context, the two simple notions of
\textit{visit subgraph} (Subsection~\ref{subsec:visit_subgraphs}) and \textit{ranking subgraph}
(Subsection~\ref{subsec:ranking_subgraphs}) completely characterize the behaviour of any algorithm that returns always a
correct solution, and it will turn out that ranking subgraphs give per-instance lower bounds holding also under any
other exploration model.
Leveraging on these results, we build worst-case instances that yield tight lower bounds holding simultaneously under
any graph exploration model (Subsection~\ref{subsec:correct_cost}). Finally, we prove that, in general, computing or
even approximating these lower bounds is an NP-complete problem (Subsection~\ref{sub:computing}).

\subsection{Visit Subgraphs}
\label{subsec:visit_subgraphs}
At any instant, the nodes discovered by a generic algorithm can be divided into a ``kernel'' set of nodes that have
been queried (i.e.\ given in input to a query), and a ``frontier'' set of nodes that have not been queried but appeared
in some query's output. Under a model that allows only \links queries, the algorithm knows all the arcs between kernel
nodes, all the arcs between kernel nodes and frontier nodes, and none of the arcs between frontier nodes. This can be
formalized in the notion of \textit{visit subgraph compatible with a given graph} (see Figure~\ref{fig:visitgraph}),
where a \textit{visit subgraph} describes the general structure of the information collected via \linksNoSp:
\begin{definition}
A {\em visit subgraph} is a graph $H=(V,A)$ where $V$ is partitioned in a {\em kernel} and a {\em frontier} such that 1)
every frontier node has an arc to or from a kernel node and 2) there are no arcs between two frontier nodes.
\end{definition}
\begin{definition}
A visit subgraph $H$ is {\em compatible} with a graph $G$ if 1) $H$ is a subgraph of $G$, 2) the kernel of $H$
induces the same subgraph in $H$ and in $G$, and 3) the nodes pointing from or to the kernel of $H$, and the relative
arcs, are the same in $H$ and in $G$.
\end{definition}
\noindent Intuitively, a visit subgraph $H$ is compatible with a graph $G$ (we also say that $H$ \textit{is on} $G$, or
that $G$ is compatible with $H$) if some sequence of \links queries on $G$ yields exactly $H$. We will denote by
$\kernel(H)$ the kernel of $H$ and by $\frontier(H)$ its frontier, and we will call \textit{kernel size} of $H$ the
cardinality of $\kernel(H)$.
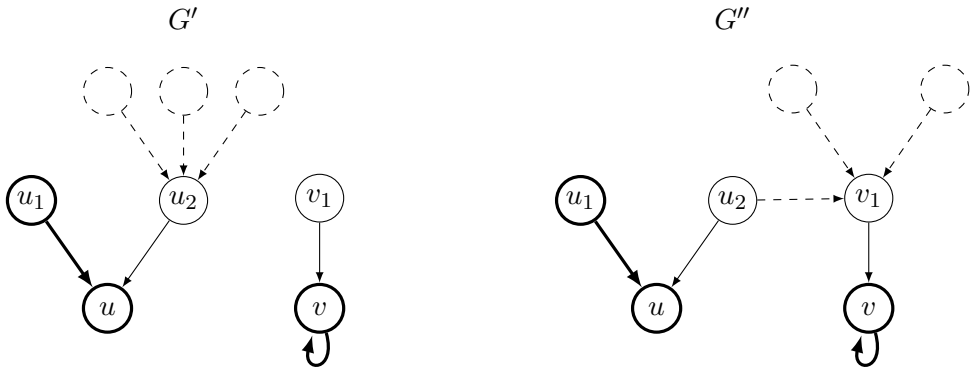
\begin{figure}[h]
\centering
  \begin{tikzpicture}[scale=0.93,
every circle node/.style={inner sep=0pt, minimum size=18},
line/.style={draw},
>=latex,
baseline={(u.base)}]
  \tikzstyle{every node} = [circle, fill=white];
  \def\Gcolor{black};
  \tikzstyle{graph} = [fill=white, draw=\Gcolor, dashed];
  \def\Fcolor{black};
  \tikzstyle{frontier} = [fill=white, draw=\Fcolor];
  \def\Kcolor{black};
  \tikzstyle{kernel} = [fill=white, draw=\Kcolor, very thick];
  
  \node[kernel] (u) at (-2,0) {\color{\Kcolor}$u$};
  \node[above=0.77cm of u] (xu) {};
  \node[kernel] (u1) [left of=xu] {\color{\Kcolor}$u_1$};
  \path[kernel,->,\Kcolor] (u1) edge (u);
  \node[frontier] (u2) [right of=xu] {$u_2$};
  \path[frontier,->,\Fcolor] (u2) edge (u);
  \node[graph,above=0.8cm of u2] (yu) {};
  \path[graph,->,\Gcolor] (yu) edge (u2);
  \node[graph] (u4) [left of=yu] {};
  \path[graph,->,\Gcolor] (u4) edge (u2);
  \node[graph] (u5) [right of=yu] {};
  \path[graph,->,\Gcolor] (u5) edge (u2);

  \node[kernel] (v) at (1,0) {\color{\Kcolor}$v$};
  \path[kernel,->,\Kcolor] (v) edge[loop below] (v);
  \node[frontier, above=0.8cm of v] (v1) {$v_1$};
  \path[->,\Fcolor] (v1) edge (v);

  \node[above=1.77cm of u2] (textG) {$G'$};
 \end{tikzpicture}
 \hspace{2.5cm}
 \begin{tikzpicture}[scale=0.93,
every circle node/.style={inner sep=0pt, minimum size=18},
line/.style={draw},
>=latex,
baseline={(u.base)}]
  \tikzstyle{every node} = [circle, fill=white];
  \def\Gcolor{black};
  \tikzstyle{graph} = [fill=white, draw=\Gcolor, dashed];
  \def\Fcolor{black};
  \tikzstyle{frontier} = [fill=white, draw=\Fcolor];
  \def\Kcolor{black};
  \tikzstyle{kernel} = [fill=white, draw=\Kcolor, very thick];

  \node[kernel] (u) at (-2,0) {\color{\Kcolor}$u$};
  \node[above=0.77cm of u] (xu) {};
  \node[kernel] (u1) [left of=xu] {\color{\Kcolor}$u_1$};
  \path[kernel,->,\Kcolor] (u1) edge (u);
  \node[frontier] (u2) [right of=xu] {$u_2$};
  \path[frontier,->,\Fcolor] (u2) edge (u);

  \node[kernel] (v) at (1,0) {\color{\Kcolor}$v$};
  \path[kernel,->,\Kcolor] (v) edge[loop below] (v);
  \node[frontier, above=0.8cm of v] (v1) {$v_1$};
  \path[frontier,->,\Fcolor] (v1) edge (v);
  \node[above=0.8cm of v1] (yv) {};
  \node[graph] (v2) [left of=yv] {};
  \path[graph,->,\Gcolor] (v2) edge (v1);
  \node[graph] (v3) [right of=yv] {};
  \path[graph,->,\Gcolor] (v3) edge (v1);
  \path[graph,->,\Gcolor] (u2) edge (v1);

  \node[above=1.77cm of u2] (textG) {$G''$};
 \end{tikzpicture}
 \caption{
A visit subgraph $H$ (solid arcs and nodes), with $\kernel(H) = \{u,u_1,v\}$ (in bold) and $\frontier(H) =
\{u_2,v_1\}$, compatible with two different graphs $G'$ and $G''$ (left and right, made of all the nodes and arcs
included those dashed). For e.g.\ $\alpha = 0.3$, this visit subgraph is not a ranking subgraph (see
Subsection~\ref{subsec:ranking_subgraphs})
for $u$ and $v$, since $u \succ v$ in $G'$ but $u \prec v$ in
$G''$.
}
 \label{fig:visitgraph}
\end{figure}

\subsection{Ranking Subgraphs}
\label{subsec:ranking_subgraphs}
Since an algorithm holds only the information given by the visit subgraph retrieved so far (see
Subsection~\ref{sub:awareness}), it cannot decide which of the different compatible input graphs it is exploring.
Nonetheless, that information may still be enough to deduce the correct ranking of the target nodes -- 
in this case, we say that the visit subgraph is a \emph{ranking subgraph}, which can be defined as follows:
\begin{definition}
A visit subgraph $H$ is a {\em ranking subgraph} for the nodes $v_1, \ldots, v_k$ if $v_1, \ldots, v_k \in
\kernel(H)$ and there do not exist two graphs compatible with $H$ where $v_1, \ldots, v_k$ have different relative
rankings (however ties, according to the separation $\epsilon$, are broken).
\end{definition}
\noindent We can now prove that, under a model that allows only \links queries, any algorithm that returns always a
correct ranking must visit always a ranking subgraph:
\begin{theorem}
\label{thm:must_visit_ranking_set}
Let $ALG$ be a deterministic or randomized Las Vegas local ranking algorithm using only \links queries, and let $H$ be
the visit subgraph visited by any execution of $ALG$ on a given instance $(G,\{v_1,\ldots,v_k\})$.
Then $H$ is a ranking subgraph for $v_1,\ldots,v_k$.
\end{theorem}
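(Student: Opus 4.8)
The plan is to argue by contradiction: suppose $ALG$ is a deterministic or Las Vegas algorithm that always returns a correct ranking, yet on some instance $(G,\{v_1,\ldots,v_k\})$ some execution terminates with a visit subgraph $H$ that is \emph{not} a ranking subgraph. There are two ways this can fail. First, it might be that $H$ does not contain all target nodes in its kernel; but then the algorithm has queried no \links on some $v_i$, and I would handle this by a simple padding/symmetry argument showing the output cannot be trusted (indeed, one can attach to an unqueried $v_i$ an arbitrarily large in-structure without $ALG$ noticing, changing its rank). Second, and this is the substantive case, all $v_i \in \kernel(H)$ but there exist two graphs $G_1, G_2$ both compatible with $H$ in which the target nodes have different relative rankings (with ties broken per $\epsilon$). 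The key observation is that, under a model with only \links queries, the \emph{entire transcript} of query inputs and outputs seen by $ALG$ depends on $G$ only through $H$: this is exactly what compatibility encodes. So I would first establish this transcript-determination lemma, essentially by induction on the number of queries — each \links query's output is determined by the arcs incident to the queried (kernel) node, which by compatibility are identical in $G$, $G_1$, and $G_2$, and the kernel/frontier sets evolve identically.

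Given the transcript-determination lemma, the contradiction is immediate in the deterministic case: running $ALG$ on $G_1$ and on $G_2$ produces the identical sequence of queries (since each next query is a deterministic function of the transcript so far) and hence the identical output ranking; but the correct rankings in $G_1$ and $G_2$ differ, so $ALG$ is wrong on at least one of them, contradicting that it always returns a correct ranking. For the Las Vegas case I would fix the coin tosses: since the execution producing $H$ actually occurred, there is a sequence $\rho$ of random bits for which $ALG$ with bits $\rho$, run on $G$, produces transcript $H$ and the (possibly incorrect) output; by the transcript-determination lemma the same bits $\rho$ on $G_1$ and on $G_2$ yield the same transcript and the same output ranking — but this must be correct for both $G_1$ and $G_2$ (a Las Vegas algorithm is correct on \emph{every} run, for every fixing of its randomness), which is impossible since their correct rankings differ.

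One subtlety I would be careful about: the algorithm might \emph{continue} querying after reaching $H$ in $G$, whereas in $G_1$ or $G_2$ the transcript might be extendable to a strictly larger visit subgraph. But the hypothesis is that $H$ is the visit subgraph of an execution that \emph{terminates}; since termination (the decision to stop and output) is again a function of the transcript, the execution on $G_1$ and $G_2$ with the same randomness also terminates at $H$ with the same output. A second point: I must check the definition of ranking subgraph allows ties to be ``broken however'' — so the contradiction needs $G_1, G_2$ to disagree on a \emph{genuine} $\epsilon$-separated comparison, not merely on how a tie is resolved; but that is exactly what the definition of ranking subgraph stipulates when it fails, so no extra work is needed.

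\textbf{Main obstacle.} The only real content is the transcript-determination lemma, and within it the bookkeeping that the \emph{sequence of kernel and frontier sets} (not just the final $H$) evolves identically across $G$, $G_1$, $G_2$ — i.e.\ that at no intermediate step does the algorithm see something in one graph that it doesn't see in another. This follows from Property~\ref{pty:localmodel} and the three conditions in the definition of compatibility (the kernel induces the same subgraph, and the in/out-neighbourhoods of kernel nodes agree), but stating the induction cleanly, with the invariant ``the visit subgraph after $t$ queries is the same in all three graphs and is compatible with each,'' is where care is required. Everything after that lemma is a two-line argument in each of the deterministic and Las Vegas cases.
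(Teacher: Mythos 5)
Your proposal is correct and follows essentially the same indistinguishability argument as the paper: $ALG$'s behaviour on any graph compatible with $H$ is determined by $H$ (and, for Las Vegas, by a fixed sequence of coin tosses), so if $H$ fails to be a ranking subgraph the algorithm must err on one of two compatible graphs with different rankings. Your version is in fact somewhat more careful than the paper's — spelling out the transcript-determination induction, the fixing of randomness, and the case where some $v_i$ is missing from the kernel — but the underlying idea is identical.
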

\noindent By Theorem~\ref{thm:must_visit_ranking_set}, in the strictly local model allowing only \links queries, the
cost incurred by any deterministic or randomized Las Vegas algorithm on the instance $(G,\{v_1,\ldots,v_k\})$ is lower
bounded by the smallest kernel size of all the ranking subgraphs for $v_1,\ldots,v_k$ compatible with $G$. Crucially, we
can prove that this holds also under any other model satisfying Property~\ref{pty:localmodel}, whatever the combination
of local and global queries it allows. Formally:
\begin{theorem}
\label{thm:all_must_visit_ranking_graph}
Consider an instance $(G,\{v_1,\ldots,v_k\})$, and let $r^*$ be the smallest kernel size of all the ranking subgraphs
for $v_1,\ldots,v_k$ on $G$.
Then, under any graph exploration model satisfying Property~\ref{pty:localmodel}, any execution of any deterministic or
randomized Las Vegas local ranking algorithm performs at least $r^*$ queries on the instance $(G,\{v_1,\ldots,v_k\})$.
\end{theorem}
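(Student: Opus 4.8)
The plan is to imitate the proof of Theorem~\ref{thm:must_visit_ranking_set}, isolating precisely what extra power a general model of Property~\ref{pty:localmodel} gains over the one allowing only \links queries. Suppose, for contradiction, that under some model $M$ satisfying Property~\ref{pty:localmodel} some execution $E$ of some deterministic or Las Vegas ranking algorithm $ALG$ performs $q<r^*$ queries on $(G,\{v_1,\ldots,v_k\})$ and outputs a ranking $R$. Let $K$ be the set of nodes that received a \emph{local} query during $E$. By Property~\ref{pty:localmodel} the output of a local query on $u$ cannot depend on, hence cannot contain, any arc not incident to $u$, and the output of a global query cannot depend on arcs at all; so every arc revealed during $E$ is incident to $K$. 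Let $H^*$ be the visit subgraph whose kernel is $K$, whose arcs are all the arcs of $G$ incident to $K$, and whose frontier is the set of neighbours of $K$ not in $K$. A routine check against the definition of compatibility shows that $H^*$ is compatible with $G$; and $H^*$ has kernel size $|K|\le q<r^*$, so by minimality of $r^*$ the visit subgraph $H^*$ is \emph{not} a ranking subgraph for $v_1,\ldots,v_k$.

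I would then turn this into an instance on which $ALG$ provably errs. As in the proof of Theorem~\ref{thm:must_visit_ranking_set}, every target must receive a local query during $E$, so $v_1,\ldots,v_k\in\kernel(H^*)$; hence the only way $H^*$ can fail to be a ranking subgraph is that there exist two graphs $G_1,G_2$ compatible with $H^*$ that admit no common valid ranking of $v_1,\ldots,v_k$. Applying this to the ranking $R$ output by $E$, we find that $R$ is invalid for at least one of them, say for $G_1$. The single point absent from the \linksNoSp-only setting is that replaying $E$ on $G_1$ also requires that the nodes returned by the global (\jumpNoSp) queries of $E$ still exist: letting $J$ ($|J|\le q$) be that set of nodes, define $G'$ to be $G_1$ with one extra node carrying a self-loop added for each element of $J$ not already in $V(G_1)$.

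The fact that makes $G'$ work is that the PageRank \emph{ranking} is invariant under the addition of self-loop nodes: such a node contributes $0$ to the score of every other node and merely rescales all scores by the common factor $\tfrac{|V(G_1)|}{|V(G')|}$ (cf.\ Section~\ref{sec:pagerank}), so $v_1,\ldots,v_k$ have the same $\epsilon$-separations in $G'$ as in $G_1$ and $R$ remains invalid for $G'$; moreover $G'$ is still compatible with $H^*$ (the new nodes are incident to no arc of $H^*$), it has no dangling nodes, and $V(G')\supseteq J$. It then follows that $E$ is a positive-probability execution of $ALG$ on $(G',\{v_1,\ldots,v_k\})$: every local query of $E$ is on a node of $\kernel(H^*)$ and there meets, in $G'$ exactly as in $G$, the arc-neighbourhood recorded in $H^*$, so it can return the same output (for a randomized local query such as \crawlNoSp, with the same positive probability); every global query can return the node of $J$ it returned in $E$ with positive probability; and the sets of discovered nodes stay in step by induction on the order of the queries (for a deterministic algorithm under a model with no global queries this simply says $G'=G_1$ and $E$ is \emph{the} execution on it). Hence $ALG$ outputs $R$ on $(G',\{v_1,\ldots,v_k\})$ with positive probability even though $R$ is not a valid ranking there, contradicting the assumption that $ALG$ is always correct; therefore $q\ge r^*$.

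The step I expect to be the crux --- the one the \linksNoSp-only analysis never has to face --- is establishing that the \emph{only} thing a general model buys over the \links model is the knowledge that certain nodes exist, and that this knowledge is irrelevant to the ranking; this is exactly what forces the self-loop padding and the appeal to the scale-invariance of the PageRank ranking. Two subsidiary points must also be handled with care: reading Property~\ref{pty:localmodel} tightly enough to be sure that no query, local or global, can disclose an arc between two unqueried nodes, so that $H^*$ really captures all the arc information available to $ALG$; and the fact --- inherited from the proof of Theorem~\ref{thm:must_visit_ranking_set} --- that all targets must end up in $\kernel(H^*)$.
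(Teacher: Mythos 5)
Your proof is correct and follows essentially the same route as the paper: build a visit subgraph of kernel size at most $q<r^*$ that over-approximates all the information the algorithm has gathered, conclude it cannot be a ranking subgraph, and replay the execution with positive probability on a compatible graph for which the returned ranking is invalid. The only divergence is at the step you identify as the crux: the paper simply places the nodes returned by global queries into the kernel of that visit subgraph (so they automatically exist in every compatible graph), which renders your self-loop padding of $G_1$ and the appeal to scale-invariance of the PageRank ranking unnecessary.
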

\noindent
By Theorem~\ref{thm:all_must_visit_ranking_graph}, ranking subgraphs give per-instance lower bounds, independent of the
exploration model, on the cost of solving the local ranking problem.
It turns out that ranking subgraphs are characterized by two simple properties, which can be expressed in terms of
\emph{kernel score}:
\begin{definition}
Consider a visit subgraph $H$ and a node $v \in \kernel{(H)}$. The {\em kernel contribution} $P_H(z,v)$ is the
contribution of $z$ to $v$ through paths whose nodes (including $z$) are all in $\kernel{(H)}$. The {\em kernel score}
$P_H(v)$ is:
\begin{align}
  P_H(v) = \sum_{z \in H}P_H(z,v)
\end{align}
\end{definition}
\noindent Thus, $P_H(\cdot)$ is the ``standard'' PageRank score on the graph $H$, except that it disregards any
contribution coming from or through $\frontier{(H)}$.
Therefore, given $H$, $P_H(\cdot)$ lower bounds the PageRank score on any graph compatible with $H$, where this
contribution in general exists; and since it can be arbitrarily small (as frontier nodes can have arbitrarily many
children), the bound is the tightest possible.
We can now use $P_H(\cdot)$ to characterize ranking subgraphs:
\begin{theorem}
\label{thm:rankgraph_char}
A\! visit subgraph $H$\! is a ranking subgraph for two kernel nodes $u \succ\! v$ if and only if:
\begin{enumerate}
 \item $P_H(u) \geq P_H(v)/(1+\epsilon)$
 \item $\forall w \in \frontier(H)$ : $\underset{(w,z)\in H}{\sum} P_H(z,u)
 \geq \underset{(w,z)\in H}{\sum} P_H(z,v)$.
\end{enumerate}
\end{theorem}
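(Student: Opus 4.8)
The heart of the argument is a decomposition lemma for the PageRank score of a kernel node over an arbitrary compatible graph. I would first prove that for every $G$ compatible with $H$ and every $x\in\kernel(H)$,
\[
P_G(x) \;=\; P_H(x) \;+\; \sum_{w\in\frontier(H)} c_w(G)\sum_{(w,z)\in H}P_H(z,x),
\qquad
c_w(G) \;:=\; \frac{\alpha\, n}{(1-\alpha)\, d_G(w)}\,P_G(w) \;\ge\; 0,
\]
where $d_G(w)$ is the out-degree of $w$ in $G$ (the sum effectively ranging over frontier nodes with at least one arc into the kernel). To see this, expand $P_G(x)$ through Equation~(\ref{eqn:pagerank_series_influences}) as a weighted sum over all random walks of $G$ ending at $x$, and split each walk at the position of its last visit to a node outside $\kernel(H)$. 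Walks that never leave the kernel contribute exactly $P_H(x)$, since compatibility makes the kernel induce the same subgraph in $H$ and $G$ and gives every kernel node the same out-degree, so the walk weights agree with those defining $P_H$. For a walk that does leave the kernel, the last node it visits outside $\kernel(H)$ must be a \emph{frontier} node $w$: the next node lies in $\kernel(H)$, so by compatibility all its in-arcs are already in $H$, hence the previous node is in $H$ and, not being in the kernel, is in $\frontier(H)$. Such a walk then steps into the kernel through an arc $(w,z)\in H$ and stays there forever, and summing the weights factors as a product of the weight of reaching $w$, a factor $\frac{\alpha}{d_G(w)}$ for the step $w\to z$, and the kernel-only weight from $z$ to $x$, which after normalisation is exactly the displayed term. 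Two features will be used repeatedly: $c_w(G)\ge 0$ always, it can be made arbitrarily small by giving $w$ many children, and it can be made arbitrarily large by routing much score into $w$.

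\emph{Sufficiency.} Given Conditions 1 and 2, fix any compatible $G$ and subtract the decompositions of $(1+\epsilon)P_G(u)$ and $P_G(v)$:
\[
(1+\epsilon)P_G(u) - P_G(v) \;=\; \bigl[(1+\epsilon)P_H(u) - P_H(v)\bigr] \;+\; \sum_{w\in\frontier(H)} c_w(G)\Bigl[(1+\epsilon)\,\textstyle\sum_{(w,z)\in H}P_H(z,u) \;-\; \sum_{(w,z)\in H}P_H(z,v)\Bigr].
\]
Condition 1 makes the first bracket non-negative; Condition 2 gives $\sum_{(w,z)\in H}P_H(z,u) \ge \sum_{(w,z)\in H}P_H(z,v)$, so each bracket in the sum is at least $\epsilon\sum_{(w,z)\in H}P_H(z,u) \ge 0$; and $c_w(G)\ge 0$. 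Hence $P_G(u)\ge P_G(v)/(1+\epsilon)$ in every compatible $G$, i.e.\ no compatible graph is $\epsilon$-separated with $v$ above $u$, so the ranking $u\succ v$ is valid on all of them and $H$ is a ranking subgraph for $u\succ v$.

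\emph{Necessity — the main obstacle.} Here I would prove the contrapositive by exhibiting, whenever a condition fails, a compatible graph that is $\epsilon$-separated with $v$ above $u$. The tool is a family of compatible graphs obtained by attaching fresh vertices to $H$, which lets one tune the coefficients $c_w(G)$ almost independently: (i) hanging $M$ new self-looping children off every frontier node gives a compatible $G_M$ in which the linear system for the kernel scores decouples from the frontier as $M\to\infty$, so every $c_w(G_M)\to 0$ and $P_{G_M}(x)\to P_H(x)$ on the kernel; (ii) additionally attaching $N$ fresh sources each pointing only to a chosen $w_0$ raises $P_G(w_0)$, hence $c_{w_0}(G)$, by an amount growing with $N$, so by taking $N$ large relative to $M$ one can push $c_{w_0}(G)$ past any threshold while every other $c_w(G)\to 0$. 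If Condition 1 fails, construction (i) yields $P_{G_M}(u)\to P_H(u) < P_H(v)/(1+\epsilon)$, so $v$ is strictly above $u$ in $G_M$ for large $M$, and $H$ is not a ranking subgraph for $u\succ v$. If Condition 1 holds but Condition 2 fails at $w_0$, construction (ii) drives $(1+\epsilon)P_G(u)-P_G(v)$ toward $c_{w_0}\cdot\bigl[(1+\epsilon)\sum_{(w_0,z)\in H}P_H(z,u)-\sum_{(w_0,z)\in H}P_H(z,v)\bigr]$ plus a bounded remainder, which turns negative once $c_{w_0}$ is large enough, provided this bracketed frontier quantity is negative. This is exactly the step that requires the most care: making a single frontier contribution dominant flips the ranking precisely when $(1+\epsilon)\sum_{(w_0,z)\in H}P_H(z,u) < \sum_{(w_0,z)\in H}P_H(z,v)$, so the sharp necessary-and-sufficient form of Condition 2 carries the same $1/(1+\epsilon)$ slack as Condition 1, with the stated $\epsilon$-free inequality being its sufficient strengthening — and closing this gap is the subtlety to watch. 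The remaining work is routine: verifying that the attached-vertex graphs are genuine compatible extensions of $H$, checking the claimed limits for $P_G(w)$ and $c_w(G)$, and confirming that dangling kernel or frontier nodes are absorbed by the convention of Section~\ref{sec:pagerank}.
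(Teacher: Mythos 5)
Your decomposition is the paper's argument made explicit: the paper likewise splits $P_G(x)$ into the contribution of kernel-only walks and the contributions of walks whose last non-kernel node is a frontier node, kills the latter by giving frontier nodes many fresh children (necessity of Condition~1), and amplifies a single frontier term by giving $w_0$ many fresh parents (necessity of Condition~2). Your sufficiency proof and your proof that Condition~1 is necessary are correct, modulo two harmless slips: the kernel term in the decomposition should carry a factor $|H|/|G|$ (uniform in $x$, so it cancels in every comparison you make), and $c_{w_0}(G)$ cannot be pushed ``past any threshold'' in absolute terms (it is bounded by $\alpha|H|/((1-\alpha)d_G(w_0))$ since $P_G(w_0)\le 1$); what your construction actually achieves, and all your argument needs, is that $c_{w_0}$ stays bounded away from $0$ while $|H|/|G|$ and every other $c_w$ vanish.

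The subtlety you flag at the end is not a gap you failed to close: it cannot be closed, because the ``only if'' direction of Condition~2 as stated is false, and your decomposition is precisely what makes this visible. What is necessary and sufficient (together with Condition~1) is the slack form $(1+\epsilon)\sum_{(w,z)\in H}P_H(z,u)\ge\sum_{(w,z)\in H}P_H(z,v)$. For a concrete witness take $\kernel(H)=\{u,v,z_1,z_2\}$, $\frontier(H)=\{w\}$, arcs $z_1\to u$, $z_1\to v$, $z_2\to v$, $w\to z_1$, $w\to z_2$, and any $\epsilon\ge 2$: the frontier sums toward $u$ and $v$ stand in ratio $1:3$, so Condition~2 fails at $w$, yet every unit of mass entering the kernel from outside passes through $w$ and delivers exactly three times as much to $v$ as to $u$, while the kernel-only scores satisfy $P_H(v)/P_H(u)=(1+\tfrac{3\alpha}{2})/(1+\tfrac{\alpha}{2})<3$; hence every compatible graph has $P(v)<3P(u)\le(1+\epsilon)P(u)$, every compatible graph yields a tie, and $H$ \emph{is} a ranking subgraph for $u\succ v$. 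The paper's own proof commits exactly the leap you were wary of: after adding $n$ parents to $w$ it asserts that ``for $n$ sufficiently large we have $P(v)>(1+\epsilon)P(u)$,'' but the limiting ratio is $\sum_{(w,z)}P_H(z,v)/\sum_{(w,z)}P_H(z,u)$, which need not exceed $1+\epsilon$. So your instinct was right; the honest resolution is to restate Condition~2 with the $(1+\epsilon)$ slack (this leaves the paper's downstream uses intact, since there Condition~2 is only ever invoked when the $u$-side sum is zero and the $v$-side sum is positive, where both forms fail together).
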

\noindent Theorem~\ref{thm:rankgraph_char} implies that an algorithm can use $P_H(\cdot)$ to assess if the current visit
subgraph $H$ is a ranking subgraph and, in that case, immediately return the correct ranking.
Subsection~\ref{subsec:correct_cost} exploits all these results to develop worst-case lower bounds on the cost of local
ranking; the proofs will make use of the notion of union between visit subgraphs (formalized
in~\ref{apx:ranking_subgraphs_lemmas}).

\subsection{Lower Bounds}
\label{subsec:correct_cost}
Leveraging on the concept of ranking subgraph and related results (Subsection~\ref{subsec:ranking_subgraphs}), we are
now able to prove lower bounds for deterministic and randomized Las Vegas algorithms that are independent from the graph
exploration model. Formally:
\begin{theorem}
 \label{thm:cost_deterministic}
Choose an integer $k \geq 2$, a damping factor $\alpha \in (0,1)$, a score function $\Theta(1/x) \leq
p(x) \leq \Theta(1)$, and a separation function $\Theta(1/x) \leq \epsilon(x) \leq \Theta(1/p(x)^\frac{1}{k-1})$.
There exists a family of graphs $\{G_m\}$ where $G_m$ has size $n=\Theta(m)$ and contains $k$ nodes $v_1,\ldots,v_k$
such that:
\begin{enumerate}
 \item $v_1, \ldots, v_k$ have score $\Omega(p(n))$
 \item $v_1, \ldots, v_k$ are $\Theta(\epsilon(n))$-separated, i.e.\ $P(v_{i+1})/P(v_i) \in \Theta(1+\epsilon(n))$
 \item $v_1, \ldots, v_k$ are the top $k$ ranking nodes in $G_m$
 \item under any graph exploration model, any execution of any deterministic or randomized Las Vegas local ranking
algorithm needs $n(1 - O(p(n)\epsilon(n)(1 + \epsilon(n))^{k-2}))$ queries to solve the instance
$(G_m,\{v_1,\ldots,v_k\})$.
\end{enumerate}
\end{theorem}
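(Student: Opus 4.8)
The plan is to reduce, through Theorem~\ref{thm:all_must_visit_ranking_graph}, to a purely combinatorial statement: since that theorem already makes the bound independent of the exploration model, it suffices to build a family $\{G_m\}$ satisfying conditions~1--3 in which \emph{every} ranking subgraph for $v_1,\ldots,v_k$ compatible with $G_m$ has kernel size at least $n\bigl(1-O(q(n))\bigr)$, where $q(n):=p(n)\,\epsilon(n)\,(1+\epsilon(n))^{k-2}$; equivalently, at most $O(n\,q(n))$ nodes of $G_m$ may stay out of the kernel. The ``under any model'' clause then comes for free.

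For the construction I would give each $v_i$ a self-loop (this both prevents dangling nodes and amplifies the incoming mass of $v_i$ by $1/(1-\alpha)$ in a controlled way) and feed $v_i$ through an in-structure tuned, as a function of $\alpha,p(n),\epsilon(n)$, so that $P(v_i)=\Theta\bigl((1+\epsilon(n))^{k-i}p(n)\bigr)$ --- all $\Omega(p(n))$ --- with consecutive ratios $1+\Theta(\epsilon(n))$: when $\epsilon$ is large this is a set $S_i$ of $\Theta\bigl((1+\epsilon)^{k-i}np\bigr)$ private in-neighbours of $v_i$ (each pointing only to $v_i$), and since a one-step fixed-point computation gives $P(v_i)=\tfrac1n(1+\alpha|S_i|)+O(1/n)$ the cardinalities can be chosen to realise the target scores and ratios; when $\epsilon$ is so small that the required absolute score gaps $\Theta\bigl(\epsilon(1+\epsilon)^{k-2}p\bigr)$ drop below a node's jump mass $\tfrac{1-\alpha}{n}$, one instead produces these gaps with ``tails'' of depth $\Theta(\log n)$ attached to the $v_i$ (a node at depth $d$ contributes $\Theta(\alpha^d/n)$). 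The leftover $n-k-\sum_i|S_i|$ nodes I would arrange as a ``bulk'' $B$: a long low-fan-in path whose head points to $v_2,\ldots,v_k$ (but not $v_1$), attaching at depth $\Theta(\log n)$. This keeps every node of $B$ at score $O(1/n)$ (and in fact below $P(v_k)$, since $|S_k|\ge1$ forces $P(v_k)>\tfrac1n$), makes $B$'s total contribution to each target only $O(1/n)$ --- which is $O(p(n))$ and so absorbed in the choice of the $|S_i|$ --- and leaves $P(v_1)$ untouched, yet renders \emph{every} node of $B$ an ancestor of $v_2$. The hypothesis $\epsilon(x)\le\Theta(p(x)^{-1/(k-1)})$ is exactly what makes $\sum_i|S_i|=\Theta\bigl((1+\epsilon)^{k-1}np\bigr)=O(n)$, so $B$ is well-defined (it degenerates only in the extreme regime), while $p,\epsilon\ge\Theta(1/x)$ keeps the cardinalities integral. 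Conditions~1--3 then follow by a direct PageRank computation; in particular every non-target node has score below $P(v_k)/(1+\epsilon(n))$, so $v_1,\ldots,v_k$ are the top $k$.

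The heart of the argument is the kernel lower bound, which I would get from Theorem~\ref{thm:rankgraph_char}. Let $H$ be a ranking subgraph for $v_1,\ldots,v_k$; it is then a ranking subgraph for each pair of targets, so (as $v_1\succ v_2$ in $G_m$) conditions~1 and~2 of Theorem~\ref{thm:rankgraph_char} hold for $v_1\succ v_2$, and similarly for every other pair. First, by compatibility every node of $G_m$ with an arc into $\kernel(H)$ lies in $V(H)$. Second, if some $w\in\frontier(H)$ has within $H$ a child on a kernel path to $v_2$ but none to $v_1$, then condition~2 for $v_1\succ v_2$ reads $0\ge(\text{something positive})$, impossible; inducting outward from $v_2$ via the first point forces every ancestor of $v_2$ --- hence all of $S_2$ and all of $B$ --- into $\kernel(H)$, and symmetrically (using the pairs $v_j\succ v_i$) so does every ancestor of $v_3,\ldots,v_k$. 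Third, since then $P_H(v_2)=P(v_2)$, condition~1, $P_H(v_1)\ge P_H(v_2)/(1+\epsilon)$, leaves $P_H(v_1)$ at most a slack of $\Theta\bigl(\epsilon(n)\,P(v_2)\bigr)=\Theta(q(n))$ below $P(v_1)$; as each node of $v_1$'s in-structure, when moved to the frontier, removes its kernel contribution to $v_1$, and the structure is shaped so that excising more than $\Theta(n\,q(n))$ of its nodes costs more than that slack, at most $t_1=O(n\,q(n))$ of them can lie in the frontier. Combining the three points, at most $O(n\,q(n))$ nodes of $G_m$ stay out of $\kernel(H)$, which is condition~4.

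The step I expect to be the real obstacle is not any single computation but engineering a \emph{single parametric family} that meets conditions~1--4 \emph{simultaneously over the whole admissible range} of $p$ and $\epsilon$. When $(1+\epsilon)^{k-1}p=o(1)$ one needs the bulk $B$ to grow to $\Theta(n)$ while staying below $P(v_k)/(1+\epsilon)$ in score, contributing only $o(P(v_k))$ to each target so the prescribed separations survive, and being absorbed wholesale by the kernel through the ancestor argument; when $\epsilon$ is tiny the separations themselves have to be produced by depth-$\Theta(\log n)$ tails rather than by extra in-neighbours, since the required score gaps are sub-$\tfrac1n$; and when $(1+\epsilon)^{k-1}p=\Theta(1)$ the in-structures already fill the graph, $B$ disappears, scores approach $\Theta(1)$, and the bound must --- and does --- weaken to $\Omega(n)$, consistently with $q(n)=\Theta(1)$ there. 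Stitching these regimes together, and verifying that the ratios $P(v_i)/P(v_{i+1})$ realised with integer cardinalities stay $1+\Theta(\epsilon)$ throughout, is where most of the work lies; a minor side point is the (immediate) fact that a ranking subgraph for $v_1,\ldots,v_k$ is one for every pair among them, which is all the argument needs.
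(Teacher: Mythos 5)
Your proposal is correct and follows essentially the same route as the paper's proof: reduce via Theorem~\ref{thm:all_must_visit_ranking_graph} to lower-bounding the minimum kernel size of a ranking subgraph, give each target a block of orphan ``bulk parents'' sized $\propto p(1+\epsilon)^{j}$ plus a sub-$1/n$-resolution gadget for tiny separations, attach a $\Theta(n)$-node filler structure to every target except the top-ranked one so that condition~2 of Theorem~\ref{thm:rankgraph_char} forces it (and all other ancestors of the non-top targets) into the kernel, and then use condition~1 to show that only $O(n\,p\,\epsilon(1+\epsilon)^{k-2})$ of the top target's parents can be excised. The only differences are in the gadgets --- the paper uses $k-1$ disjoint $m$-node cliques as filler and a ``fractional parent'' fanning into a clique for fine resolution, where you use a single shared path and logarithmic-depth tails --- and the one point to watch is that your path contributes $\Theta(1/n)$ to each of $v_2,\ldots,v_k$ rather than the clique's $O(1/n^2)$, so the compensation via the $|S_i|$ and tails that you flag is genuinely required (and does work, since the contribution is identical across $v_2,\ldots,v_k$).
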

\noindent Theorem~\ref{thm:cost_deterministic} strengthens and extends to every exploration model the $\Omega(n)$ lower
bound for deterministic algorithms limited to \links queries, and the analogous $\Omega(\sqrt{n})$ lower bound for Las
Vegas algorithms (see~\cite{Bressan&2011}); and it does so also for the problem of computing within a factor
$\Theta(\sqrt{1+\epsilon(n)\!}\,)$ the scores of the target
nodes (whose solution gives the ranking)~\cite{Bar-Yossef&2008b}.
Furthermore, when the absolute score separation $p(n)\epsilon(n)(1 + \epsilon(n))^{k-2}$ between $v_{k-1}$ and $v_k$ is
in $o(1)$ (see the proof in~\ref{apx:proof_thm_cost_deterministic}), the lower bound goes as far as $n - o(n)$, meaning
that the algorithm must explore the entire graph, save possibly a vanishingly small portion.

In the next subsection we will prove that computing or even approximating these general, tight lower bounds on generic
instances is an NP-complete problem.

\subsection{Ranking Subgraphs and Computational Complexity}
\label{sub:computing}
If we were able to compute the minimal kernel size of all the ranking subgraphs for any given instance, we could obtain
lower bounds for any graph and any set of target nodes -- including those deriving from real-world scenarios. 
Unfortunately, as the present subsection shows, this problem is computationally intractable unless P $=$ NP.

Given a graph $G$ and two target nodes $u$ and $v$ (the results can clearly be extended to $k \geq 2$ target nodes), we
denote by RANKGRAPH the problem of deciding if $G$ is compatible with a ranking subgraph of given kernel size $r$ for $u
\succ v$. Formally:
\vspace{1em}\\
RANKGRAPH$(G,u,v,r)$
\par {\bf Input}: directed graph $G$, target nodes $u, v \in G$, integer $r > 0$.
\par {\bf Output}: YES if there exists a ranking subgraph of kernel size $r$ for $u \succ v$ compatible with $G$, NO
otherwise.
\vspace{1em}\\
Theorems~\ref{thm:rankgraph_nph_clique} and~\ref{thm:rankgraph_nph_dominating} prove that RANKGRAPH is
\mbox{NP-hard} even when the complexity deals only with, respectively, Property 1 and Property 2 of
Theorem~\ref{thm:rankgraph_char} (part of the kernel set being already fixed by the other property); thus, the problem
appears to be the ``superposition'' of two \mbox{NP-hard} problems.
\begin{theorem}
\label{thm:rankgraph_nph_clique}
 RANKGRAPH is NP-hard (by reduction from CLIQUE).
\end{theorem}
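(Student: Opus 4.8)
The plan is to reduce from CLIQUE. Given a CLIQUE instance $(\Gamma,\kappa)$ with $\Gamma$ on a vertex set $X$ of size $N$, I build a graph $G$ as follows. For each vertex of $\Gamma$ I put a node of $G$ (keeping the names $X$); I add an arc $x\to u$ to a fresh target node $u$ from every $x\in X$, and for every edge $\{x,y\}$ of $\Gamma$ the two arcs $x\to y$ and $y\to x$. I add a second target node $v$ whose only out-arc is a self-loop and which has no in-arcs, and I give $u$ one extra out-arc to a fresh dummy ``sink''. Finally I fix a large polynomial $d$ (say $d=N^{10}$), pad each $x\in X$ with arcs to fresh dummy sinks so that its out-degree is exactly $d$, and give every dummy just a self-loop. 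I set $r:=\kappa+2$; I let $M$ be the kernel score $P_H(u)$ of the visit subgraph whose kernel is $\{u,v\}$ together with the $\kappa$ nodes of some $\kappa$-clique of $\Gamma$ (this value is independent of the clique and has polynomial bit-length); and I set $\epsilon:=P_H(v)/M-1$. The claim is that RANKGRAPH$(G,u,v,r)$ is YES iff $\Gamma$ has a $\kappa$-clique.

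I would first dispose of Property~2 of Theorem~\ref{thm:rankgraph_char}. A visit subgraph compatible with $G$ is determined by its kernel $K$ (with $u,v\in K$), and its frontier consists of dummy sinks and of vertex nodes $x\notin K$. A dummy sink has no out-arc in the subgraph, so Property~2 reads $0\ge 0$ there; a frontier vertex node $x$ has, inside the subgraph, the out-arc to $u$ and possibly out-arcs to kernel vertex nodes, so its left-hand sum is $P_H(u,u)+\sum_y P_H(y,u)\ge P_H(u,u)>0$ while its right-hand sum is $0$, because neither $u$ nor any vertex node reaches $v$ inside $K$ and $v$ has no in-arcs. Hence Property~2 holds for every kernel, and by Theorem~\ref{thm:rankgraph_char} the answer is YES iff some set $S$ of $\kappa$ non-target nodes makes $P_H(u)\ge P_H(v)/(1+\epsilon)=M$ in the visit subgraph with kernel $\{u,v\}\cup S$.

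The core step is to analyse this kernel-score condition. Using the out-degree regularization one gets $P_H(u)=P_H(u,u)+c\cdot\frac{1-\alpha}{n}\sum_{i\ge 0}(\alpha/d)^{i}W_i(S)$, where $c>0$ depends only on $\alpha$ and $d$ (since $u$ has constant out-degree), $P_H(u,u)$ depends only on $\alpha$ and $n$, dummy nodes in $S$ contribute nothing (they are sinks and never reach $u$), and $W_i(S)$ is the number of length-$i$ walks in the subgraph of $\Gamma$ induced by the vertex nodes of $S$. For a set of $m\le\kappa$ vertex nodes one has $W_0=m$ and $W_i\le m(m-1)^i$ for $i\ge 1$, with equality throughout iff those nodes induce $K_m$; consequently, if $\Gamma$ has a $\kappa$-clique then among all kernels of size $r$ the value $P_H(u)$ is uniquely maximized --- with value exactly $M$ --- by a ``clique kernel'', and every other kernel gives $P_H(u)<M$ by a margin that is polynomially bounded below. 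Thus, if $\Gamma$ has a $\kappa$-clique, the corresponding kernel satisfies Property~1 (with equality) and Property~2, so RANKGRAPH answers YES; if not, every kernel of size $r$ violates Property~1 and the answer is NO.

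It remains to check that $\epsilon$ is positive and of polynomial bit-length, which is where the degrees are chosen: $P_H(v)=\frac{1-\alpha}{n}\cdot\frac1{1-\alpha}=\frac1n$, whereas $P_H(u,u)=\frac{1-\alpha}{n}\cdot\frac1{1-\alpha/2}<\frac1n$ by a $\Theta(1/n)$ margin, and the clique bonus $c\cdot\frac{1-\alpha}{n}\cdot\frac{\kappa}{1-\alpha(\kappa-1)/d}$ is $o(1/n)$ once $d$ is a large enough polynomial; hence $M<\frac1n=P_H(v)$, so $\epsilon=P_H(v)/M-1$ is a positive rational of polynomial size --- in fact bounded below by a positive constant, so the hardness already holds for a constant separation. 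Since $G$, $r$, $\epsilon$ are computable in polynomial time and CLIQUE is NP-complete, this gives the stated NP-hardness. I expect the main obstacle to be the core step: forcing $P_H(u)$ to be a clean nonnegative combination of the walk counts $W_i(S)$ (the purpose of the degree regularization) and verifying that the clique strictly, and by a polynomial margin, beats every competing kernel of the same size --- including those that trade vertex nodes for dummies or use fewer than $\kappa$ vertex nodes --- so that $\epsilon$ can be pinned exactly at the clique threshold.
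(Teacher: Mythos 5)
Your reduction is correct in its essentials and uses the same combinatorial engine as the paper's proof: you embed the CLIQUE instance into the parent set of $u$, regularize all out-degrees to a common value $d$ so that the kernel score of $u$ for a kernel $\{u,v\}\cup S$ becomes a fixed positive combination of the walk counts $W_i(S)$ in $\Gamma[S]$, and observe that this quantity is maximized, uniquely and with a computable value $M$, exactly when $S$ induces $K_\kappa$. Where you genuinely diverge is in how the decision threshold is realized. The paper builds a ``twin'' $m$-clique (plus $q$ tuning parents) into the ancestry of $v$ and uses Property~2 of Theorem~\ref{thm:rankgraph_char} to force all of $v$'s ancestors into the kernel, so that the remaining budget of $m$ kernel nodes on $u$'s side must reproduce the reference clique; this lets the reduction work for a \emph{given} separation $\epsilon$, which is adjusted for by the choice of $q$. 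You instead strip $v$ down to a bare self-loop (making Property~2 vacuous, as you correctly verify) and encode the threshold directly into the separation, setting $\epsilon := P_H(v)/M - 1$. This is simpler, but it means $\epsilon$ is an \emph{output} of your reduction, different for each CLIQUE instance and clustering near $\alpha/(2-2\alpha)$; so your argument establishes NP-hardness only under the reading that $\epsilon$ is part of the RANKGRAPH input (which the paper's problem statement does permit, since $\epsilon$ is declared part of an instance and merely omitted from the notation), and your closing remark that hardness ``already holds for a constant separation'' overstates this --- to pin $\epsilon$ at a prescribed constant you would need to re-tune the graph (e.g.\ add calibrated parents to $v$), which is precisely the role of the paper's $q$ parents. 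Two small points of hygiene: you should state explicitly that $u$ carries a self-loop (your formula $P_H(u,u)=\frac{1-\alpha}{n(1-\alpha/2)}$ presupposes out-degree $2$, and without the self-loop the construction collapses since then $M>P_H(v)$ is possible only with the clique bonus), and you should define $M$ by its closed-form expression $\frac{1-\alpha}{n(1-\alpha/2)}\cdot\frac{d+\alpha}{d-\alpha(\kappa-1)}$ rather than as ``the kernel score of some $\kappa$-clique kernel'', since in the NO case no such kernel exists yet the reduction must still compute $\epsilon$.
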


\begin{theorem}
\label{thm:rankgraph_nph_dominating}
 RANKGRAPH is NP-hard (by reduction from DOMINATING SET).
\end{theorem}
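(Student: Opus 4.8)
The plan is to reduce from DOMINATING SET, building the instance so that Property~1 of Theorem~\ref{thm:rankgraph_char} holds for \emph{every} admissible kernel and all the hardness concentrates on Property~2 (in the spirit of the remark preceding the statement). Fix the damping factor $\alpha\in(0,1)$ and the separation $\epsilon>0$, and let $M=\lceil\alpha/\epsilon\rceil$. From an instance $(\Gamma,j)$ of DOMINATING SET on vertex set $V_\Gamma$ I would build the graph $G$ as follows. There is a core made of the two target nodes $u$ and $v$ and a \emph{bias} node $b$; $b$ has an arc $b\to v$ and $M-1$ further children, each a fresh node carrying only a self-loop. For every $x\in V_\Gamma$ I add a \emph{selector} node $c_x$ whose only out-arc is $c_x\to u$; for every $y\in V_\Gamma$ I add a \emph{test} node $f_y$ with the arc $f_y\to b$ and an arc $f_y\to c_x$ for each $x$ in the closed neighbourhood $N_\Gamma[y]$. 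Finally I put a self-loop on $u$ and on $v$ so that $G$ has no dangling node; no self-loop creates a directed path between $u$ and $v$, so $P_H(u,v)=P_H(v,u)=0$ in every visit subgraph $H$. The reduction outputs $(G,u,v,r)$ with $r=3+j$.

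First I would pin down the shape of any ranking subgraph $H$ on $G$. The nodes $u,v,b$ are forced into $\kernel(H)$: $u,v$ because they are the targets, and $b$ because $b\to v$ keeps $b$ in $H$ whenever $v\in\kernel(H)$, while placing $b$ in $\frontier(H)$ violates Property~2 at $w=b$ (its only kernel-child is $v$, and $P_H(v,u)=0<P_H(v,v)$). So those three nodes occupy three kernel slots, leaving $j$ free. Direct computation gives $P_H(v)=\tfrac1n+\delta$ with $\delta:=P_H(b,v)=\alpha/(nM)$, gives $P_H(c_x,u)=\gamma:=\alpha/n$ for any selector $c_x$ placed in the kernel, and gives $P_H(u)\ge\tfrac1n$; since $M=\lceil\alpha/\epsilon\rceil$ we have $\gamma\ge\delta$ and $\delta\le\epsilon/n$, so $P_H(u)\ge\tfrac1n\ge P_H(v)/(1+\epsilon)$ and Property~1 holds for every admissible kernel. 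Each test node $f_y$ lies in $H$ (via $f_y\to b$), and if $f_y\in\frontier(H)$ its kernel-children are $b$ together with those $c_x$, $x\in N_\Gamma[y]$, that were placed in the kernel; Property~2 at $f_y$ then reads $\lvert\{x\in N_\Gamma[y]:c_x\in\kernel(H)\}\rvert\cdot\gamma\ge\delta$, which, as $\gamma\ge\delta$, holds iff at least one such selector is present. All other frontier nodes satisfy Property~2 vacuously (a frontier $c_x$ has only $u$ as a kernel-child, and $P_H(u,v)=0$; a frontier padding child of $b$ has no kernel-child). Hence $H$ is a ranking subgraph of kernel size $3+j$ iff the $j$ free slots can be used to reach a configuration in which every $f_y$ is either in the kernel or has some selector of $N_\Gamma[y]$ in the kernel.

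It remains to match this ``covering'' condition with domination. If $D\subseteq V_\Gamma$ is a dominating set of size $\le j$, put $\{c_x:x\in D\}$ (padded by arbitrary further selectors) in the free part of the kernel: every $y$ has some $x\in N_\Gamma[y]\cap D$, so the resulting $H$ is a ranking subgraph of kernel size $3+j$. Conversely, if $H$ is a ranking subgraph of kernel size $3+j$ and $T$ is its set of $j$ free kernel nodes, then for each $y$ either $f_y\in T$ or some $c_x$ with $x\in N_\Gamma[y]$ is in $T$; replacing each $f_y\in T$ by $c_y$ --- still ``covering'' $f_y$ since $y\in N_\Gamma[y]$ --- yields a set of selectors of size $\le|T|=j$ that covers all test nodes, hence a dominating set of $\Gamma$ of size $\le j$. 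Thus RANKGRAPH$(G,u,v,3+j)$ is a YES instance exactly when $(\Gamma,j)$ is, and since $G$ has $2|V_\Gamma|+M+2$ nodes and $O(|V_\Gamma|+|E_\Gamma|)$ arcs the reduction is polynomial, so RANKGRAPH is NP-hard.

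The step I expect to be the main obstacle is the converse direction, i.e.\ excluding ``cheats'' by an adversarially chosen kernel; it rests on three facts that each need care: that $b$ is genuinely forced, so the bias $-\delta$ is always present at every test node; that opting a test node into the kernel is neither cheaper nor more powerful than covering it with a selector (the exchange step above --- this is exactly what ties the condition to the \emph{domination number} rather than to an easier quantity); and that no kernel can make Property~1 the binding constraint, which is what fixes how large $M$, equivalently $1/\delta$, must be in terms of $\alpha$ and $\epsilon$. Alongside these, one still checks the routine facts that the self-loops added to remove dangling nodes do not perturb the contributions $P_H(\cdot,u)$ and $P_H(\cdot,v)$ and that the non-test frontier nodes never violate Property~2.
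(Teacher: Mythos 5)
Your reduction is correct and essentially identical to the paper's: your selector nodes $c_x$, test nodes $f_y$, and bias node $b$ correspond exactly to the paper's parents $u_1,\ldots,u_n$ of $u$, nodes $u'_1,\ldots,u'_n$, and parent $w$ of $v$, with Property~2 of Theorem~\ref{thm:rankgraph_char} at the frontier test nodes encoding domination and the same kernel budget $3+m$. The one difference is your dilution of $b$ via $M-1$ padding children so that Property~1 holds even with no selector in the kernel --- the paper instead just observes that the kernel built from a nonempty dominating set contains a parent of $u$, giving $P_H(u)\ge P_H(v)$ directly; note only that your claim that Property~1 holds for \emph{every} admissible kernel is slightly overstated (kernel test nodes add to $P_H(v)$ through $b$), but this is harmless since your forward direction uses a selector-only kernel and your converse relies only on the necessity of Property~2.
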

\noindent Furthermore, since DOMINATING SET is NP-hard to approximate within $o(\log(|G|))$~\cite{Raz&1997}, and since
the reduction of Theorem~\ref{thm:rankgraph_nph_dominating} preserves (within small constant factors) the size of both
the input graph and the dominating set, we have:
\begin{lemma}
\label{lem:inapproximability}
Approximating RANKGRAPH within $o(\log(|G|))$ is NP-hard.
\end{lemma}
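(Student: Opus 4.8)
The plan is to obtain Lemma~\ref{lem:inapproximability} as an immediate corollary of the reduction used to prove Theorem~\ref{thm:rankgraph_nph_dominating}, by checking that this reduction is \emph{approximation-preserving} in the appropriate (gap-preserving) sense. First I would recall the known hardness of approximation for DOMINATING SET: by Raz and Safra~\cite{Raz&1997}, there is a constant $c>0$ such that it is NP-hard to approximate the minimum dominating set of a graph $\Gamma$ within a factor $c\log|\Gamma|$; equivalently, for infinitely many sizes there is no polynomial-time algorithm distinguishing instances with minimum dominating set $\leq s$ from those with minimum dominating set $> c\,s\log|\Gamma|$, unless P${}={}$NP. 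The goal is to transport this gap through the reduction $\Gamma \mapsto (G,u,v)$ of Theorem~\ref{thm:rankgraph_nph_dominating}, where a dominating set of $\Gamma$ of size $s$ corresponds to a ranking subgraph for $u\succ v$ on $G$ of kernel size $s + O(1)$ (or $\Theta(s)$), and conversely.

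The key steps, in order, are as follows. (1) Re-examine the construction in the proof of Theorem~\ref{thm:rankgraph_nph_dominating} and isolate the exact affine relationship between the size of a dominating set $D$ of $\Gamma$ and the kernel size of the associated ranking subgraph $H_D$ on $G$; by the remark following the theorem statement, this relationship is of the form $r = s + c_1$ (a fixed additive offset coming from the gadget nodes forced into the kernel by Property~2 of Theorem~\ref{thm:rankgraph_char}), and symmetrically every ranking subgraph of kernel size $r$ yields a dominating set of size $r - c_1$. (2) Likewise verify that $|G| = \Theta(|\Gamma|)$, so that $\log|G| = \Theta(\log|\Gamma|)$ up to additive constants. (3) Combine: an $o(\log|G|)$-approximation algorithm for RANKGRAPH (returning a ranking subgraph whose kernel size is within an $o(\log|G|)$ factor of the optimum $r^*$) would, after subtracting the offset $c_1$, return a dominating set of size within $(1+o(1))\cdot o(\log|G|) = o(\log|\Gamma|)$ of the minimum dominating set of $\Gamma$ — contradicting~\cite{Raz&1997} unless P${}={}$NP. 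Care must be taken because the approximation factor is multiplicative while the reduction introduces an additive term; I would handle this by noting that the hard DOMINATING SET instances can be taken to have minimum dominating set size $\omega(1)$ (indeed growing with $|\Gamma|$), so the additive offset $c_1$ is absorbed into the $(1+o(1))$ factor.

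The main obstacle I anticipate is precisely this interplay between the additive offset and the multiplicative approximation ratio: one must argue that the reduction does not merely preserve exact optima but preserves them \emph{tightly enough} that a sublogarithmic multiplicative approximation on the RANKGRAPH side pulls back to a sublogarithmic multiplicative approximation on the DOMINATING SET side. This requires (a) that the additive gadget size $c_1$ be an absolute constant independent of $\Gamma$ (which it is, by inspection of the gadget), and (b) that the DOMINATING SET instances witnessing the Raz--Safra hardness have optimum value $\Omega(\log|\Gamma|)$ — or at least $\omega(1)$ — so that $c_1/s^* \to 0$; this holds for the standard reductions producing the gap, where the optimum dominating set is polynomially large. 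Once these two points are pinned down, the inapproximability of RANKGRAPH within any $o(\log|G|)$ factor follows by a routine contrapositive argument, and no further computation is needed.
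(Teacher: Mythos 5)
Your proposal is correct and follows essentially the same route as the paper: observe that the reduction of Theorem~\ref{thm:rankgraph_nph_dominating} maps a DOMINATING SET instance $(G_0,m)$ to a RANKGRAPH instance with $|G|=2|G_0|+3$ and optimum kernel size $m+3$, and then invoke the Raz--Safra $o(\log)$-inapproximability of DOMINATING SET. The one place you go beyond the paper's (rather terse) argument --- worrying about the additive offset versus the multiplicative ratio --- is a legitimate detail the paper glosses over, though it can be dispatched even more simply than via $s^*=\omega(1)$: since the optimum dominating set has size $s^*\geq 1$, the offset $3$ contributes at most a constant factor to the pulled-back approximation ratio, and a constant times $o(\log|G_0|)$ is still $o(\log|G_0|)$.
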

\noindent Note that verifying if a given visit subgraph is a ranking subgraph for $u \succ v$ in $G$ takes polynomial
time -- one need only check that its kernel set contains $u$ and $v$, and that it satisfies both properties of
Theorem~\ref{thm:rankgraph_char} -- and hence RANKGRAPH is NP-complete, even accepting an approximation factor 
$o(\log(|G|))$.

\section{Monte Carlo Algorithms}
\label{sec:montecarlo}
This section investigates the interplay between graph exploration models and Monte Carlo local ranking algorithms. It
turns out that the choice of the exploration model is crucial: depending on the type of queries
available, an algorithm can perform as poorly as possible or instead very efficiently.
We start by showing that, differently from deterministic and Las Vegas algorithms, Monte Carlo local ranking algorithms
cannot be characterized using the theoretical machinery of ranking subgraphs (Subsection~\ref{sub:MC_avoid_rankgraph});
and nonetheless, under any ``strictly local'' graph exploration model, they still undergo essentially the
same lower bounds (Subsection~\ref{sub:MC_bounds_nojump}).
On the other hand, with just a ``minimum amount'' of non-locality, we are able to build an efficient, random walk-based
Monte Carlo local ranking algorithm (Subsection~\ref{sub:MC_samplerank}), which we pair with almost matching lower
bounds that hold under all graph exploration models (Subsection~\ref{sub:MC_bounds_jump}).

\subsection{Eluding Ranking Subgraphs}
\label{sub:MC_avoid_rankgraph}
Since ranking subgraphs consist of portions of the graph an algorithm must visit to guarantee always a correct
result, one may think that Monte Carlo algorithms can in part avoid them. Indeed, it turns out that Monte 
Carlo algorithms can elude ranking subgraphs in a strong sense: we show that, even under an exploration model allowing
only local queries, there exists one that always solves correctly a particular instance and yet never visits any of its
ranking subgraphs, while guaranteeing an arbitrarily small error probability on every other instance. Formally: 
\begin{theorem}
\label{thm:avoid_ranking_graphs}
For any $\eta > 0$ there exist a \linksNoSp-based Monte Carlo local ranking algorithm with confidence $1-\eta$ and an
instance $(G,\{u,v\})$ such that the algorithm always solves correctly $(G,\{u,v\})$ without visiting any of its ranking
subgraphs.
\end{theorem}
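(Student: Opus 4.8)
The plan is to build the instance and the algorithm hand in hand, exploiting the one freedom a Monte Carlo algorithm has that a Las Vegas one lacks: it may stop and commit to an answer \emph{before} its visit subgraph pins down the ranking, provided it does so rarely enough to keep the error probability below $\eta$ on every instance. So I take $(G,\{u,v\})$ to be a \emph{tie} instance, $P(u)=P(v)$, on which any output is correct, and then I need only an algorithm that (i) on every instance errs with probability at most $\eta$, and (ii) on this particular $G$ never — in any execution — lands on a ranking subgraph. For (ii) I use Theorem~\ref{thm:rankgraph_char}: for a visit subgraph $H$ with $u,v\in\kernel(H)$ and $P_H(u),P_H(v)$ within a $(1+\epsilon)$ factor, $H$ fails to be a ranking subgraph as soon as $\frontier(H)$ contains both a strictly ``$u$-favouring'' node (one $w$ with $\sum_{(w,z)\in H}P_H(z,u)>\sum_{(w,z)\in H}P_H(z,v)$) and a strictly ``$v$-favouring'' one, since then Property~2 is violated in both directions. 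I therefore make $G$ a tie in which $u$ and $v$ each have $\Theta(n)$ ancestors, with disjoint ancestries, neither target an ancestor of the other, each ancestor contributing only $\Theta(1/n)$, and the ancestries ``broad and shallow'' so that deleting any one ancestor changes $P_H$ negligibly; the algorithm's job on $G$ is merely to keep one ancestor of $u$ and one of $v$ out of its kernel.

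Concretely, the algorithm performs a randomised \emph{backward} exploration (using \links queries but following only parent arcs) from $u$ and from $v$. For each target, if it learns that the target has fewer than $K:=\lceil 2/\eta\rceil$ ancestors it back-crawls all of them; otherwise it back-crawls all but one, chosen uniformly at random, which it leaves in the frontier. Independently, it skips each ancestor it meets with probability $\eta/4$, never exploring behind a skipped node. It then outputs ``$u\succ v$'' if $P_H(u)/P_H(v)>1+\epsilon$, ``$v\succ u$'' if the ratio is below $1/(1+\epsilon)$, and a uniformly random order otherwise. On $G$ the ``leave one out'' rule fires for both targets, so in \emph{every} execution the frontier contains a strictly $u$-favouring ancestor of $u$ and a strictly $v$-favouring ancestor of $v$; by Theorem~\ref{thm:rankgraph_char} no visit subgraph ever reached is a ranking subgraph, while $P_H(u)$ and $P_H(v)$ stay within a $(1\pm o(1))$ factor of the common score, so the output — a tie, broken arbitrarily — is always correct.

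For confidence, observe that since $P_H(\cdot)\le P(\cdot)$ and a full backward crawl gives $P_H=P$ exactly, the algorithm errs only if the mass it fails to capture on one target exceeds a $\Theta(\epsilon)$ fraction of that target's true score (so that $P_H$ suggests the wrong strict order, or a tie where the truth is strict). An uncaptured ancestor at depth $D$ contributes at most $\alpha^{D}$, and a low-score target has only a few, shallow ancestors, so the uncaptured mass is dominated by shallow ancestors; a shallow ancestor is missed only if it is the uniformly-random ``leave one out'' node (probability $\le 1/K\le\eta/2$) or is hit by an independent skip (probability $\eta/4$), and missing two or more costs $O(\eta^{2})$, with the wrong-tie-break case contributing a further factor $1/2$. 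Summing, the error probability on every instance is at most $\eta$.

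The step I expect to be the main obstacle is making (i) and (ii) hold \emph{simultaneously and with probability exactly one}: confidence wants the exploration to capture essentially all of each target's ancestry (so $P_H$ is faithful), while the no-ranking-subgraph requirement forbids ever reaching the state where $G$'s ancestries are fully captured. The two are reconciled only because $G$ is engineered to be ``robustly tied'' — its $\Theta(n)$-sized, broad-shallow ancestries make a single omitted, randomly chosen ancestor both harmless to $P_H$ and enough to keep Property~2 of Theorem~\ref{thm:rankgraph_char} violated in both directions — and because the skip rule must be shown never to \emph{systematically} omit a heavy ancestor on any instance; the delicate case is an instance where one shallow ancestor carries most of a target's score, which (for a high-score target) forces that ancestor to have many ancestors of its own and hence to be rarely chosen, while the low-score case is absorbed by the full back-crawl triggered at the threshold $K$. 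Pinning down this balance, and the bookkeeping that turns ``leave one out'' into a genuinely uniform, exploitation-proof choice, is the crux.
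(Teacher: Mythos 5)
Your core idea --- deliberately leave one randomly chosen parent of a target in the frontier so that Property~2 of Theorem~\ref{thm:rankgraph_char} fails, and arrange matters so that this omission causes an error only with probability at most $\eta$ on other instances --- is exactly the mechanism the paper uses. But your execution has a genuine gap, and it is the one you yourself flag as ``the crux'': the confidence guarantee on \emph{arbitrary} instances is never established, and as sketched it fails. The rule ``skip each ancestor with probability $\eta/4$, never exploring behind a skipped node'' can discard an arbitrarily large fraction of a target's score (a single depth-one parent may be the root of a sub-ancestry carrying essentially all of $P(u)$), and when a target has several such heavy parents --- say $\Theta(1/\epsilon)$ of them, each carrying an $\epsilon$-fraction of the score --- the probability of dropping enough mass to flip a borderline comparison is $\Theta(\eta/\epsilon)$, not $O(\eta)$. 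Worse, on an instance whose true ratio is $1+\epsilon+o(1)$ the strict order is required, yet your estimate lands in the ``tie'' region with constant probability and your random tie-break then errs with probability $1/2$. There is also a specification problem: the algorithm branches on the number of \emph{ancestors} of a target, which it cannot learn without exploring them (for parents a single \links query suffices; for ancestors the rule is circular). Finally, choosing a tie instance $P(u)=P(v)$ makes the clause ``always solves correctly'' vacuous, which weakens the demonstration even if it formally satisfies the statement.

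The paper avoids all of this by hard-coding the instance into the algorithm. Its $G$ consists of $u$ with $m+1$ orphan parents and $v$ with $m$ orphan parents, where $m=\lceil 1/\eta\rceil$, so $u\succ v$ strictly. The algorithm queries $u$, $v$, all parents of $u$, and all parents of $v$ except one chosen uniformly at random; if everything it sees matches this template it returns $u\succ v$ (correct on $G$, and the unqueried parent of $v$ is a frontier node violating Property~2, so no ranking subgraph is ever visited), and on any mismatch it falls back to a full, always-correct exploration of the ancestor graphs. The only error event is an instance that spoofs $G$ everywhere except behind the single unqueried parent, which happens with probability at most $1/m\le\eta$ --- a one-line confidence analysis with no estimator, no skip probabilities, and no borderline cases to control. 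Since the theorem only asks for \emph{one} instance, you are free to tailor the algorithm to it; restructuring your argument this way dissolves the obstacle you identified rather than confronting it.
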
\noindent
At this point, one might wonder if Monte Carlo algorithms can outperform deterministic and Las Vegas algorithms.
As long as the exploration model allows only local queries, however, essentially the same bounds hold, and any Monte
Carlo algorithm with non-trivial confidence may have to visit almost the entire input graph, as the next subsection
shows.

\subsection{The Limited Power of Local Information}
\label{sub:MC_bounds_nojump}
We prove that, under any graph exploration model that does not allow global queries (thus forcing the exploration to
expand from the target nodes via local queries), in the worst case one must visit essentially the entire input graph
even to perform just slightly better than returning a ranking at random. Formally:
\begin{theorem}
 \label{thm:cost_monte_carlo_onlycrawl}
Choose an integer $k \geq 2$, a damping factor $\alpha \in (0,1)$, a separation
$\epsilon > 0$, and a score function $\Theta(1/x) \leq p(x) \leq \Theta(1)$.
Under any graph exploration model allowing only local queries, for any
Monte Carlo local ranking algorithm MC with confidence $\frac{1}{k!} +
\delta$ there exists a family of graphs $\{G_m\}$ where $G_m$ has size
$n=\Theta(m)$ and contains $k$ nodes $v_1,\ldots,v_k$ such that:
\begin{enumerate}
 \item $P(v_1), \ldots, P(v_k) \in \Theta(p(n))$
 \item $v_1, \ldots, v_k$ are $\approx$ $\epsilon$-separated, i.e.\
$P(v_{i+1})/P(v_i) \approx 1+\epsilon$
 \item MC performs $\Omega(\delta n)$ queries on the instance
$(G,\{v_1,\ldots,v_k\})$.
\end{enumerate}
\end{theorem}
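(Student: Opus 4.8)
\noindent The plan is a distributional (Yao-style) lower bound against an arbitrary Monte Carlo ranking algorithm $MC$ of confidence $\tfrac{1}{k!}+\delta$. I would first fix, for each $m$, a single \emph{base} graph $B_m$ on $n=\Theta(m)$ nodes that contains $v_1,\dots,v_k$ and is assembled out of $N$ node-disjoint, pairwise indistinguishable \emph{cells}, each of ``depth'' $L$ with $NL=\Theta(n)$, wired so that: (i) every $v_i$ receives the same kernel score $\Theta(p(n))$ through the cells; (ii) planting a small \emph{needle} gadget in the core of one cell can push $P(v_1),\dots,P(v_k)$ into any prescribed strict total order, with consecutive scores in ratio $\Theta(1+\epsilon)$, while keeping every score in $\Theta(p(n))$ and the total size in $\Theta(n)$; and (iii) under any model obeying Property~\ref{pty:localmodel}, the only way to tell the needle-bearing cell from a needle-free one is to query one's way down to its core, which costs $\Omega(L)$ queries all lying inside that cell. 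Item (iii) is the crux: a local query exposes only the arcs incident to an already-discovered node, so the exploration must crawl outward from $\{v_1,\dots,v_k\}$ through the cells, and no mixture of \linksNoSp-, \crawlNoSp- or edge-probing queries can jump into a cell's core.

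The adversary then draws a cell $c$ uniformly among the $N$ cells and an order $\pi$ uniformly among the $k!$ strict orders of the $v_i$, independently, and plants in $c$ the needle realizing $\pi$; this yields an instance $(B_m^{c,\pi},\{v_1,\dots,v_k\})$ that, by (i)--(ii), satisfies properties 1 and 2 of the statement (in particular the $v_i$ are $\epsilon$-separated, so each $\pi$ has a unique correct ranking). Fix $MC$'s coins. By cell indistinguishability and obliviousness (Subsection~\ref{sub:awareness}), before $MC$ first reaches the core of $c$ its transcript equals its transcript on $B_m$, a function of the coins only; in particular the event ``$MC$ reaches the core of $c$ within its run'' depends on the coins and on $c$ but not on $\pi$. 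Hence, conditioned on $MC$ never reaching the core of $c$, its output is a function of the coins and so is independent of $\pi$, which makes it equal to the correct ranking with probability exactly $\tfrac1{k!}$ (the map $\pi\mapsto$ correct ranking is a bijection and $\pi$ is uniform).

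To bound $q:=\Pr[MC\text{ reaches the core of }c]$, observe that on a run of $MC$ against $B_m$ the cores reached, say $c_1,c_2,\dots$, occur at query-times $t_1<t_2<\cdots$ with $t_j\ge jL$, since the $\Omega(L)$ queries needed for each lie in distinct cells; hence at most $T/L$ cores are reached within $T$ queries. As $c$ is uniform over the $N$ cells and the reached set is independent of $\pi$, a short counting argument gives $q=O\!\big(E[T]/(LN)\big)=O\!\big(E[T]/n\big)$. Combining with the confidence hypothesis averaged over the adversary's distribution, $\tfrac1{k!}+\delta\le\Pr[MC\text{ correct}]\le(1-q)\tfrac1{k!}+q\le\tfrac1{k!}+q$, so $q\ge\delta$ and $E[T]=\Omega(\delta n)$; a standard step (a query budget plus Markov's inequality) pins this down to a single instance of $\{B_m^{c,\pi}\}$, which serves as the family $\{G_m\}$ and gives item 3.

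The step I expect to be the real obstacle is the construction behind (i)--(iii) over the whole admissible range of $p$, $\epsilon$, $k$, $\alpha$. The awkward regime is $p(n)=o(1)$: there the part of the graph that contributes to the $v_i$ is small, yet the argument needs a haystack of $\Theta(n)$ indistinguishable cells. I would handle this by making each cell a long thin structure that still delivers only $\Theta(1/n)$ of score -- so $N=\Theta(p(n)n)$ cells of depth $L=\Theta(1/p(n))$ give $NL=\Theta(n)$ nodes and the right total score -- and by exploiting the $\Theta$-slack in the scores and in $\epsilon$ to size the needle's internal funnels geometrically, so that one planted cell produces the $\Theta(1+\epsilon)$ skew without pushing the graph past $\Theta(n)$ nodes. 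One must also check, uniformly over every local-query model and using only Property~\ref{pty:localmodel}, that a single \links query cannot expose an entire cell at once and thereby short-circuit (iii): the cost of a cell has to lie in \emph{traversing} it to its core, not in first \emph{discovering} it.
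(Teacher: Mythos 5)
Your proposal is correct and, at its core, is the same argument as the paper's: hide the ranking-determining information behind a gadget that local queries can only reach by crawling outward from the targets, bound the success probability by $\gamma\cdot 1+(1-\gamma)\tfrac{1}{k!}$, and conclude $\gamma\ge\delta$ with $\gamma=O(q/n)$. The differences are in packaging, and on both counts the comparison is instructive. First, the paper's construction sidesteps the small-$p$ difficulty you flag: each target $v_i$ sits at the end of a single damping chain of length $\log_\alpha p(m)$ whose head has $m=\Theta(n)$ indistinguishable parents, exactly one of which (the ``strong ancestor'') carries the extra in-degree encoding $P(v_i)$; the chain absorbs all of the score scaling, so the haystack is always $\Theta(n)$ shallow candidates, each checkable with $O(1)$ \links queries, rather than your $N=\Theta(p(n)\,n)$ cells of depth $L=\Theta(1/p(n))$ (which also works, but degenerates to $N=\Theta(1)$ at $p=\Theta(1/n)$, where your detection bound must be read as $T/(LN)$ and not as a $1/N$ guessing bound). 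Second, your Yao-style averaging over a uniformly planted needle, together with the transcript-coupling argument showing the output is independent of $\pi$ until the needle is touched, is a more rigorous rendering of the step the paper handles informally (``for a non-uniform sampling there exists a worst-case instance\ldots therefore we assume that MC performs a uniform sampling''); if you write it up, define ``reaching the core'' as ``issuing a query whose answer depends on the needle'', since a \links query on the node just below the core reveals the core as a frontier node without revealing its extra parents.
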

\noindent By Theorem~\ref{thm:cost_monte_carlo_onlycrawl}, as long as we are confined to local queries, the simultaneous
presence of randomization and of a positive and possibly large probability of error yields almost no advantage over
deterministic algorithms (that cannot perform random choices and can never return an incorrect ranking).
As the next subsection shows, this is the borderline: adding just a ``minimum amount'' of non-locality allows to build
much more efficient algorithms, even if the local queries available are far less informative than e.g.\ \linksNoSp.

\subsection{SampleRank}
\label{sub:MC_samplerank}
Under a model allowing only \jump and \crawl queries, we build an efficient, random walk-based Monte Carlo algorithm
for the local computation of PageRank ranking, with guarantees on both the confidence and the cost.
The algorithm stems from existing random walk-based techniques for the local approximation of scores, and estimates
the ranking behaving optimally 
(see the bounds of~\cite{Borgs&2012}) for every possible value of the nodes' scores and separation
(unlike e.g.\ that of~\cite{Borgs&2012}, which is designed to work only for $\epsilon > 2$).

Our algorithm is based on the following node sampling routine, that emulates a random walk using \jump and \crawl
queries. As already proven in~\cite{Avrachenkov&2007}, the routine returns node $v$ with probability exactly $P(v)$;
Lemma~\ref{lem:sample} gives a proof which is more insightful in this context.
\begin{algorithm}
\label{alg:photo}\caption{SampleNode$()$}
\begin{algorithmic}
\State current\_\,node $\leftarrow$ \jump
\Loop
\State with probability $(1-\alpha)$ \textbf{return} current\_\,node
\State current\_\,node $\leftarrow$ \textit{crawl}(current\_\,node)
\NoEndIf{current\_\,node $= \emptyset$}
  current\_\,node $\leftarrow$ \jump
\EndLoop
\end{algorithmic}
\end{algorithm}\vspace{-1.5ex}
\begin{lemma} 
\label{lem:sample}
$SampleNode()$ returns node $v$ with probability equal to $P(v)$.
\end{lemma}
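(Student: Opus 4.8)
The plan is to decompose an execution of $SampleNode()$ into two independent pieces of randomness — a geometric counter for the number of \crawl queries performed, and the random walk trajectory that this counter indexes — and then match the resulting expression for the output distribution, term by term, with the influence series defining $P(\cdot)$ in Equation~(\ref{eqn:pagerank_series_influences}).

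First I would fix the trajectory. Let $Y_0, Y_1, Y_2, \ldots$ be the (in principle infinite) sequence of values that \emph{current\_node} would assume if the loop never returned: $Y_0$ is the node returned by the initial \jumpNoSp, and $Y_{i+1}$ is $crawl(Y_i)$, with a fresh \jump substituted whenever $Y_i$ is dangling (so that a dangling node behaves as one pointing uniformly to all of $G$, as in Section~\ref{sec:pagerank}; if $G$ has no dangling nodes this substitution never happens). Then $Y_0$ is uniform on $G$, and conditioned on $Y_i = z$ the node $Y_{i+j}$ is distributed exactly as the position after $j$ steps of the random surfer. Hence, for every $\tau \geq 0$,
\begin{align*}
\Pr[Y_\tau = v] = \sum_{z \in G}\Pr[Y_0 = z]\cdot\Pr[Y_\tau = v \mid Y_0 = z] = \frac{1}{n}\sum_{z \in G}\operatorname{inf}_\tau(z,v).
\end{align*}

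Next I would isolate the stopping time. The crucial point is that the coin tossed at the $i$-th loop iteration (``return with probability $1-\alpha$'') uses fresh randomness, independent of $Y_0,\ldots,Y_i$; therefore the number $T$ of \crawl queries executed before returning is independent of the whole trajectory $(Y_i)_i$ and geometric, $\Pr[T = \tau] = (1-\alpha)\alpha^\tau$, so in particular $\Pr[T < \infty] = 1$ and the routine terminates almost surely with output $Y_T$. Since all terms below are nonnegative we may freely interchange the order of summation, and we get
\begin{align*}
\Pr[\,Y_T = v\,] = \sum_{\tau=0}^{\infty}\Pr[T = \tau]\,\Pr[Y_\tau = v] = \sum_{\tau=0}^{\infty}(1-\alpha)\alpha^{\tau}\,\frac{1}{n}\sum_{z \in G}\operatorname{inf}_\tau(z,v) = P(v),
\end{align*}
which is precisely Equation~(\ref{eqn:pagerank_series_influences}).

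The computation itself is immediate; the two points requiring a little care are (i) the independence of $T$ from the trajectory — this is what lets the single sum over loop iterations factor into the product of a stopping probability and a hitting probability — and (ii) the dangling-node bookkeeping, where one must check that replacing a failed \crawl by a \jump reproduces exactly the surfer transitions underlying $\operatorname{inf}_\tau$ (equivalently, that $SampleNode()$ samples from the rescaled PageRank of the patched graph of Section~\ref{sec:pagerank}, which coincides with $P(\cdot)$ when no dangling nodes are present and is a scalar multiple of it otherwise). Neither is a genuine obstacle; the content of the lemma — and the reason this proof is ``insightful in this context'' — is simply that the terms of $SampleNode$'s output distribution line up one-to-one with the terms $\alpha^\tau \operatorname{inf}_\tau(z,v)$ of the PageRank series, with $\tau$ the number of \crawl steps and $z$ the node reached by the initial \jumpNoSp.
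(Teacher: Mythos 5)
Your proof is correct and follows essentially the same route as the paper's: both factor the output distribution through the geometric number $\tau$ of \crawl steps (probability $(1-\alpha)\alpha^\tau$, independent of the trajectory), observe that conditioned on $\tau$ the output is distributed as the random surfer after $\tau$ steps from a uniform start, i.e.\ $\frac{1}{n}\sum_{z}\operatorname{inf}_\tau(z,v)$, and sum the series to recover Equation~(\ref{eqn:pagerank_series_influences}). Your explicit remarks on the independence of the stopping time and on the dangling-node substitution only make precise what the paper states more tersely.
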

\noindent By a simple Chernoff bound, $m$ calls to SampleNode perform approximately $m/(1-\alpha)$ queries (this could
be reduced to $m$, but making the algorithm more complicated to analyse~\cite{Avrachenkov&2007}):
\begin{lemma} 
\label{lem:sampletime}
One call to SampleNode performs in expectation less than $\frac{2}{1-\alpha}$ queries. The probability that $m$ calls to
SampleNode perform more than $\frac{2(1+\Delta)m}{1-\alpha}$ queries is less than
$e^{-\frac{m}{2}\cdot\frac{\Delta^2}{1+\Delta}}$.
\end{lemma}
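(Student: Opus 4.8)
\noindent The plan is to reduce the cost of a single call of SampleNode to one geometric random variable and then to invoke a standard multiplicative Chernoff bound. First I would fix one invocation and let $T$ be the number of iterations of its outer loop; since each iteration terminates the loop independently with probability $1-\alpha$, the variable $T$ is geometric on $\{1,2,\dots\}$ with $\Pr[T=t]=\alpha^{t-1}(1-\alpha)$ and $E[T]=\tfrac{1}{1-\alpha}$. Counting queries, the initial \jumpNoSp{} costs one query, the last (loop-terminating) iteration costs none, and each of the remaining $T-1$ iterations costs at most two queries — one \crawlNoSp{} and at most one further \jumpNoSp{}, the latter triggered only when the crawl lands on a childless node. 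Hence the number of queries $Q$ of the call satisfies $Q\le 1+2(T-1)=2T-1$, so $E[Q]\le 2E[T]-1=\tfrac{2}{1-\alpha}-1<\tfrac{2}{1-\alpha}$, which is the first assertion.

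For the tail bound I would run $m$ independent calls with query counts $Q_1,\dots,Q_m$ and loop lengths $T_1,\dots,T_m$, and set $S=\sum_{i=1}^m T_i$, so that the total cost is $\sum_{i=1}^m Q_i\le 2S$. The crucial observation is that $S$ — a sum of $m$ i.i.d.\ geometrics with success probability $1-\alpha$ — equals the number of independent $(1-\alpha)$-coin flips needed to collect $m$ heads; equivalently, for every integer $N$ the events $\{S>N\}$ and $\{X_N<m\}$ coincide, where $X_N\sim\mathrm{Binomial}(N,1-\alpha)$, because the $m$-th head occurs after flip $N$ precisely when fewer than $m$ heads appear among the first $N$ flips. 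Taking $N:=\tfrac{(1+\Delta)m}{1-\alpha}$ (assumed an integer for simplicity — otherwise replace it by $\lfloor N\rfloor$ and absorb a $\Theta(1)$ slack into the constants), one gets $\bigl\{\sum_i Q_i>2N\bigr\}\subseteq\{2S>2N\}=\{S>N\}=\{X_N<m\}$, with $\mu:=E[X_N]=N(1-\alpha)=(1+\Delta)m$.

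Finally, since $X_N$ is integer-valued, $\Pr[X_N<m]\le\Pr[X_N\le(1-\delta)\mu]$ with $\delta:=1-\tfrac{m}{\mu}=\tfrac{\Delta}{1+\Delta}\in(0,1)$, and the multiplicative Chernoff lower-tail inequality $\Pr[X_N\le(1-\delta)\mu]\le e^{-\delta^2\mu/2}$ yields exponent $\tfrac12\cdot\tfrac{\Delta^2}{(1+\Delta)^2}\cdot(1+\Delta)m=\tfrac{m}{2}\cdot\tfrac{\Delta^2}{1+\Delta}$, which is exactly the claimed bound (and the inequality can be made strict since $\Pr[X_N=m]>0$). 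The whole argument is short: the only two points that require attention are the worst-case accounting of two queries per non-terminating loop iteration — i.e.\ not forgetting the re-\jumpNoSp{} after crawling into a childless node — and the passage from the negative-binomial $S$ to the binomial $X_N$, together with the choice $\delta=\Delta/(1+\Delta)$ that makes the Chernoff exponent collapse exactly to $\Delta^2/(2(1+\Delta))$; I expect the latter bookkeeping, including the rounding of $N$, to be the fiddliest part.
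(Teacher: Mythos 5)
Your proof is correct and follows essentially the same route as the paper's: bound the per-call cost by $2T$ with $T$ geometric of mean $1/(1-\alpha)$, convert the event ``$m$ calls exceed $N=\frac{(1+\Delta)m}{1-\alpha}$ loop passes'' into ``fewer than $m$ successes among $N$ independent $(1-\alpha)$-trials,'' and apply the lower-tail Chernoff bound with $\delta=\Delta/(1+\Delta)$. The paper leaves the negative-binomial-to-binomial step and the rounding of $N$ implicit, but the accounting and the final exponent are identical.
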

\noindent By Lemma~\ref{lem:sample}, we can estimate the PageRank score of a node $v$ as the fraction $\hat P_m(v)$ of
$m$ calls to SampleNode that returned $v$, and we can in turn estimate the relative ranking of a set of nodes by
repeatedly calling SampleNode until their estimated scores ``stabilize''.
This intuition is formalized in our algorithm SampleRank, which repeatedly samples the nodes of $G$ until, according to
confidence intervals based on $\hat P_m$, each pair of target nodes either appears to be $\epsilon$-separated or appears
to yield a tie.
\begin{algorithm}
\caption{SampleRank$(G,v_1,\ldots,v_k,1-\eta)$}
\label{alg:photorank2}
\begin{algorithmic}
\Repeat
\State perform the $m$-th call to SampleNode
\State $\hat P_m(v_i) \leftarrow$ fraction of SampleNode calls that returned $v_i$
\State $C_m(v_i) = ( P^L_m(v_i), P^U_m(v_i) ) \leftarrow$ $(1-\eta/k)$-confidence interval for $P(v_i)$ based on $\hat
P_m(v_i)$
\Until{$\forall i \neq j :$ $\big(C_m(v_i) \cap C_m(v_j) = \emptyset \big) \vee \big(P^U_m(v_i)/ P^L_m(v_j) \leq 1 +
\epsilon\big) \vee \big(P^U_m(v_j)/ P^L_m(v_i) \leq 1 + \epsilon\big)$}
\State \Return the ranking of $v_1,\ldots,v_k$ induced by $\hat P_m$
\end{algorithmic}
\end{algorithm}
\newline
We prove that:
\begin{theorem}
\label{thm:samplerank2}
Consider an instance $(G,\{v_1,\ldots,v_k\})$, and let $p = \min_{i=1,\ldots,k}\{P(v_i)\}$.
A call to SampleRank\,$(G,v_1,\ldots,v_k, 1 - \eta)$ has probability at least $1-\eta$ of returning a correct ranking of
$v_1,\ldots,v_k$ while performing  $O\big(\frac{1}{1 - \alpha}\log(\frac{4k}{\eta})\frac{(1+\epsilon)^2}{\epsilon^2
p}\big)$
queries.
\end{theorem}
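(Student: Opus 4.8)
\medskip
\noindent\emph{Proof plan.} The plan is to carry the whole argument on a single ``good event'' $E$ on which every confidence interval the algorithm ever computes contains the corresponding true score, to show $\Pr[E]\ge 1-\tfrac{\eta}{2}$, and to prove that $E$ forces both a correct output and --- together with a simple Chernoff bound on the cost of SampleNode --- the claimed query bound. By Lemma~\ref{lem:sample} the counter $m\,\hat P_m(v_i)$ is $\mathrm{Binomial}(m,P(v_i))$, so I take each $C_m(v_i)=\big(P^L_m(v_i),P^U_m(v_i)\big)$ to be a \emph{confidence sequence} of level $1-\tfrac{\eta}{2k}$ --- i.e.\ valid simultaneously for all $m$ --- obtained from a multiplicative Chernoff bound, of the form $\hat P_m(v_i)\pm w_m(v_i)$ with $w_m(v_i)=\Theta\!\big(\sqrt{\hat P_m(v_i)\log(k/\eta)/m}+\log(k/\eta)/m\big)$, and chosen so that $w_m(\cdot)$ is nondecreasing in the point estimate and $\hat P\mapsto \hat P-w_m(\cdot)$ is nondecreasing. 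Let $E=\{\,P(v_i)\in C_m(v_i)\ \forall i\ \forall m\,\}$; a union bound over the $k$ nodes gives $\Pr[E]\ge 1-\tfrac{\eta}{2}$ (this slack, together with the two-sided Chernoff bound, is what the ``$4$'' in $\log(4k/\eta)$ reflects). For correctness on $E$, suppose SampleRank halts at some step $m$ and consider a pair $v_i,v_j$ with $\hat P_m(v_i)\ge\hat P_m(v_j)$; it suffices to show $P(v_j)\le(1+\epsilon)P(v_i)$, which makes placing $v_i$ above $v_j$ a valid choice. The termination test supplies one of three disjuncts for this pair: if $C_m(v_i)\cap C_m(v_j)=\emptyset$, then since each interval contains both its estimate and its true score and the estimates are ordered, $C_m(v_i)$ is the upper interval and $P(v_i)>P(v_j)$; if $P^U_m(v_j)/P^L_m(v_i)\le 1+\epsilon$, then $P(v_j)<P^U_m(v_j)\le(1+\epsilon)P^L_m(v_i)\le(1+\epsilon)P(v_i)$ directly; and if $P^U_m(v_i)/P^L_m(v_j)\le 1+\epsilon$ --- the delicate case, since this inequality alone does not witness a tie --- I combine $\hat P_m(v_i)\ge\hat P_m(v_j)$ with the monotone shape of the intervals to get $P^U_m(v_j)\le P^U_m(v_i)$ and $P^L_m(v_j)\le P^L_m(v_i)$, and then $P(v_j)<P^U_m(v_j)\le P^U_m(v_i)\le(1+\epsilon)P^L_m(v_j)\le(1+\epsilon)P^L_m(v_i)<(1+\epsilon)P(v_i)$.

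For termination and cost on $E$, set $m^{*}=\Theta\!\big(\log(\tfrac{4k}{\eta})\tfrac{(1+\epsilon)^2}{\epsilon^2 p}\big)$ with a sufficiently small hidden constant; since $P(v_i)\ge p$, for every $m\ge m^{*}$ and every $i$ the width bound yields $w_m(v_i)\le\tfrac18\cdot\tfrac{\epsilon}{1+\epsilon}P(v_i)$, so on $E$ the estimate $\hat P_m(v_i)$ lies within a $(1\pm\tfrac18\tfrac{\epsilon}{1+\epsilon})$ factor of $P(v_i)$. A short computation then shows that at step $m^{*}$ every pair already satisfies one of the three disjuncts --- disjointness when the true ratio is $\ge 1+\epsilon$, and a ``$P^U/P^L\le 1+\epsilon$'' disjunct when it is $<1+\epsilon$ --- hence on $E$ SampleRank performs at most $m^{*}$ calls to SampleNode. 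By Lemma~\ref{lem:sampletime}, for a constant $\Delta$ those $m^{*}$ calls issue at most $\tfrac{2(1+\Delta)m^{*}}{1-\alpha}$ queries except with probability $e^{-\Theta(m^{*})}\le\tfrac{\eta}{2}$ (using $m^{*}\ge\log(4k/\eta)$). Since the number of queries is monotone in the number of SampleNode calls and, on $E$, SampleRank makes at most $m^{*}$ of them, intersecting $E$ with this last event shows that with probability at least $1-\eta$ SampleRank returns a correct ranking while performing $O\!\big(\tfrac{1}{1-\alpha}\log(\tfrac{4k}{\eta})\tfrac{(1+\epsilon)^2}{\epsilon^2 p}\big)$ queries.

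The width-versus-$m^{*}$ computation is routine --- it just inverts the Chernoff bound --- and is where the factor $\tfrac{(1+\epsilon)^2}{\epsilon^2 p}$ is born. I expect two points to need care. First, because SampleRank may stop early, the confidence intervals must be valid uniformly over the \emph{random} number of SampleNode calls, not merely at a fixed step; this is why $C_m(v_i)$ must be a confidence sequence (equivalently, one first fixes the deterministic bound $m^{*}$ on the number of calls and then union-bounds over $m\le m^{*}$), and it is the place where one must be careful not to pick up extra logarithmic factors. Second, the ``$P^U_m(v_i)/P^L_m(v_j)\le 1+\epsilon$'' case of the correctness argument genuinely needs the monotonicity of the interval endpoints, as used above; without it the stopping test would not by itself guarantee a valid ranking. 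Everything else follows from Lemmas~\ref{lem:sample} and~\ref{lem:sampletime} and standard Chernoff bounds.
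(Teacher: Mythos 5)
Your proof follows essentially the same route as the paper's: Chernoff bounds on the empirical frequencies $\hat P_m(v_i)$, a deterministic bound $m^*=O\big(\log(\tfrac{4k}{\eta})\tfrac{(1+\epsilon)^2}{\epsilon^2 p}\big)$ on the number of SampleNode calls after which the stopping test must fire on the good event, and Lemma~\ref{lem:sampletime} to convert calls into queries. Your case analysis of why the stopping test certifies a valid ranking --- in particular the monotonicity argument needed for the disjunct $P^U_m(v_i)/P^L_m(v_j)\le 1+\epsilon$ --- is more explicit than the paper's, which simply asserts correctness from interval coverage.

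There is, however, one step that fails as stated: the claim that at $m^*$ ``disjointness [holds] when the true ratio is $\ge 1+\epsilon$, and a `$P^U/P^L\le 1+\epsilon$' disjunct when it is $<1+\epsilon$.'' For a pair with true ratio $r=1+\epsilon-\delta$ and $\delta$ arbitrarily small, the tie disjunct requires $P^U_m(v_j)/P^L_m(v_i)\le 1+\epsilon$; but on your event $E$ this quantity strictly exceeds $r$, so no fixed interval width can guarantee it, and the ``short computation'' cannot go through with the case split placed at $1+\epsilon$. The fix is exactly the paper's device: split at an intermediate ratio such as $1+\epsilon/2$ --- for $r\ge 1+\epsilon/2$ a sufficiently small relative width forces disjointness, and for $r\le 1+\epsilon/2$ it forces the tie disjunct --- so that every pair satisfies some disjunct at $m^*$; this is why the paper's proof analyses the boundary instance $P(v)=(1+\epsilon/2)P(u)$. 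A second, smaller issue: you reduce anytime validity of the intervals to ``union-bound over $m\le m^*$,'' but a per-step union bound at level $\eta/(2k)$ over $m^*$ steps costs a factor $m^*$, not a constant; keeping the intervals valid uniformly in $m$ without inflating the $\log(\tfrac{4k}{\eta})$ term requires either a genuine confidence-sequence construction or a maximal/peeling argument. You correctly flag this subtlety --- which the paper's own proof passes over silently --- but do not resolve it.
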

\noindent
For fixed $\alpha$, $\epsilon$, $k$ and $\eta$, SampleRank performs essentially $O(1/p)$ queries, which ranges from
$O(1)$ to $O(n)$ as a function of $p$, and for $p \in \omega(1/n)$ is asymptotically less than the $\Omega(n)$ bound
holding under strictly local exploration models (see Subsection~\ref{sub:MC_bounds_nojump}).
And note that SampleRank relies only on \jump and \crawl queries, which are in a sense the sparest possible global
query (returning a random node of the graph), and the sparest possible local query (returning a random child of
the queried node).
This provides the first theoretical justification as to why all the existing local algorithms for PageRank score or rank
estimation either assume to have global access to the graph~\cite{Bahmani&2010,Bahmani&2012,Borgs&2012}, or fail to give
adequate worst-case cost guarantees~\cite{Andersen&2008,Bar-Yossef&2008b,Bressan&2011,CGS04}.

The next subsection will show that, even under any other exploration model, no algorithm can perform better than
SampleRank, at least as long as the scores do not fall under $\Theta(n^{-2/3})$.

\subsection{Lower Bounds for All Graph Exploration Models}
\label{sub:MC_bounds_jump}
This subsection presents lower bounds for Monte Carlo local algorithms that hold under every graph exploration model
allowing any combination of local and global queries.
These are the first lower bounds at this level of generality that we are aware of.
Formally, we prove that:
\begin{theorem}
 \label{thm:cost_monte_carlo}
Choose an integer $k \geq 2$, a damping factor $\alpha \in (0,1)$, a separation
$\epsilon > 0$, and a score function $\Theta(1/x) \leq p(x) \leq \Theta(1)$.
Under any graph exploration model, for any Monte Carlo local ranking
algorithm MC with confidence $\frac{1}{k!} + \delta$ there exists a family of
graphs $\{G_m\}$ where $G_m$ has size $n=\Theta(m)$ and contains $k$ nodes
$v_1,\ldots,v_k$ such that:
 \begin{enumerate}
  \item $P(v_1), \ldots, P(v_k) \in \Theta(p(n))$
  \item $v_1, \ldots, v_k$ are $\approx$ $\epsilon$-separated, i.e.\ $P(v_{i+1})/P(v_i) \approx 1+\epsilon$
  \item MC performs $\Omega( \delta \operatorname{min}\{1/p(n), n^{2/3}\})$ queries on the instance
$(G,\{v_1,\ldots,v_k\})$.
 \end{enumerate}
\end{theorem}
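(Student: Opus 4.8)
The plan is to combine Yao's principle with an indistinguishability argument in the spirit of the proof of Theorem~\ref{thm:cost_monte_carlo_onlycrawl}, but with a hard instance redesigned so that \emph{global} (jump-type) queries are almost useless too. Fix a distribution over instances: a common ``base'' graph $B_m$ on $\Theta(m)$ nodes already containing $v_1,\dots,v_k$ with \emph{equal} scores $\Theta(p(n))$, plus a random ``tiebreaker'' depending on a hidden pair $(\pi,\rho)$, where $\pi$ is a uniformly random permutation of $\{v_1,\dots,v_k\}$ and $\rho$ an independent random ``placement'' of the tiebreaker inside the base; the tiebreaker leaves the scores in $\Theta(p(n))$ but makes them $\approx\epsilon$-separated precisely in the order $\pi$. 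Call a run of a deterministic algorithm \emph{clean} if it never performs a local query on any node that ``witnesses'' the tiebreaker. By design the base is $\pi$-independent and homogeneous, so while a run is clean it is a fixed ($\pi$- and $\rho$-independent) function of $B_m$; hence, conditioned on being clean, the algorithm's output is a fixed permutation and $\Pr[\text{output}=\pi\mid\text{clean}]=1/k!$. Therefore $\frac1{k!}+\delta\le\Pr[\text{correct}]\le\frac1{k!}\Pr[\text{clean}]+\Pr[\neg\text{clean}]\le\frac1{k!}+\Pr[\neg\text{clean}]$, i.e.\ $\Pr[\neg\text{clean}]\ge\delta$, and Yao transfers this to randomized Monte Carlo algorithms with confidence $\frac1{k!}+\delta$.

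It remains to upper bound $\Pr[\neg\text{clean}]$ for a $q$-query algorithm by $O(q\cdot\max\{p(n),n^{-2/3}\})$. Let $W$ be the set of witness nodes (the tiebreaker together with its ``portal'' attachment nodes), with $|W|=\Theta(\max\{p(n)n,\,n^{1/3}\})$. By Property~\ref{pty:localmodel} each query is either global, returning a fresh node that lies in $W$ with probability $O(|W|/n)$; or local on an already-visited node $u$: if $u$ is generic its output is, by the ``one-way concealment'' of the construction (portals are statistically identical to generic nodes until queried, and generic nodes' arcs toward $W$ are diluted among $\Theta(1)$ generic arcs and over $\rho$), again generic and cannot expose $W$, while if $u\in W$ the run was already non-clean. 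Keeping the base of bounded degree, each query discovers $O(1)$ nodes, so a martingale/union bound over the $q$ queries gives $\Pr[\neg\text{clean}]\le O(q|W|/n)=O(q\max\{p(n),n^{-2/3}\})$. Combined with $\Pr[\neg\text{clean}]\ge\delta$ this yields $q=\Omega(\delta n/|W|)=\Omega(\delta\min\{1/p(n),n^{2/3}\})$, and reading $\max\{p(n),n^{-2/3}\}$ as $p(n)$ or as $n^{-2/3}$ splits off the two regimes (in the first, matching the $O(1/p)$ cost of SampleRank from Theorem~\ref{thm:samplerank2} up to the fixed parameters).

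The substance is the construction of $B_m$ plus tiebreaker realising conditions 1 and 2 while simultaneously defeating the three natural attacks whose costs must all be $\Omega(\min\{1/p(n),n^{2/3}\})$: (A) jumping or crawling directly onto a witness node, which costs $\Theta(n/|W|)$; (B) back-exploring the in-neighbourhood of the targets to reach the part of the gadget that actually encodes $\pi$, whose cost is governed by how deep that part sits; and (C) sampling via SampleNode and landing on (a node near) the gadget, whose cost is governed by the gadget's total PageRank mass. For (A) one wants $|W|$ small; for (B) one wants the $\pi$-encoding core \emph{deep}, but the damping factor $\alpha$ attenuates its influence by $\alpha^{\mathrm{depth}}$, so its depth is capped at $\Theta(\log(1/p(n)))$ and the relevant neighbourhood of the targets must be made to grow to $\Omega(n^{2/3})$ nodes by giving the base enough in-branching; for (C) one wants the gadget's mass small, yet creating a $\Theta(\epsilon)$ relative separation requires injecting $\Theta(\epsilon p(n))$ absolute mass, which both pins the gadget mass from below and forces $\Theta(p(n)n)$ source-like witness nodes (giving the $p(n)n$ term of $|W|$ and the $\Omega(1/p(n))$ bound when $p(n)\ge n^{-2/3}$). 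Balancing (A) against (C) — helped by a small internal amplifier in the core (recall a self-loop multiplies a score by $1/(1-\alpha)$) so that the mass can be concentrated — is what forces $|W|=\Theta(n^{1/3})$, and hence the saturation at $n^{2/3}$, in the regime $p(n)<n^{-2/3}$.

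I expect step~(B), and more precisely the joint tuning of the gadget's depth, width, and mass so that attacks (A), (B) and (C) are \emph{all} driven to $\Omega(\min\{1/p(n),n^{2/3}\})$ without any single parameter choice cheapening one of them, to be the main obstacle; the resulting $n^{1/3}$ footprint is the optimum of this trade-off, and this is also the point at which the bound is presumably no longer tight (a fully sublinear algorithm for very small $p(n)$ may still exist). Once the explicit family is fixed, verifying conditions~1 and~2 — that the targets' scores land in $\Theta(p(n))$ and their consecutive ratios in $\approx 1+\epsilon$ — is a routine but tedious PageRank computation on the gadget, entirely analogous to the one underlying Theorem~\ref{thm:cost_monte_carlo_onlycrawl}.
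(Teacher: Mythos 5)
Your high-level strategy is the right one and matches the paper's in substance: hide the information that determines the order $\pi$ in a small ``tiebreaker'' (the paper's \emph{strong ancestor} and its parents), show that a run which never touches it outputs the correct ranking with probability exactly $1/k!$, bound the per-query probability of touching it by $O(\max\{p(n),n^{-2/3}\})$, and conclude $q=\Omega(\delta\min\{1/p(n),n^{2/3}\})$. Your identification of the three attacks and of the trade-off forcing a witness set of size $\Theta(\max\{p(n)n,\,n^{1/3}\})$ against a target in-neighbourhood of size $\Omega(n^{2/3})$ is also quantitatively consistent with the instance the paper actually builds. But the proposal stops exactly where the proof begins: you never exhibit the family $\{G_m\}$, and you yourself flag the ``joint tuning of the gadget's depth, width, and mass'' as the main unresolved obstacle. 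Without the explicit construction there is nothing on which to verify conditions 1 and 2, and the key inequality $\Pr[\neg\text{clean}]\le O(q\max\{p(n),n^{-2/3}\})$ rests on an unproved ``one-way concealment'' property of an unspecified graph. That is a genuine gap, not a routine verification. For the record, the obstacle dissolves with a flat two-level gadget and no depth tuning at all in the regime $p\ge\Theta(n^{-2/3})$: $G_m^i$ consists of $v_i$ with a self-loop and $m\sqrt{p(m)}/\alpha$ parents; one of them (the strong ancestor) has $v_i$ as sole child and $m\,p(m)((1+\epsilon)^{i-1}-1)/\alpha^2$ orphan parents, while each of the others is padded with $1/\sqrt{p(m)}-1$ private sinks. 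Jumps hit the witness set with probability $O(p(n))$ per query, and back-exploration must sample among $n\sqrt{p(n)}\ge n^{2/3}$ indistinguishable parents; the case $p\in o(n^{-2/3})$ is then obtained from the $p=n^{-2/3}$ instance by inserting a damping chain of length $\log_\alpha(p(m)m^{2/3})$ above $v_i$, which rescales the scores without changing the exploration problem.

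Two secondary points. First, your unified bound $\Pr[\neg\text{clean}]\le O(q|W|/n)$ accounts correctly only for attack (A); for attack (B) the per-query detection probability is the reciprocal of the size of the candidate set being sampled (the targets' in-neighbourhood), not $|W|/n$, so the two attacks must be bounded separately, as the paper does. Second, both for non-uniform global queries (Property~\ref{pty:localmodel} allows the output to depend arbitrarily on the node set) and for adaptive non-uniform sampling of the candidates, you need the symmetrization step the paper makes explicit: the adversary places the strong ancestor and its parents at the identities the algorithm is least likely to probe, which reduces the general case to uniform sampling. Your appeal to ``a martingale/union bound'' glosses over this.
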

\noindent
By Theorem~\ref{thm:cost_monte_carlo}, our algorithm SampleRank, although based on the spare \jump and \crawl queries,
is asymptotically optimal also among the class of algorithms employing far more informative ones, save possibly for
PageRank scores in $o(n^{-2/3})$.
And, thus, the jump and crawl model appears to be the most powerful possible for local PageRank computations, except
perhaps for extremely small scores: in this latter case it remains to decide whether the discrepancy is an artifact of
our analysis, or if more informative global and local queries allow to obtain better (and fully sublinear?)\ worst-case
cost guarantees.
\section{Conclusions}
\label{sec:conclusions}
In the colourful universe of results and techniques for local PageRank computations, we an\-alyse the interplay
between algorithms and graph exploration models, identifying the boundaries be\-tween reasonable and prohibitive
exploration costs.
We show that one must combine a non-local ex\-plora\-tion and a positive probability of error to avoid exploring
essentially the whole graph in the worst case; and that if both these conditions are satisfied, a spare model
providing minimum amounts of local and global access is as powerful as possible, except perhaps for extremely small
PageRank scores, and allows to obtain the solution by visiting a limited portion of the graph.

For the practitioner, our results imply that it is impossible to ensure correctness when com\-put\-ing the PageRank
score or  ranking in graphs that (as the web or many social networks) are par\-tially inaccessible or evolve rapidly
in time, even if allowing for large approximation factors;
and they suggest great care when designing access models (e.g.\ in graph manipulation libraries or social networks
APIs) that should support local PageRank computations -- either you include a global exploration primitive, or you 
cut out anyone who attempts efficient local computations.

For the theoretician, our work delineates the general framework of local PageRank computation, leaving open the only
problem to determine if even the smallest PageRank scores can be computed locally for a sublinear cost, or if this
apparent possibility is an artifact of our lower bounds.
If the first case holds, we have shown that the only possibility to achieve a sublinear cost resides in using local and
global exploration primitives more informative than the ones made available by the simple jump and crawl model.

\clearpage
\appendix

\section{Appendix}

\subsection{A note for Subsection~\ref{sub:explor_models}}
\label{apx:1}
If a query can return an output that depends on the global arc structure of the graph, then it may serve as an oracle to
solve global problems for a minimum cost -- e.g.,\ it could code the PageRank score of an input node into a properly
crafted sequence of output nodes. To exclude this possibility, all the ``reasonable'' graph exploration models allow,
often implicitly, only queries that do not disclose information about the global arc structure of the graph, if not for
a small portion near the node given in input to the query itself.

\subsection{Proof of Theorem~\ref{thm:must_visit_ranking_set}}
\begin{proof}
At any point during the exploration, $ALG$ does not possess any information
besides that provided by the queries' output, and therefore besides that
contained in $H$ (which contains all the information collected along the whole
execution). Therefore, $ALG$ does not discriminate between different input
graphs whose exploration can yield exactly $H$ -- i.e.\ between different graphs
that are compatible with $H$. Hence $ALG$ behaves identically on any graph
compatible with $H$, querying the same nodes and returning the same result (in
the case of Las Vegas randomized algorithms, this must happen in at least some
execution). But if $H$ is not a ranking subgraph for $v_1,\ldots,v_k$, then
there are two graphs $G'$ and $G''$, both compatible with $H$, where
$v_1,\ldots,v_k$ have different rankings; and therefore the ranking returned by
$ALG$ must be wrong for at least one of the two instances
$(G',\{v_1,\ldots,v_k\})$ and $(G'',\{v_1,\ldots,v_k\})$. Thus $ALG$ does not
return always a correct ranking, contradicting the hypotheses.
\end{proof}

\subsection{Proof of Theorem~\ref{thm:all_must_visit_ranking_graph}}
\begin{proof}
Suppose by contradiction that there exists an algorithm $ALG$ that obeys a model
satisfying Property~\ref{pty:localmodel} and that solves (at least in some
execution) the instance $(G,\{v_1,\ldots,v_k\})$ performing $r < r^*$
queries. Consider the set formed by any node that either has been returned by a
global query issued by $ALG$, or either is that on which the output of a local
query issued by $ALG$ depends (see Property~\ref{pty:localmodel}); this set has
then size at most $r$. Consider the ranking subgraph compatible with $G$ that
has  this set has kernel set. Clearly, $H$ is a supergraph of the graph
witnessed by $ALG$, and its kernel nodes are known to have no neighbours outside
the visit subgraph itself -- thus it contains at least all the
information possessed by $ALG$. All the input graphs compatible with $H$ must
then be ``compatible'' also with the visit performed by $ALG$, i.e.,\ the
queries' output witnessed by $ALG$ can derive from any of them with positive
probability. $H$ is not a ranking subgraph for $v_1,\ldots,v_k$ on $G$, since
its kernel size is smaller than $r^*$; therefore it is compatible with two
graphs where the target nodes have different rankings. These two graphs
must then be compatible also with the visit performed by $ALG$. By the same
argument of the proof of Theorem~\ref{thm:must_visit_ranking_set}, $ALG$ does
not always return a correct solution, leading to a contradiction.
\end{proof}

\subsection{Proof of Theorem~\ref{thm:rankgraph_char}}
\begin{proof}
We first show that the two conditions are both necessary.

For 1), we show that if $P_H(u) < P_H(v)/(1+\epsilon)$ then there is a graph $G$
compatible with $H$ where $P(v) > (1+\epsilon)P(u)$, hence $v \succ u$ and $H$
is not a ranking subgraph for $u \succ v$. $G$ is built from $H$ by adding $n$
nodes that are pointed by every frontier node of $H$ (if the frontier is empty,
then these nodes are orphans).
The scores $P(u)$ and $P(v)$ of $u$ and $v$ in $G$ are the sum of two
components: one accounting for paths that contain only kernel nodes, the other
accounting for paths that include at least one frontier node. The first
component amounts to $P_H(\cdot)|H|/|G|$, since paths containing only kernel
nodes are the same in $H$ and $G$, but in $G$ the probability of restarting at
any given node is $1/|G|$ instead of $1/|H|$. The second component is the sum of
contributions of at most $|H|$ different nodes, since this is the number of
ancestors of $u$ or $v$ in $H$ and, by construction, also in $G$; and these
contributions are provided via paths each containing some frontier node $x$,
which has a fraction at most $|H|/(|H| + n) = |H|/|G|$ of children leading to
the target node (the others leading to the additional $n$ sinks). Taking into
account the normalization factor $(1 - \alpha)/|G|$, the second component is
bounded by $|H| \cdot (1 - \alpha)/|G| \cdot |H|/|G| < |H|^2/|G|^2$. Therefore
we have:
\begin{align}
 \frac{P(u)}{P(v)} &\leq \frac{P_H(u)|H|/|G| + |H|^2/|G|^2}{P_H(v)|H|/|G|} =
\frac{P_H(u)}{P_H(v)} + \frac{|H|}{P_H(v)|G|}
\end{align}
but by hypothesis $P_H(u)/P_H(v) < 1+\epsilon$, and thus choosing a sufficiently
large $n$ (and therefore $|G|$) will make $P(v) > (1+\epsilon)P(u)$, implying
that $v \succ v$ in $G$ and that $H$ is not a ranking subgraph for $u \succ v$.

For 2), if there exists $w \in \frontier{(H)}$ such that
$\sum_{(w,z)\in H} P_H(z,u) < \sum_{(w,z)\in H} P_H(z,v)$, then again there is a
graph $G$ compatible with $H$ where $P(v) > (1+\epsilon)P(u)$, hence $v \succ u$
and $H$ is not a ranking subgraph for $u \succ v$. $G$ is obtained from $H$ by
adding $n$ parents to $w$. In $G$, we have $P(w,v) > P(w,u)$ since by hypothesis
the children of $w$ provide, on average, strictly more contribution $v$ than to
$u$. Therefore, each of the additional parents of $w$ also provides strictly
more contribution to $v$ than to $u$. Then for $n$ sufficiently large we have
again $P(v) > (1+\epsilon)P(u)$, implying that $v \succ u$ in $G$ and that $H$
is not a ranking subgraph for $u \succ v$.

To prove that the two conditions are sufficient, consider any graph $G$
compatible with $H$ and consider again the two components of the contributions
provided via paths that contain respectively only kernel nodes and at least one
frontier node. As discussed above, the paths related to the first component
still exist in $G$ and provide exactly the same contributions, except for a
scaling factor $|H|/|G|$; therefore, by condition 1) their overall contribution
towards $u$ is at least $1/(1+\epsilon)$ times that towards $v$. The paths
related to the second component must contain a frontier node $w$, and condition
2) ensures that, of the contributions flowing through $w$, the component
directed to $u$ is greater or equal than (and thus at least $1/(1+\epsilon)$
times) that directed to $v$. Summing the two components proves that in $G$ we
still have $P(u) \geq P(v)/(1+\epsilon)$, and therefore either we have a tie,
which we can break as $u \succ v$, or we have $P(u) > (1+\epsilon)P(v)$, which
already gives $u \succ v$. In any case, the ranking is $u \succ v$ and $H$ is a
ranking subgraph.
\end{proof}

\subsection{Union of visit subgraphs}
\label{apx:ranking_subgraphs_lemmas}
\begin{definition}
Let $H_1 = (V_1,A_1)$ and $H_2 = (V_2,A_2)$ be two visit subgraphs.
Their {\em union} $H_1 \cup H_2$ is $H = (V,A)$ such that $V = V_1 \cup V_2$, $A = A_1 \cup A_2$, and
$\operatorname{kernel}(H) = \operatorname{kernel}(H_1) \cup \operatorname{kernel}(H_2)$.
\end{definition}
\begin{lemma}
 \label{lemma:visitgraph_union}
Let $H = H_1 \cup H_2$ be the union of two visit subgraphs $H_1$ and $H_2$. Then
\begin{enumerate}
 \item $H$ is a visit subgraph
 \item if $H_1$ and $H_2$ are compatible with a subgraph $G$, then $H$ is also compatible with $G$
 \item if $H_1$ is a ranking subgraph for two nodes $u, v$, then $H$ is also a ranking subgraph for $u, v$
\end{enumerate}
\end{lemma}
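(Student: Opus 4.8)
The plan is to handle the three claims separately, in order, since each uses the previous one. For claim~1 I would simply unfold the definitions. The frontier of $H$ consists exactly of the nodes of $V_1\cup V_2$ lying in neither $\kernel(H_1)$ nor $\kernel(H_2)$; hence each such node lies in $\frontier(H_1)$ or in $\frontier(H_2)$, and so — since $H_1$ and $H_2$ are visit subgraphs — it has an arc to or from some kernel node of $H_1$ or of $H_2$, which by definition of the union is a kernel node of $H$. This gives property~1 of a visit subgraph. For property~2, any arc of $H$ lies in $A_1$ or in $A_2$, say in $A_1$, so both its endpoints lie in $V_1$; if both were in $\frontier(H)$ they would lie outside $\kernel(H)\supseteq\kernel(H_1)$ and hence in $\frontier(H_1)$, contradicting that $H_1$ has no arc between two frontier nodes.

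For claim~2 I would verify the three clauses of compatibility. Clause~1 is immediate: $V=V_1\cup V_2\subseteq V(G)$ and $A=A_1\cup A_2\subseteq A(G)$, so $H$ is a subgraph of $G$. For clauses~2 and~3 it suffices to show that $H$ and $G$ carry exactly the same arcs (and incident nodes) around each $a\in\kernel(H)$: one inclusion is again $H\subseteq G$, and for the other, $a$ lies in $\kernel(H_1)$ or $\kernel(H_2)$, say the former, whence compatibility of $H_1$ with $G$ forces every $G$-arc incident to $a$ to appear already in $H_1\subseteq H$. This is precisely clauses~2 and~3 of compatibility for $H$.

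For claim~3 I would argue by inheritance of compatibility, working — as is always the case when this lemma is applied — with $H_1$ and $H_2$ visit subgraphs on a common graph $G$, so that by claim~2 the union $H$ is on $G$ too. First, $u,v\in\kernel(H_1)\subseteq\kernel(H)$. Now let $G'$ be any graph compatible with $H$; I claim $G'$ is compatible with $H_1$. Indeed $H_1\subseteq H\subseteq G'$; since $H_1$ and $H$ are both compatible with $G$ they both reproduce $G$'s arcs incident to their kernels, so (as $\kernel(H_1)\subseteq\kernel(H)$) $H_1$ and $H$ agree on the arcs incident to $\kernel(H_1)$; and since $G'$ is compatible with $H$, $G'$ agrees with $H$ on the arcs incident to $\kernel(H)\supseteq\kernel(H_1)$, hence $G'$ agrees with $H_1$ there — which is clauses~2 and~3 of compatibility. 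Thus every graph compatible with $H$ is compatible with $H_1$; since $H_1$ is a ranking subgraph for $u,v$, no two of these graphs rank $u,v$ differently, so $H$ is a ranking subgraph for $u,v$.

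The parts I expect to be routine are claims~1 and~2; the one delicate point is the compatibility-inheritance step in claim~3. The subtlety is that $H$ has a strictly larger kernel and strictly more arcs than $H_1$, so in principle it could "pin down" more of the neighbourhood of $\kernel(H_1)$ than $H_1$ does — which only shrinks the set of compatible graphs, and is harmless — but a priori it could also disagree with $H_1$ there. The hypothesis that $H_1$ and $H_2$ come from a common $G$ is exactly what rules this out, since then both $H_1$ and $H$ must faithfully reproduce $G$'s neighbourhood of $\kernel(H_1)$ and therefore agree on it. Once that is established, the ranking-subgraph property transfers by the definition alone.
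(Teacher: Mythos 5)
Your proof is correct, and for claims 1 and 2 it follows essentially the same route as the paper's own argument (frontier nodes of $H$ inherit an arc to the union of the kernels; arcs of $H$ live in $A_1$ or $A_2$ so cannot join two frontier nodes; compatibility of the union is checked clause by clause using that each kernel node of $H$ is a kernel node of some $H_i$). The genuine divergence is in claim 3, and there your version is the more careful one. The paper's proof simply asserts that any two graphs $G'$, $G''$ compatible with $H$ are ``easy to see'' to be compatible with $H_1$; for an arbitrary union of visit subgraphs this is false, and claim 3 as literally stated can fail. For instance, if $v\in\kernel(H_1)$ and $H_2$ exhibits a parent of $v$ that $H_1$ does not list, then every graph compatible with $H$ contains that extra parent and so violates clause 3 of compatibility with $H_1$; one can arrange such an $H_2$ (say, several new frontier parents of $v$ with unknown out-neighbourhoods) so that $H$ is not a ranking subgraph even though $H_1$ is. You correctly identified that the compatibility-inheritance step needs $H_1$ and $H$ to agree on the neighbourhood of $\kernel(H_1)$, and that this is exactly what the hypothesis that $H_1$ and $H_2$ are visit subgraphs on a common graph $G$ delivers via claim 2. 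That hypothesis does hold wherever the lemma is invoked (in Lemma~\ref{lemma:rankgraph_union} all the $R_i$ are compatible with the same $G$), so your added assumption is harmless in context, but it is not in the statement of claim 3 and the paper's proof silently relies on it. In short: same decomposition and same ideas, but you supply the missing hypothesis and the missing argument for the one step the paper waves through.
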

\begin{proof}
\begin{enumerate}
 \item If $x \in \frontier(H)$, then $x \in \frontier(H_1) \cup \frontier(H_2)$
and $x \notin \kernel(H_1) \cup \kernel(H_2)$.
From the first follows that $s$ has at least one arc from/to $\kernel(H_1)$ or
$\kernel(H_2)$ and therefore from/to $\kernel(H)$. From the second follows that
$x$ has no arcs from/to $\frontier(H_1)$ or $\frontier(H_2)$, and since the
$\frontier(H) \subseteq \frontier(H_1) \cup \frontier(H_2)$, $x$ has no arcs
from/to $\frontier(H)$.
 \item The union $H$ of two subgraphs $H_1$ and $H_2$ of $G$ is clearly a
subgraph of $G$. Consider $x, y \in \kernel(H) = \kernel(H_1) \cup
\kernel(H_2)$; by definition, if and only if $(x,y)$ is an arc of $G$ then it is
also an arc in $H_1$ and $H_2$ (although $x$ or $y$ may be frontier nodes in
one of the two) and therefore in $H$. When $x \in \kernel(H)$ and
$y \in \frontier(H)$, the same reasoning proves that the nodes pointing from/to
the kernel of $H$ in $G$, and the relative arcs, are the same in $H$ and in $G$.
 \item Suppose instead that there exist two graphs $G'$ and $G''$
compatible with $H$ where $u$ and $v$ have different rankings. It is easy to see
that $G'$ and $G''$ are compatible also with $H_1$, which therefore would not be
a ranking subgraph for $u$ and $v$, contradicting the hypothesis.
\end{enumerate}
\end{proof}
\begin{lemma}
 \label{lemma:rankgraph_union}
Let $v_1 \succ \ldots \succ v_k$ be nodes of a graph $G$.
Then the set of all ranking subgraphs for $v_1, \ldots, v_k$ compatible with $G$ coincides with the set of all possible
unions $\bigcup_{i=1}^{k-1}{R_i}$, where $R_i$ is a ranking subgraph for $v_i,v_{i+1}$ compatible with $G$.
\end{lemma}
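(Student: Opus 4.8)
The statement is a set equality, so the plan is to prove the two inclusions separately; both lean on Lemma~\ref{lemma:visitgraph_union}, which I would first extend from two to finitely many visit subgraphs by an easy induction on the number of summands (using that $\cup$ is associative and that $\operatorname{kernel}(\cdot)$ distributes over it, so that $\bigcup_{i=1}^{m+1}R_i=\bigl(\bigcup_{i=1}^{m}R_i\bigr)\cup R_{m+1}$).

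For the inclusion from left to right, let $H$ be a ranking subgraph for $v_1,\ldots,v_k$ compatible with $G$. Since $H$ determines the relative ranking of the whole tuple, and $G$ (which is compatible with $H$) ranks the nodes $v_1\succ\ldots\succ v_k$, $H$ in particular determines the relative ranking of each consecutive pair $v_i,v_{i+1}$ as $v_i\succ v_{i+1}$; hence $H$ is itself a ranking subgraph for $v_i,v_{i+1}$ for every $i$. Setting $R_i:=H$ for all $i$ gives $\bigcup_{i=1}^{k-1}R_i=H$, so $H$ lies in the right-hand family.

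For the inclusion from right to left, fix for each $i$ a ranking subgraph $R_i$ for $v_i,v_{i+1}$ compatible with $G$, and put $H:=\bigcup_{i=1}^{k-1}R_i$. By the extended Lemma~\ref{lemma:visitgraph_union}, $H$ is a visit subgraph and is compatible with $G$; moreover, writing $H=R_i\cup\bigl(\bigcup_{j\neq i}R_j\bigr)$ — where $\bigcup_{j\neq i}R_j$ is a visit subgraph by part~1 — and applying part~3, $H$ is a ranking subgraph for the pair $v_i,v_{i+1}$ for every $i$. It then remains to conclude that $H$ is a ranking subgraph for the whole tuple $v_1,\ldots,v_k$. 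For this I would first record that any $G'$ compatible with $H$ is also compatible with each $R_i$: indeed $R_i\subseteq H\subseteq G'$ and $\operatorname{kernel}(R_i)\subseteq\operatorname{kernel}(H)$, so the subgraph induced by $\operatorname{kernel}(H)$ in $G'$, together with all arcs to and from it, coincides with the one in $H$ by compatibility of $H$ with $G'$, and this restricts on $\operatorname{kernel}(R_i)$ to exactly the structure of $R_i$ (here one reuses that $R_i$, the other $R_j$'s, and $H$ are all compatible with the single graph $G$). Hence, in every $G'$ compatible with $H$, Theorem~\ref{thm:rankgraph_char} applied to each $R_i$ forces the relative ranking of each consecutive pair to agree with $G$; combined with the fact that $G$ itself is compatible with $H$, this yields that $v_1,\ldots,v_k$ is a valid ranking in every $G'$ compatible with $H$, i.e.\ that $H$ is a ranking subgraph for the tuple.

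The step I expect to be the real obstacle is the very last one: upgrading ``ranking subgraph for every consecutive pair'' to ``ranking subgraph for the whole tuple''. It is not a purely formal matter, since ``$v'$ is not $\epsilon$-above $v$'' is not a transitive relation — one cannot simply chain the per-pair inequalities $P_{G'}(v_i)\ge P_{G'}(v_{i+1})/(1+\epsilon)$ along a long chain without losing a factor $(1+\epsilon)$ at every step. Making the upgrade rigorous must genuinely exploit that all the $R_i$, and hence the enlarged kernel of $H$, are anchored to one common graph $G$ in which $v_1,\ldots,v_k$ are $\epsilon$-separated in this very order, so that no graph compatible with $H$ can reorder even a non-consecutive pair; here Theorem~\ref{thm:rankgraph_char} (whose frontier condition does chain cleanly across consecutive pairs, unlike its score condition) and the compatibility bookkeeping above have to be combined with care. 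The auxiliary claim that a graph compatible with a union of visit subgraphs is compatible with each summand is conceptually simple but still requires the compatibility-preservation argument sketched above, rather than a bare appeal to ``$R_i$ is a subgraph of $H$''.
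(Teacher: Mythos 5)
Your argument retraces the paper's own proof almost step for step: the left-to-right inclusion via $R_i := H$ for all $i$, and the right-to-left inclusion by applying Lemma~\ref{lemma:visitgraph_union} (extended to $k-1$ summands) to conclude that the union $H$ is a visit subgraph compatible with $G$ and a ranking subgraph for every consecutive pair. The one step you single out as ``the real obstacle'' --- upgrading ``ranking subgraph for every consecutive pair'' to ``ranking subgraph for the whole tuple'' --- is exactly the step the paper dispatches with a bare ``therefore'', so your write-up is no less complete than the original; it is merely more candid about where the difficulty sits.

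Your worry about that step is well founded, and the repair you sketch (``no graph compatible with $H$ can reorder even a non-consecutive pair'') cannot be pushed through from the per-pair hypotheses alone. Condition 2 of Theorem~\ref{thm:rankgraph_char} does chain across consecutive pairs, but condition 1 does not: from $P_H(v_i) \geq P_H(v_{i+1})/(1+\epsilon)$ for each $i$ one only gets $P_H(v_1) \geq P_H(v_k)/(1+\epsilon)^{k-1}$, and the necessity half of Theorem~\ref{thm:rankgraph_char} then yields a graph compatible with $H$ in which $v_k$ is strictly above $v_1$ whenever $P_H(v_k) > (1+\epsilon)P_H(v_1)$. Already for $k=3$ this situation can be engineered: take kernel scores in consecutive ratio $1+\epsilon'$ with $1+\epsilon' \leq 1+\epsilon < (1+\epsilon')^2$, and give $v_1$ a frontier parent $w$ whose only kernel child is $v_1$ (so both frontier conditions hold trivially); loading $w$ with many orphan parents produces a compatible $G$ with $v_1 \succ v_2 \succ v_3$, while diluting $w$ produces a compatible $G'$ in which $v_3$ is strictly above $v_1$. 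So under the literal reading of the definition --- no two compatible graphs admit irreconcilable valid rankings of the tuple, validity being checked on \emph{all} pairs --- the right-to-left inclusion genuinely fails, and neither your compatibility bookkeeping nor anything in the paper closes it. The statement is correct, and your proof complete, if ``ranking subgraph for $v_1,\ldots,v_k$'' is read as ``ranking subgraph for each consecutive pair $v_i,v_{i+1}$''; that weaker reading is also all the paper's later arguments (which rely only on the easy direction and on per-pair applications of Theorem~\ref{thm:rankgraph_char}) actually use.
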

\begin{proof}
Let $R$ be a ranking subgraph for $v_1, \ldots v_k$ compatible with $G$. Then
$R$ is also a ranking subgraph for $v_i,v_{i+1}$ $\forall i = 1, \ldots, k-1$,
and we can express $R$ as $\bigcup_{i=1}^{k-1}{R_i}$ with $R_i = R$.

On the other side, let $S = \bigcup_{i=1}^{k-1}{R_i}$ where $R_i$ is a ranking
subgraph for $v_i,v_{i+1}$ compatible with $G$. By
Lemma~\ref{lemma:visitgraph_union}, $S$ is compatible with $G$ and is still a
ranking subgraph for $v_i,v_{i+1}$ $\forall i = 1, \ldots, k-1$. Therefore, it
is a
ranking subgraph also for $v_1, \ldots, v_k$.
\end{proof}

\subsection{Proof of Theorem~\ref{thm:cost_deterministic}}
\label{apx:proof_thm_cost_deterministic}
\begin{proof}
We build a graph $G_m$ of size $n = \Theta(m)$ satisfying the statement for
every $m$ sufficiently large. $G_m$ is formed by $k$ disjoint subgraphs $G_m^1,
\ldots, G_m^k$. Subgraph $G_m^i$ (Figure~\ref{fig:cost_deterministic}) contains:
\begin{enumerate}
 \item the target node $v_i$ and its self-loop
 \item $n_i = m \; p(m)  (1+\epsilon(m))^{i-1}$ ``bulk parents'' (rounded to
the nearest positive integer) having $v_i$ as their sole child; if $n_k =
n_{k-1}$, then $v_k$ has $n_k + 1$ bulk parents
 \item for $i < k$, a clique of $m$ nodes, one of which has also an arc to $v_i$
 \item for $1 < i < k$, if $n_i = n_{i-1}$, a ``fractional parent'' of $v_i$
having an arc to $v_i$ and $2 \leq q_i \leq m - 1$ arcs (we compute $q_i$ below)
to clique nodes that do not point to $v_i$
\end{enumerate}
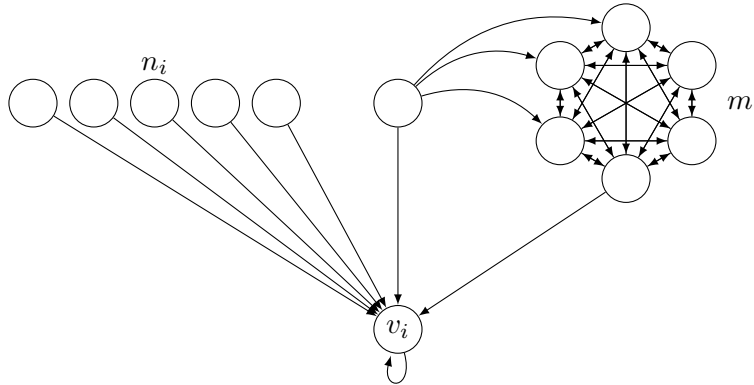
\begin{figure}[h]
 \centering
 \begin{tikzpicture}[scale=1, every circle
node/.style={draw, thin, inner sep=0pt},
line/.style={draw, very thin},
>=latex,
baseline={(vi.base)}]
\tikzstyle{ancestor} = [circle, draw, fill=white, minimum size=18];
\tikzstyle{target} = [circle, draw, fill=white, minimum size=18];

\def\spacing{0.8};
\def\ybasecoord{0};
\def\xbasecoord{4};
\def\ancbasecoord{3};

\node[target] (vi) at (\xbasecoord, \ybasecoord) {$v_i$};

\path[->] (vi) edge[loop below] (vi);
\def\nanc{5};
\pgfmathparse{\nanc/2 + 1/2 - 1};
\let\hoff\pgfmathresult
\foreach \j in {1,...,\nanc} {
  \pgfmathparse{(\j - \hoff) * \spacing};
  \let\xcoord\pgfmathresult;
  \node[ancestor] (vij) at (\xcoord, \ancbasecoord) {};
  \path[->] (vij) edge (vi);
}
\pgfmathparse{(\nanc/2 + 1/2 - \hoff) * \spacing};
\node at (\pgfmathresult, \ancbasecoord + 0.5) {$n_i$};

\def\n{6};
\def\cliqueradius{1};
\node (cliquecenter) at (\xbasecoord + 3, \ancbasecoord) {};

\pgfmathparse{\n - 1};
\let\m\pgfmathresult;
\foreach \x in {0, ..., \m} {
  \pgfmathparse{90 + (\x * 360) / \n};
  \let\angle\pgfmathresult;
      \pgfmathparse{int(\x + 1)};
      \let\xx\pgfmathresult
  \path (cliquecenter) ++(\angle : \cliqueradius) node[ancestor] (u\x) {};
}

\foreach \x in {0, ..., \m} {
  \foreach \y in {1, ..., \m} {
    \pgfmathparse{int(mod(\x + \y, \n))};
    \let\dest\pgfmathresult;
    \draw[->] (u\x) edge (u\dest);
  }
}

\path (cliquecenter) ++(\cliqueradius * 1.5,0) node (cliquelabel) {$m$};

\draw[->] (u3) edge (vi);

\node[ancestor] (vf) at (\xbasecoord, \ancbasecoord) {};
\path[->] (vf) edge (vi);
\foreach \x in {0, ..., 2} {
  \draw[->] (vf) edge[bend left] (u\x);
}
\end{tikzpicture}
\caption{Subgraph $G_m^i$ of graph $G_m$ (Theorem~\ref{thm:cost_deterministic})
for $i < k$. The $n_i$ ``bulk parents'' on the left provide the bulk of $v_i$'s
score, up to a resolution of $\Theta(1/n)$; if still $n_i = n_{i-1}$, then the
``fractional parent'' above $v_i$ provides a score separation up to a resolution
of $\Theta(1/n^2)$. The clique on the right provides a negligible contribution
$O(1/n^2)$, but its $m = \Theta(n)$ nodes must still be visited.}
 \label{fig:cost_deterministic}
\end{figure}

Intuitively, $v_i$ and its $n_i$ bulk parents provide the bulk of $v_i$'s
score, guaranteeing an absolute score separation up to a ``resolution''
$\Omega(1/n)$. If the desired absolute separation is smaller, then $n_i =
n_{i-1}$, in which case the fractional parent provides an additional
contribution $\Theta(1/{n q_i})$, thus guaranteeing an absolute score separation
with resolution up to $\Theta(1/n^2)$. The clique provides a negligible
contribution $O(1/n^2)$ while damping the contribution of the fractional parent
and guaranteeing an $\Omega(n)$ lower bound.

Disregarding the at most $2k$ target nodes and fractional parents, $G_m$ has
size:
\begin{align}
 n \approx \sum_{i = 1}^k (n_i + m) = km + m p(m) \sum_{i = 1}^k (1 +
\epsilon(m))^{i-1}
\end{align}
since $\epsilon(m) \in O(1/p(m)^\frac{1}{k-1})$, this is in $O(1/p(m))$, proving
that $n \in \Theta(m)$. We thus use $n$ instead of $m$ from now on, all the
results still holding asymptotically.

The PageRank score of $v_i$ is the sum of three components. The first, provided
by $v_i$ and its $n_i$ bulk parents, amounts to $\Theta(\frac{1}{n}\alpha
n_i) = \Theta(p(n)(1+\epsilon(n))^{i-1})$. The second, provided by the clique
for $i < k$, can be bounded observing that in an isolated clique (without
incoming or outgoing arcs) each node has score exactly $1/n$, and that adding an
outgoing arc does not increase this score; then expressing $P(v_i)$ as the
weighted sum of the scores of its parents divided by their outdegrees shows that
the whole clique contribution is roughly the score of one of its nodes divided
by $m$, and is thus bounded by $1/nm \in O(1/n^2) = o(p(n))$. The third
component is provided by the fractional parent when $n_i = n_{i-1}$ with $1 < i
< k$; of this, the part flowing through the clique is damped by a factor
$O(1/n^2)$ and is thus negligible, while the part flowing through the single arc
to $v_i$ amounts to $O(1/(n q_i)) = O(p(n))$. It follows that $P(v_1) \in
\Theta(p(n))$ and $P(v_i) \in \Omega(p(n))$ for every $i=1,\ldots,k$.

We prove that $v_1,\ldots,v_k$ are $\Theta(\epsilon(n))$-separated. Consider
$v_i$ and $v_{i+1}$. If $n_i < n_{i+1}$, then the absolute score separation
between $v_{i+1}$ and $v_i$ is guaranteed by their bulk parents since
$n_{i+1}/n_i \approx 1 + \epsilon(n)$ and thus $P(v_{i+1})/P(v_i) \approx 1 +
\epsilon(n)$ -- the clique contribution is negligible and the fractional
parent contribution is at most roughly half that of one bulk parent. If
$n_{i+1} = n_i$, the absolute score separation is between $\Omega(1/n^2)$ and
$O(1/n)$. Then choose an appropriate $2 \leq q_{i+1} \leq \Theta(n)$ to
approximate the separation up to a small constant factor (note that the
contribution flowing to $v_{i+1}$ through the clique is always in $O(1/n^2)$ and
thus dominated by the contribution flowing through the arc pointing directly to
$v_{i+1}$). In any case, the target nodes are $\Theta(\epsilon(n))$-separated.

Note that $v_1,\ldots,v_k$ are the top $k$ ranking nodes in $G_m$, as they
receive the contribution of at least one bulk parent plus that of the fractional
parent. The cliques contain the only other non-orphan nodes, which receive a
contribution of at most $1/n$ from the clique itself (less than a bulk parent)
and from the fractional parent at most the same contribution received by the
corresponding target node.

We conclude proving the lower bound. Any ranking subgraph for $v_1,\ldots,v_k$
in $G_m$ must contain in its kernel set all the nodes of $G_m^1, \ldots,
G_m^{k-1}$, otherwise violating the definition of ranking subgraph (if target
nodes are not in the kernel set) or condition 2 of
Theorem~\ref{thm:rankgraph_char}. It must also contain
$\Omega(n_{k-1}/(1+\epsilon(n)))$ ancestors of $v_k$, otherwise violating
condition 1 of Theorem~\ref{thm:rankgraph_char}. Therefore, at most $O(n_k -
n_{k-1}/(1+\epsilon(n))) = O(n_k \epsilon(n)/(1+\epsilon(n)))$ nodes of $G_m$
may not be in the kernel set. But $O(n_k\epsilon(n)/(1+\epsilon(n))) =
O(n\,p(n)\epsilon(n)(1+\epsilon(n))^{k-2}) =
n\,O(p(n)\epsilon(n)(1+\epsilon(n))^{k-2})$, and therefore the kernel set has
size at least $n \,(1-O(p(n) \epsilon(n) (1+\epsilon(n)^{k-2}))$.
By Theorem~\ref{thm:all_must_visit_ranking_graph}, any execution of any deterministic
or randomized Las Vegas local ranking algorithm under any graph exploration model needs $n \,(1-O(p(n) \epsilon(n)
(1+\epsilon(n)^{k-2})))$ queries to solve the instance
$(G_m,\{v_1,\ldots,v_k\})$.
\end{proof}

\subsection{Proof of Theorem~\ref{thm:rankgraph_nph_clique}}
\begin{proof}
CLIQUE asks if a directed graph $G_0$ contains a clique of size $m$ (the
undirected version is straightforwardly reduced to this one). Given a generic
CLIQUE instance $(G_0,m)$, we build an instance for RANKGRAPH that admits a
ranking subgraph of size $2(m+1)+q$, with $q$ depending on $\epsilon$, if and
only if $G_0$ contains a clique of size $m$ (see Figure~\ref{fig:reduction}).
\begin{figure}[h]
\centering
\begin{tikzpicture}[scale=1,
every circle node/.style={inner sep=0pt, minimum size=18},
line/.style={draw, thin},
>=latex,
baseline={(u0.base)}]
  \tikzstyle{every node} = [circle, fill=white];

  \def\Gcolor{black};
  \tikzstyle{G0} = [fill=white, draw=\Gcolor, thick];

  \def\nn{10}
  \def\n{5};
  \def\nsinks{6};

  \node[draw] (u) at (-2,-4) {$u$};
  \path (u) edge[loop below] (u);

  \pgfmathparse{\n - 1};
  \let\m\pgfmathresult;
  \foreach \x in {0, ..., \m} {
    \pgfmathparse{90 + (\x * 360) / \n};
    \let\angle\pgfmathresult;
        \pgfmathparse{int(\x + 1)};
        \let\xx\pgfmathresult
    \node[G0] (u\x) at (\angle : 1) {\color{\Gcolor}$u_{\xx}$};
  }

  \foreach \x in {0, ..., \m} {
    \foreach \y in {1, ..., \m} {
      \pgfmathparse{int(mod(\x + \y, \n))};
      \let\dest\pgfmathresult;
      \draw[G0,->] (u\x) edge[\Gcolor] (u\dest);
    }
  }

  \path [G0,->,\Gcolor, in=215, out=185] (u2) edge[loop] (u2);

  \node[G0] (u5) at (-2.2,-0.4) {\color{\Gcolor}$u_6$};
  \node[G0] (u6) at (-2.2,1) {\color{\Gcolor}$u_7$};
  \node[draw=\Gcolor] (u9) at (2,-0.8) {\color{\Gcolor}$u_8$};

  \path [G0,->,\Gcolor] (u1) edge (u5);
  \path [G0,->,\Gcolor] (u1) edge (u6);
  \path [G0,->,\Gcolor] (u2) edge[bend right] (u5);
  \path (u4) edge[\Gcolor,->] (u9);

  \path [G0,->,\Gcolor] (u5) edge (u2);
  \path [G0,->,\Gcolor] (u6) edge (u2);
  \path [G0,->,\Gcolor] (u6) edge (u5);
  \path [G0,->,\Gcolor] (u6) edge (u0);

  \foreach \x in {1, 2} {
    \node[draw] (s\x) at (-1 + \x, -4) {$s_\x$};
    \path (s\x) edge [loop below] (s\x);
  }
  \foreach \x in {\nsinks} {
    \node[draw] (s\x) at (1 + 0.33 * \x, -4) {$s_\x$};
    \path (s\x) edge [loop below] (s\x);
  }
  \filldraw[black] ($ (s2) !.3! (s\nsinks) $) circle(0.5pt);
  \filldraw[black] ($ (s2) !.50! (s\nsinks) $) circle(0.5pt);
  \filldraw[black] ($ (s2) !.7! (s\nsinks) $) circle(0.5pt);

  \foreach \x in {2, 3, 9} {
    \draw[->,] (u\x) edge (u);
  }

  \draw[->] (u9) edge (s6);
  \draw[->] (u2) edge (s1);

  \draw[->] (u3) edge[loop right] (u3);
  \draw[->] (u3) edge (s1);
  \draw[->] (u3) edge (s2);

  \draw[->] (u9) edge[loop right] (u9);
  \draw[->] (u9) edge (s1);
  \draw[->] (u9) edge (s2);
  \draw[->] (u9) edge (s6);

\end{tikzpicture}
\hspace{1.5cm}
\begin{tikzpicture}[scale=1, every circle
node/.style={draw, thin, inner sep=0pt, minimum size=18},
line/.style={draw, very thin},
>=latex,
baseline={(u0.base)}]
  \tikzstyle{every node} = [circle, fill=white];
  
  \def\nn{10}
  \def\n{5};
  \def\nsinks{2};

  \node[draw] (u) at (-1,-4) {$v$};
  \path (u) edge[loop below] (u);

  \pgfmathparse{\n - 1};
  \let\m\pgfmathresult;
  \foreach \x in {0, ..., \m} {
    \pgfmathparse{90 + (\x * 360) / \n};
    \let\angle\pgfmathresult;
        \pgfmathparse{int(\x + 1)};
        \let\xx\pgfmathresult
        \node[draw] (u\x) at (\angle : 1) {$v_\xx$};
        \path (u\x) edge [in=\angle+15, out=\angle-15, loop] (u\x);
  }

  \begin{scope}[on background layer]
  \foreach \x in {0, ..., \m} {
    \foreach \y in {1, ..., \m} {
      \pgfmathparse{int(mod(\x + \y, \n))};
      \let\dest\pgfmathresult;
      \draw[->] (u\x) edge (u\dest);
    }
  }
  \end{scope}

  \foreach \x in {1, 2} {
    \node[draw] (s\x) at (-0.5 + \x, -4) {$t_\x$};
    \path (s\x) edge [loop below] (s\x);
  }

  \foreach \x in {2, 3} {
    \draw[->,] (u\x) edge (u);
        \draw[->,] (u\x) edge (s1);
        \draw[->,] (u\x) edge (s2);
  }
\end{tikzpicture}
\caption{The graph $G$ built from a CLIQUE instance $(G_0,5)$. The graph $G_0$
contains the nodes and arcs in bold; all the others are added by the reduction
(for simplicity, of the arcs added to $G_0$ we show only those added to $u_3,
u_4$ and $u_8$, and of the arcs going to $v, t_1$ and $t_2$ we show only those
from $v_3$ and $v_4$). Here $u_2$ has outdegree $6$ in $G_0$, and thus $d = 8$.
Since $u \succ v$, there exists a ranking subgraph of size $12$ for $u, v$ in
$G$ if and only if $G_0$ contains a clique of size $5$ (for simplicity, we
assumed $\epsilon$ sufficiently small that $q=0$, i.e.\ $v$ does not have
additional parents besides the clique nodes).}
\label{fig:reduction}
\end{figure}
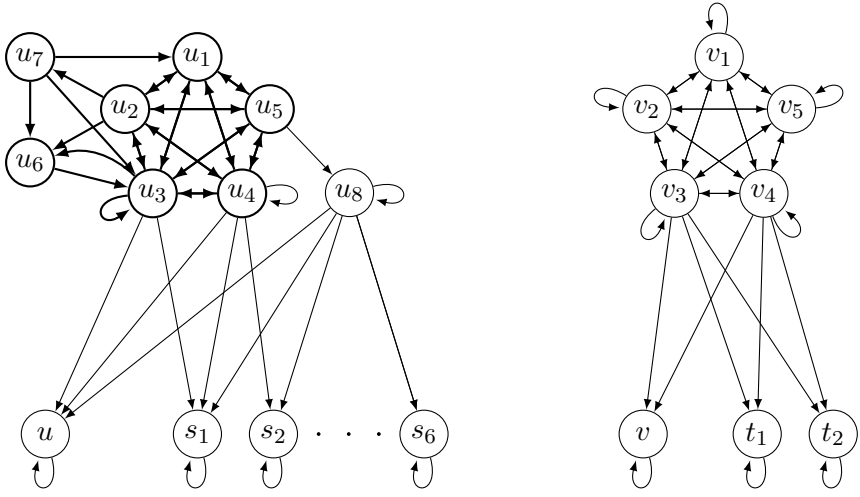

We build the ancestor graph of $u$ starting from $G_0$. First add a self-loop to
$u$. Then, to each node of $G_0$ we add an arc towards $u$, and a self-loop if
not already present (this ``normalizes'' the topology with respect to already
present self-loops). Let $d$ be the maximum outdegree of a node after these
modifications; then bring to $d$ the outdegree of each node by adding a proper
number of arcs towards a set of (at most $d-2$) additional new nodes that act as
sinks. The ancestor graph of $v$ contains $v$ with a self-loop, and a clique of
$m$ nodes each of which has a self-loop, an arc to $v$, and $d - m - 1$ arcs
towards a set of $d - m - 1$ additional nodes that act as sinks (if $d < m+1$
then $G_0$ has maximum outdegree $< m$, in which case we immediately return NO
to CLIQUE without solving RANKGRAPH). In addition, $v$ has a set of $q$ parents
each having outdegree $\lceil d^2/\alpha(1+\epsilon)\rceil$, pointing to $v$ and
to $\lceil d^2/\alpha(1+\epsilon)\rceil - 1$ sink nodes. We choose $q$ such
that, for $|H| = 2(m+1)+q$, the kernel score component $P_H^Q(v)$ provided by
the $q$ parents satisfies $\epsilon P_H^C(v) - \alpha^2(1+\epsilon)/|H|d^2 \leq
P_H^Q(v) \leq \epsilon P_H^C(v)$ where $P_H^C(v)$ is the component provided by
$v$ and its $m$ clique ancestors. A $q$ satisfying this constraint always exists
(and it equals $0$ for $\epsilon$ sufficiently small), since a single parent
gives contribution at most $\alpha^2(1+\epsilon)/|H| d^2$, ensuring sufficient
``resolution''. Since clearly $P_H(v) = P_H^C(v) + P_H^Q(v)$, we have
$(1+\epsilon)(P_H^C(v)-\alpha^2/|H|d^2) \leq P_H(v) \leq(1+\epsilon)P_H^C(v)$.

Denote by $G$ the resulting graph, and consider the input instance
$(G,u,v,2(m+1)+q)$ for RANKGRAPH. The intuition is that a ranking subgraph of
kernel size $2(m+1)+q$ exists if and only if there are $m$ ancestors of $u$ that
induce a subgraph ``strong enough'' to guarantee $P_H(u) \geq
P_H(v)/(1+\epsilon)$ -- and this happens exactly if they form a clique. Indeed
we prove:
\begin{enumerate}
 \item If $G_0$ contains a clique of size $m$, then $G$ has a compatible ranking
subgraph of size $2(m+1)+q$ for $u \succ v$. Indeed, consider the visit subgraph
having in the kernel set the $m$ clique nodes, along with $u$, $v$, and all the
$m + q$ ancestors of $v$. This visit subgraph has kernel size $2(m+1)+q$ and
also satisfies property 2 of Theorem~\ref{thm:rankgraph_char}. Observe that
$P_H(u) = P_H^C(v)$ since the subgraphs induced by $u$ with its $m$ ancestors
and by $v$ with its $m$ clique ancestors are identical -- cliques of $m$ nodes
with outdegree $d$, each having in addition a self-loop and an arc to the
target node. But from above we have $P_H^C(v) \geq P_H(v)/(1+\epsilon)$, and
therefore
$P_H(u) \geq P_H(v)/(1+\epsilon)$; thus $H$ satisfies also property 1 of
Theorem~\ref{thm:rankgraph_char} and is a ranking subgraph of kernel size
$2(m+1)+q$ for $u \succ v$ compatible with $G$.
\item If $G_0$ does not contain a clique of size $m$, then either $P(u) <
P(v)/(1+\epsilon)$ in $G$ (which we can check by computing their scores, and in
which case we immediately return NO to CLIQUE without solving RANKGRAPH) or
there is no ranking subgraph of kernel size $2(m+1)+q$ for $u \succ v$ in $G$,
as we prove. A ranking subgraph $H$ for $u \succ v$ must include $u$ and $v$ by
definition, and the $m + q$ ancestors of $v$ by property 2 of
Theorem~\ref{thm:rankgraph_char}; thus, to have kernel size $2(m+1)+q$, it must
contain at most $m$ ancestors of $u$. If these $m$ ancestors do not induce a
clique, at least one arc is missing between two of them (it instead points to a
frontier node, and any path using it gives null kernel contribution). If these
nodes formed a clique, this arc would provide a kernel contribution via paths
starting with it and leading to $u$ via another arc (followed by zero or
more self-loops) amounting to $\alpha^2/|H|d^2$; and additional contribution
via paths that use it as intermediate arc. The kernel score of $u$ in this
``clique'' scenario is $P_H^C(u) = P_H^C(v)$, and thus we have $P_H(u) <
P_H^C(v) - \alpha^2/|H|d^2$. With the inequalities above this implies
$P_H(u)<P(v)/(1+\epsilon)$, inducing $v \succ u$ and proving that $H$ is not a
ranking subgraph for $u \succ v$.
\end{enumerate}
Therefore we have a (clearly polynomial) reduction from CLIQUE to RANKGRAPH,
which concludes the proof.
\end{proof}

\subsection{Proof of Theorem~\ref{thm:rankgraph_nph_dominating}}
\begin{proof}
Given a graph $G_0$ and an integer $m > 0$, DOMINATING SET asks if there is a
subset $D$ of $m$ nodes in $G_0$ such that any other node in $G_0$ points to a
node in $D$ (the undirected version can be straightforwardly reduced to this
one).

Starting from a generic directed DOMINATING SET instance $(G_0,m)$, we
build a graph $G$ (Figure~\ref{fig:reduction_ds}) such that a compatible ranking
subgraph of size $3 + m$ exists for $u \succ v$ if and only if $G_0$ contains a
dominating set of size $m$. We assume that $G_0$ has no dangling nodes;
otherwise, the dominating set is formed by all the dangling nodes plus the
dominating set on the remaining nodes.
\begin{figure}[h]
\centering
\begin{tikzpicture}[scale=1, every circle
node/.style={draw, thin, inner sep=0pt, minimum size=18},
line/.style={draw, thin},
>=latex,
baseline={(u.base)}]
  \tikzstyle{every node} = [circle, fill=white];

  \node[draw] (u) at (0,0) {$u$};
  \path (u) edge[loop below] (u);

  \node[draw] (v) at (4,0) {$v$};
  \path (v) edge[loop below] (v);

  \node[draw] (w) at (4,2) {$w$};
  \path (w) edge[->] (v);

  \def\spacing{1};

  \def\ycoord{2};
  \def\nanc{5};
  \pgfmathparse{\nanc/2 + 1/2};
  \let\hoff\pgfmathresult
  \foreach \i in {1,...,\nanc} {
	\pgfmathparse{(\i - \hoff) * \spacing};
	\let\xcoord\pgfmathresult;
	\node[draw] (u\i) at (\xcoord, \ycoord) {$u_{\i}$};
	\path[->] (u\i) edge (u);
  }

  \def\ycoord{4};
  \def\nanc{5};
  \pgfmathparse{\nanc/2 + 1/2};
  \let\hoff\pgfmathresult
  \foreach \i in {1,...,\nanc} {
	\pgfmathparse{(\i - \hoff) * \spacing};
	\let\xcoord\pgfmathresult;
	\node[draw] (u1\i) at (\xcoord, \ycoord) {$u'_{\i}$};
	\path[->] (u1\i) edge[bend left=40] (w);
  }

  \path (u11) edge[->] (u3);
  \path (u12) edge[->] (u1);
  \path (u13) edge[->] (u1);
  \path (u13) edge[->] (u4);
  \path (u13) edge[->] (u5);
  \path (u14) edge[->] (u3);
  \path (u15) edge[->] (u3);
\end{tikzpicture}
\caption{The graph $G$ built from a DOMINATING SET instance $(G_0,2)$. The graph
$G_0$ contains $5$ nodes $v_1,\ldots,v_5$ and the arc $(u'_i,u_j)$ exists in
$G$ if and only if $(v_i,v_j)$ exists in $G_0$. Graph
$G$ admits a compatible ranking subgraph with kernel size $5$ if and only if
$G_0$ admits a dominating set of size $2$. In this example, such a ranking
subgraph has $u, u_1, u_3, v, w$ as its kernel set, and thus a dominating
set of size $2$ for $G_0$ is $\{v_1, v_3\}$.}
\label{fig:reduction_ds}
\end{figure}
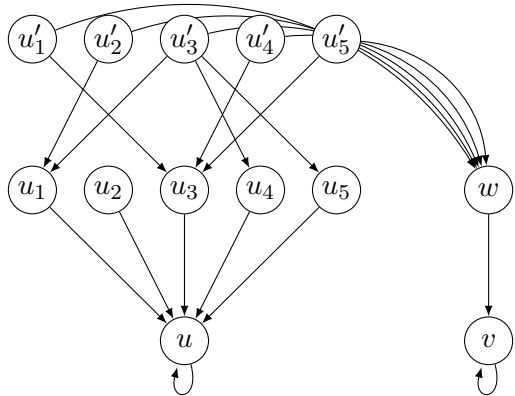
Denote the nodes of $G_0$ by $v_1, \ldots, v_n$. The graph $G$ contains $u$ and
$v$ with their self-loops; $n$ parents $u_1, \ldots, u_n$ of $u$; one parent $w$
of $v$; and $n$ nodes $u'_1,\ldots,u'_n$ where $u'_i$ has one arc towards $w$
and, for every $j = 1,\ldots,n$, an arc towards $u_j$ if and only if the arc
$(v_i,v_j)$ exists in $G_0$. The intuition is that a ranking subgraph of
kernel size $3 + m$ exists if and only if there are $m$ parents of $u$ that, if
in the kernel set, can ``cover'' the frontier nodes among $u_1',\ldots,u_n'$
ensuring that none of them gives more contribution to $v$ than to $u$. Indeed we
prove:
\begin{enumerate}
 \item If $G_0$ contains a dominating set of size $m$, then there exists a
ranking subgraph of size $3 + m$ for $u \succ v$ in $G$. Let $D$ be the
dominating set of size $m$ in $G_0$. Consider the visit subgraph $H$ having in
its kernel set $u$, $v$, $w$ and, for $i = 1,\ldots,n$, node $u_i$ if and only
if $v_i$ is in the dominating set, thus having all the other nodes of $G$ as
frontier nodes.
Since $m > 0$, then $u$ has at least one parent in the kernel set and $P_H(u)
\geq P_H(v)$; therefore $H$ satisfies property 1 of
Theorem~\ref{thm:rankgraph_char} for any $\epsilon \geq 0$. Since each node
$u'_i$ points to at least a parent of $u$ and to exactly one parent of $v$, then
$H$ also satisfies property 2 of Theorem~\ref{thm:rankgraph_char}, proving that
it is a ranking subgraph of kernel size $3+m$ for $u \succ v$ compatible with
$G$.
 \item If there exists a ranking subgraph of size $3+m$ for $u \succ v$ in
$G$, then we can build a dominating set of size $m$ for $G_0$ as follows. The
kernel set of any ranking subgraph for $u \succ v$ in $G$ contains $v$ and $w$
by property 2 of Theorem~\ref{thm:rankgraph_char}, and contains $u$
by definition. Therefore, if the kernel set has size $3 + m$, it must include
exactly $m$ nodes among $u_1,\ldots,u_n,u'_1,\ldots,u'_n$. Again by property 2
of Theorem~\ref{thm:rankgraph_char}, either the generic node $u'_i$ is in the
kernel set, or one of its children  $u_j$ is in the kernel set, otherwise
$P_H(u'_i,u) = 0 < P_H(u'_i,v)$. For each $u'_i$ in the kernel set, we
replace it with one of its children (even if already in the kernel set); this
does not increase the kernel size and maintains property 2 of
Theorem~\ref{thm:rankgraph_char}. We end up with a visit subgraph whose kernel
set contains $m' \leq m$ nodes among $u_1,\ldots,u_n$ that ``cover'' (i.e.\
have an arc from) each of the frontier nodes $u'_1,\ldots,u'_n$. A dominating
set of size $m$ for $G_0$ contains $v_i$ if and only if $u_i$ is in the kernel
set, and $m-m'$ other nodes in $G_0$.
\end{enumerate}
Therefore we have a (clearly polynomial) reduction from DOMINATING SET to
RANKGRAPH, which concludes the proof.
\end{proof}

\subsection{Proof of Lemma~\ref{lem:inapproximability}}
\begin{proof}
Given an instance $(G_0,m)$ for DOMINATING SET, the reduction builds an instance
$(G,u,v,3+m)$ for RANKGRAPH where $|G| = 2|G_0| + 3$. Furthermore, a solution
of size $3+m$ for RANKGRAPH exists if and only if there is a solution of size
$m$ for DOMINATING SET. This implies that approximating RANKGRAPH within a small
(surely up to polynomial in $|G|$) factor would allow approximating DOMINATING
SET within the same factor factor. Approximating DOMINATING SET within
$o(\log(|G_0|))$ is NP-hard~\cite{Raz&1997}, thus approximating RANKGRAPH within
$o(\log(|G|))$ is also NP-hard.
\end{proof}

\subsection{Proof of Theorem~\ref{thm:avoid_ranking_graphs}}
\label{apx:thm_avoid_ranking_graphs}
\begin{proof}
Let $m = \lceil 1/\eta \rceil$. The graph $G$ is formed by $u$ and $v$, $m+1$ orphan nodes that point only to $u$, and
$m$ orphan nodes that point only to  $v$. Therefore, $P(u) > P(v)$. The algorithm, EludeRS, is given in the next page.
\begin{algorithm}[h]
\caption{$EludeRS(G,\{u,v\})$}
\label{alg:magic}
\begin{algorithmic}
\State Call \links on $u$ and $v$
\If {($u$ has $m+1$ parents and $0$ children) and ($v$ has $m$ parents and $0$
children)}
   \State Call \links on all the parents of $u$
   \State Call \links on all the parents of $v$ except one chosen uniformly at
random
   \If {(all the queried parents are orphans and have one child)}
     \State \Return $u \succ v$
   \Else
     \State Visit the entire ancestor graph of $u$ and $v$ and return the
correct ranking
   \EndIf
\Else
   \State Visit the entire ancestor graph of $u$ and $v$ and return the
correct ranking
\EndIf
\end{algorithmic}
\end{algorithm}

It is immediate to verify that EludeRS on $(G,\{u,v\})$ correctly returns $u
\succ v$ without querying all the parents of $v$.
By Theorem~\ref{thm:rankgraph_char}, all the parents of $v$ must belong to the
kernel set of any ranking subgraph for $u,v$ in $G$. Therefore, EludeRS never
visits such a ranking subgraph.
It is also immediate to verify that, on any instance containing a graph
different  from $G$, EludeRS visits all the ancestors of $u$ and $v$ and returns
the correct
ranking; unless $u$ and $v$ have zero children and respectively $m+1$ and $m$
parents, all these parents are orphans and have one child with the exception of
one of $v$'s parents, and this parent is exactly that not queried by EludeRS.
This happens with probability at most $1/m \leq \eta$, and thus EludeRS has
confidence $1 - \eta$.

Note that the score separation is $\epsilon = \alpha/(1+\alpha m)$ in the simple
instance above, but any value of $\epsilon$ can be approximated by adding parents
to $u$ or (with additional children) to $v$, and tuning EludeRS accordingly.
\end{proof}

\subsection{Proof of Theorem~\ref{thm:cost_monte_carlo_onlycrawl}}
\label{apx:proof_thm_cost_monte_carlo_onlycrawl}
\begin{proof}
We build a graph $G_m$ of size $n = \Theta(m)$ satisfying the statement for
every $m$ sufficiently large. $G_m$ consists of $k$ disjoint subgraphs $G_m^1,
\ldots, G_m^k$. Subgraph $G_m^i$ (Figure~\ref{fig:cost_montecarlo_onlycrawl})
contains the target node $v_i$ and its self-loop, a chain of $\log_\alpha p(m)$
nodes ending with $v_i$, and $m$ nodes having the head of the chain as their
sole child; one of them, called ``strong ancestor'', has $((1+\epsilon)^{i-1} -
1) m / \alpha$ exclusive parents. All these quantities are meant as rounded to
the nearest integer -- we assume $m$ to be sufficiently large to avoid dealing
with rounding operators and still guarantee the asymptotic results of the
statement. Intuitively, MC can only explore the graph ``from below'' to find at
least one strong ancestor, which happens with probability proportional to the
fraction of nodes visited, otherwise the returned ranking is correct with
probability at most $1/k!$.

\begin{figure}[h]
 \centering
 \begin{tikzpicture}[scale=1, every circle
node/.style={draw, thin, inner sep=0pt},
line/.style={draw, very thin},
>=latex,
baseline={(vi.base)}]
\tikzstyle{ancestor} = [circle, draw, fill=white, minimum size=18];
\tikzstyle{strong} = [circle, draw, fill=lightgray, minimum size=18];
\tikzstyle{target} = [circle, draw, fill=white, minimum size=18];

\def\spacing{0.8};
\def\ybasecoord{0};
\def\xbasecoord{5};
\def\ancybasecoord{3};

\node[target] (vi) at (\xbasecoord, \ybasecoord) {$v_i$};
\path[->] (vi) edge[loop below] (vi);

\path (vi) ++(0, 1) node[ancestor] (x1) {};
\path[->] (x1) edge (vi);
\path (x1) ++(0, 1.5) node[ancestor] (x2) {};
\path[->] (x2) edge[dotted] (x1);
\path (x2) ++(0, 1) node[ancestor] (x3) {};
\path[->] (x3) edge (x2);
\path (x1) ++(1.5,0.75) node {$\log_\alpha p(m)$};

%
\def\nanc{7};
\pgfmathparse{(\nanc + 1)/2};
\let\hoff\pgfmathresult
\foreach \j in {1,...,\nanc} {
  \pgfmathparse{(\j - \hoff) * \spacing};
  \let\xcoord\pgfmathresult;
  \path (x3) ++(\xcoord, 1.5) node[ancestor] (w\j) {};
  \path[->] (w\j) edge (x3);
}
\path (w\nanc) ++(1,0) node {$m$};
\def\strongparent{5};
\path (w\strongparent) ++(0,0) node[strong] {};

\def\ncloud{4};
\pgfmathparse{(\ncloud + 1)/2};
\let\hoff\pgfmathresult
\foreach \j in {1,...,\ncloud} {
  \pgfmathparse{(\j - \hoff) * \spacing};
  \let\xcoord\pgfmathresult;
  \path (w\strongparent) ++(\xcoord, 1.5) node[ancestor] (z\j) {};
  \path[->] (z\j) edge (w\strongparent);
}
\path (z\ncloud) ++(1.8,0) node {$\frac{(1 + \epsilon)^{i-1} - 1}{\alpha} m$};

\end{tikzpicture}
 \caption{Subgraph $G_m^i$ of graph $G_m$
(Theorem~\ref{thm:cost_monte_carlo_onlycrawl}). The bulk of $v_i$'s PageRank
score is provided by its $m$ ancestors via a chain of $\log_\alpha
p(m)$ nodes (damping their contribution by a factor $p(m)$), while its
relative ranking is precisely determined by the number of parents of its
``strong ancestor'' (filled in grey). Any (Monte Carlo) algorithm without
global queries can at best crawl backward from $v_i$ and sample the $m$
ancestors among which lies the strong ancestor.}
\label{fig:cost_montecarlo_onlycrawl}
\end{figure}
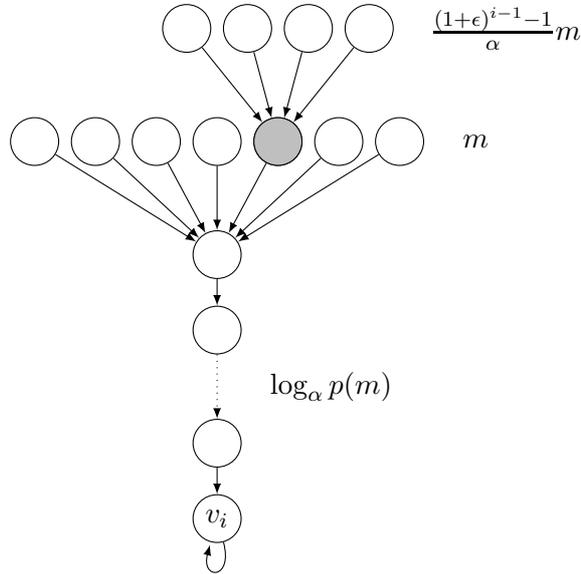

$G_m$ has size:
\begin{align}
n &= \sum_{i=1}^k \bigg( \log_\alpha p(m) + m + m \frac{(1 + \epsilon)^{i-1} -
1}{\alpha} \bigg)
\end{align}
which is in $\Theta(m)$ since $\log_\alpha p(m) \in O(\log_\alpha(1/m))
= O(m)$ and $\alpha, \epsilon, k \in \Theta(1)$. The PageRank score of $v_i$ is
\begin{align}
P(v_i) &= \frac{1}{n}\bigg(\sum_{j=0}^{\log_\alpha p(m)-1} \!\!\!\!\!\! \alpha^j +
\alpha^{\log_\alpha p(m)} \Big(m + \alpha \frac{(1 + \epsilon)^{i-1} - 1}{\alpha}
m\Big)\bigg) \approx p(m) (1 + \epsilon)^{i-1}
\end{align}
which is in $\Theta(p(n))$ since $\alpha, \epsilon, k \in \Theta(1)$ and
$n=\Theta(m)$. The scores satisfy $P(v_{i+1}) \approx (1 + \epsilon)P(v_i)$,
making the target nodes $\approx \epsilon$-separated (a strict
$\epsilon$-separation is obtained giving $((1+c\epsilon)^{i-1} - 1) m/\alpha$
parents to $v_i$'s strong ancestor for some constant $c$, but we avoid it to
keep the proof simple).

We prove the lower bound, showing that it holds even if the algorithm is
aware of the structure of the graph, except for the identities and the number
of parents of the strong ancestors. Since MC cannot perform global
queries (which may lead to the parents of the strong ancestors or the strong
ancestors themselves), it must explore the graph via local queries starting
from the target nodes, ``crawling backward'' along the chains, and directly
querying the $\Theta(m)$ ancestors, essentially sampling them (with e.g.\
\linksNoSp) until finding the strong ancestor. Note that, for a non-uniform
sampling, there exists a worst-case instance (having the least-queried ancestor
as the strong ancestor) requiring a strictly higher number of queries than for a
uniform sampling -- therefore we assume
that MC performs a uniform sampling. Suppose MC performs $q$ queries overall. 
The probability $\gamma$ that MC finds at least one strong ancestor, in which
case it returns the correct ranking with probability at most $1$, satisfies then
$\gamma \leq q/m$. With probability $1 - \gamma$, MC does not visit any strong
ancestor and must return an arbitrary ranking -- the strategy minimizing the
worst-case error probability consists again in returning a ranking uniformly at
random among the $1/k!$ possible rankings. The overall probability of returning
the correct ranking is thus upper bounded by
\begin{align}
 \gamma \, 1 + (1 - \gamma)\frac{1}{k!} \leq \frac{1}{k!} + \gamma
\end{align}
Since MC has confidence $1/k! + \delta$, we have
$\delta \leq \gamma$, and since $\gamma \leq q/m$ we have $q \geq
\delta m \in \Omega(\delta n)$.
\end{proof}

\subsection{Proof of Lemma~\ref{lem:sample}}
\begin{proof}
Note that SampleNode performs a random walk on the graph $G$ as if it was preprocessed by adding, to each dangling node,
$n$ outgoing links to itself and to every node (this is emulated by the call to \jump made when \crawl returns an empty
subgraph).
The probability that SampleNode follows exactly $\tau$ arcs is $(1-\alpha)\alpha^\tau$. Conditioned on this event,
the probability that SampleNode returns $v$ equals $\frac{1}{n}\sum_{z\in G} \operatorname{inf}_{\tau}(z,v)$ (recall
from Section~\ref{sec:pagerank} that $\operatorname{inf}_{\tau}(z,v)$ denotes the probability that, starting from node
$z$ and following at each step a random outgoing link of the current node, after exactly $\tau$ steps one is at $v$).
Then the probability that SampleNode returns $v$ equals $\sum_{\tau=0}^\infty (1-\alpha)\alpha^\tau 
\frac{1}{n}\sum_{z \in G}\operatorname{inf}_{\tau}(z,v)$, which coincides with $P(v)$ by
Equation~(\ref{eqn:pagerank_series_influences}).
\end{proof}

\subsection{Proof of Lemma~\ref{lem:sampletime}}
\begin{proof}
Each time SampleNode goes through the first instruction of the loop, it terminates immediately with an independent
probability $(1-\alpha)$; and, by the $j$-th time, it has issued at most $1 + 2(j - 1) \leq 2j$ queries.
Thus, SampleNode terminates after issuing at most $\frac{2}{1-\alpha}$ queries in expectation.
If SampleNode issues more than $\frac{2(1+\Delta)m}{1-\alpha}$ queries, then it goes through the first instruction in
the loop more than $\frac{(1+\Delta)m}{1-\alpha}$ times; the probability of this happening over $m$ calls is the
probability that that instruction makes SampleNode terminate less than $m = (1 - \frac{\Delta}{1 + \Delta}) \cdot (1 +
\Delta)m$ out of $\frac{(1+\Delta)m}{1-\alpha}$ times.
By a simple Chernoff bound, this probability is less than $e^{-\frac{(1+\Delta)m(\frac{\Delta}{1+\Delta})^2}{2}}
=e^{-\frac{m}{2}\cdot\frac{\Delta^2}{1+\Delta}}$.
\end{proof}

\subsection{Proof of Theorem~\ref{thm:samplerank2}}
\begin{proof}
By construction, the ranking returned by SampleRank is incorrect only if some
scores fall outside of their $1-\eta/k$ confidence intervals, which by a union
bound happens with probability at most $k \cdot \eta/k = \eta$. Therefore, the
ranking is correct with probability at least $1 - \eta$. The rest of the proof
shows that SampleRank performs more than $O\big(\frac{1}{1 -
\alpha}\log(\frac{4k}{\eta}) \frac{(1+\epsilon)^2}{\epsilon^2 p}\big)$ queries
with probability at most $\eta$. Precisely, we show that with probability at
least $1-\eta/2$ it performs at most
$O\big(\log(\frac{4k}{\eta})\frac{(1+\epsilon)^2}{\epsilon^2 p}\big)$ calls
to SampleNode, and that with probability at least $1-\eta/2$ these require
$O(\frac{1}{1-\alpha})$ as many queries -- yielding the desired bound
with probability at least $(1 - \eta/2)^2 \geq 1 - \eta$.

Consider two target nodes $u, v$ and suppose that $P(v)=(1+\epsilon/2)P(u)$.
The intuition is that, after a sufficient number of samples, with probability
at least $1-\eta/2k$ the confidence intervals obtained from the estimates
$\hat P_m(u)$ and $\hat P_m(v)$ either are disjoint (which holds also for all
$P(v) \geq (1+\epsilon/2)P(u)$) or one is within $1+\epsilon$ of the other
(which holds also for all $P(v) \leq (1+\epsilon/2)P(u)$).
We first bound the probability that the intervals become disjoint.
The probability that, after $m$ samples, a node with score at least $P(u)(1 +
\epsilon/4)$ receives a score estimate smaller or equal to
$P(u)(1 + \epsilon/8) = P(u)(1 + \epsilon/4)(1- \frac{\epsilon/8}{1 +
\epsilon/4})$ is, by a Chernoff bound, less than
$e^{-mP(u)(1+\epsilon/4)\frac{(\epsilon/8)^2}{2(1+\epsilon/4)^2}} =
e^{-\Theta(mP(u)\frac{\epsilon^2}{1 + \epsilon})}$.
For a proper $m = O(\log(\frac{2k}{\eta}) \frac{1+\epsilon}{\epsilon^2
P(u)})$, this probability is upper bounded by $\eta/2k$, and thus if $\hat
P_m(u) \leq P(u)(1 + \epsilon/8)$, the confidence interval for $P(v)$ cannot
contain the value $P(u)(1+\epsilon/4)$.
By a Chernoff bound, $\hat P_m(u) > P(u)(1 + \epsilon/8)$ with probability less
than $e^{-mP(u)\frac{(\epsilon/8)^2}{2 + \epsilon/8}} =
e^{-\Theta(mP(u)\frac{\epsilon^2}{1 + \epsilon})}$, which again for a proper $m
= O(\log(\frac{2k}{\eta}) \frac{1+\epsilon}{\epsilon^2 P(u)})$ is upper
bounded by $\eta/2k$. Therefore, after $O(\log(\frac{2k}{\eta})
\frac{1+\epsilon}{\epsilon^2 P(u)})$ samples, with probability at
least $1 - \eta/2k$ the confidence interval for $P(u)$ does not contain
$P(u)(1+\epsilon/4)$. A similar calculation shows that, after
$O(\log(\frac{2k}{\eta}) \frac{1+\epsilon}{\epsilon^2 P(u)})$ samples, with
probability $1 - \eta/2k$ also the confidence interval for $P(v)$ does not
contain $P(u)(1+\epsilon/4)$, and thus $C_m(u) \cap C_m(v) = \emptyset$.
A similar argument shows that, after $O(\log(\frac{2k}{\eta})
\frac{(1+\epsilon)^2}{\epsilon^2 P(u)})$ samples, with probability at least
$1 - \eta/2k$ the confidence interval of $P(u)$ does not include values below
$P(u)(1 - \epsilon/4(1+\epsilon))$ and that of $P(v)$ does not include values
above $P(v)(1 + \epsilon/(8 + 4\epsilon))$. The ratio between these two extremes
is exactly $1+\epsilon$, and therefore with probability at least $1-\eta/2k$ we
have $P^U_m(v)/ P^L_m(u) \leq 1 + \epsilon$.
Hence, after $O(\log(\frac{2k}{\eta})\frac{(1+\epsilon)^2}{\epsilon^2
P(u)})$ samples we have that, ideally ordering the nodes by their true scores,
each pair of consecutive nodes has intervals that overlap or are not within
$1+\epsilon$ one of the other with probability at most $\eta/2k$. By a union
bound, then, with probability $1 - \eta/2$ SampleRank terminates after
$O(\log(\frac{2k}{\eta})\frac{(1+\epsilon)^2}{\epsilon^2 P(u)})$ iterations
(and the probability that it exceeds the expected number of iterations by a
factor $c \geq 1$ decreases as $e^{-c}$).

By Lemma~\ref{lem:sampletime}, the probability that
$O(\log(\frac{2k}{\eta})\frac{(1+\epsilon)^2}{\epsilon^2 P(u)})$
calls to SampleNode perform more than $\Theta(\frac{1+\Delta}{1-\alpha}
\log(\frac{4k}{\eta}) \frac{(1+\epsilon)^2}{\epsilon^2 p})$
queries is bounded by
$e^{-\Theta(\log(\frac{4k}{\eta}) \frac{(1+\epsilon)^2}{\epsilon^2
p}\frac{\Delta^2}{1 + \Delta})}
= O((\eta/4k)^{\frac{(1+\epsilon)^2 \Delta^2}{\epsilon^2 p
(1+\Delta)}})$, which is strictly less than $\eta/2$ for appropriate constants
(and e.g.\ $\Delta = 2$). Therefore SampleRank performs $O(\frac{1}{1-\alpha}
\log(\frac{4k}{\eta}) \frac{(1+\epsilon)^2}{\epsilon^2 p})$
with probability $(1-\eta/2)^2 > 1-\eta$. This concludes the proof.
\end{proof}

\subsection{Proof of Theorem~\ref{thm:cost_monte_carlo}}
\begin{proof}
We build a graph $G_m$ of size $n = \Theta(m)$ satisfying the statement for
every $m$ sufficiently large, first for the case $p(x) \geq \Theta(x^{-2/3})$,
then extending the result to $p(x) \geq \Theta(1/x)$. $G_m$ consists of $k$
disjoint subgraphs $G_m^1, \ldots, G_m^k$. Subgraph
$G_m^i$ (Figure~\ref{fig:cost_montecarlo}) contains the target node $v_i$ with
a self-loop, and $m \sqrt{p(m)}/\alpha$ parents of $v_i$; one of these,
called ``strong ancestor'', has $v_i$ as its sole child and has $m \,
p(m)((1+\epsilon)^{i-1}-1)/\alpha^2$ orphan parents, and all the others have
outdegree $1/\sqrt{p(m)}$, pointing to $v_i$ and to $1/\sqrt{p(m)} - 1$
exclusive sink nodes (to keep Figure~\ref{fig:cost_montecarlo} legible, we have
explicitly drawn these sinks only for one parent). All these quantities are
meant as rounded to the nearest integer -- we assume $m$ to be sufficiently
large to avoid dealing with rounding operators and still guarantee the
asymptotic results of the statement.

\begin{figure}[h]
 \centering
 \begin{tikzpicture}[scale=1, every circle
node/.style={draw, thin, inner sep=0pt},
line/.style={draw, very thin},
>=latex,
baseline={(vi.base)}]
\tikzstyle{ancestor} = [circle, draw, fill=white, minimum size=18];
\tikzstyle{strong} = [circle, draw, fill=lightgray, minimum size=18];
\tikzstyle{target} = [circle, draw, fill=white, minimum size=18];

\def\spacing{0.8};
\def\ybasecoord{0};
\def\xbasecoord{5};
\def\ancybasecoord{3};

\node[target] (vi) at (\xbasecoord, \ybasecoord) {$v_i$};
\path[->] (vi) edge[loop below] (vi);

%
\def\nanc{7};
\pgfmathparse{(\nanc + 1)/2};
\let\hoff\pgfmathresult
\foreach \j in {1,...,\nanc} {
  \pgfmathparse{(\j - \hoff) * \spacing};
  \let\xcoord\pgfmathresult;
  \path (vi) ++(\xcoord, 2.5) node[ancestor] (w\j) {};
  \path[->] (w\j) edge (vi);
}
\path (w\nanc) ++(1.2,0) node {$\frac{m \sqrt{p(m)}}{\alpha}$};
\def\strongparent{5};
\path (w\strongparent) ++(0,0) node[strong] {};

\def\ncloud{4};
\pgfmathparse{(\ncloud + 1)/2};
\let\hoff\pgfmathresult
\foreach \j in {1,...,\ncloud} {
  \pgfmathparse{(\j - \hoff) * \spacing};
  \let\xcoord\pgfmathresult;
  \path (w\strongparent) ++(\xcoord, 1.5) node[ancestor] (z\j) {};
  \path[->] (z\j) edge (w\strongparent);
}
\path (z\ncloud) ++(2,0) node {$\frac{m p(m)
((1+\epsilon)^{i-1}-1)}{\alpha^2}$};

\def\nsink{3};
\pgfmathparse{(\nsink + 1)/2};
\let\hoff\pgfmathresult
\foreach \j in {1,...,\nsink} {
  \pgfmathparse{(\j - \hoff - 1) * \spacing};
  \let\xcoord\pgfmathresult;
  \path (w1) ++(\xcoord, -1.5) node[ancestor] (s\j) {};
  \path[->] (w1) edge (s\j);
}
\path (s1) ++(-1.5,0) node {$\frac{1}{\sqrt{p(m)}} - 1$};

\end{tikzpicture}
 \caption{Subgraph $G_m^i$ of graph $G_m$ (Theorem~\ref{thm:cost_monte_carlo}).
The bulk of $v_i$'s PageRank score is provided by its $\approx m\sqrt{p(m)}$
parents, while its relative ranking is precisely determined by the number of
parents of its ``strong ancestor'' (filled in grey). Any Monte Carlo algorithm
can find a strong ancestor using either global queries to directly discover it
or one of its parents (which are a fraction $O(p(n))$ of all nodes), or local
queries to sample $v_i$'s parents (which are $\Omega(n^{2/3})$).}
 \label{fig:cost_montecarlo}
\end{figure}
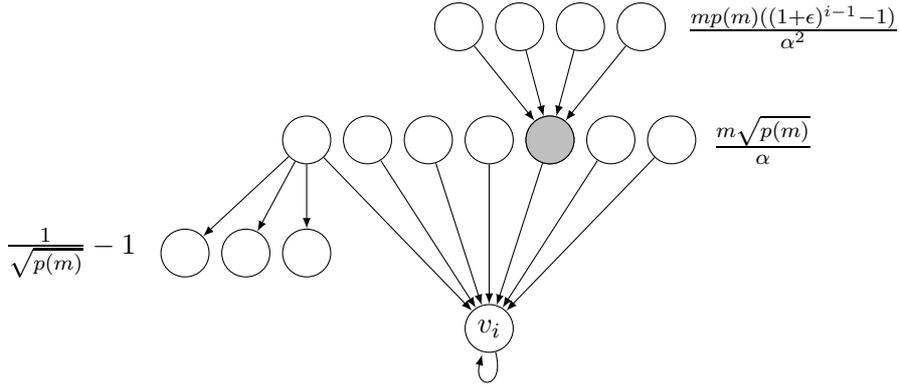

$G_m$ has size:
\begin{align}
n &\approx \sum_{i=1}^k \bigg( \frac{m\sqrt{p(m)}}{\alpha} \frac{1}{\sqrt{p(m)}}
+ \frac{m\,p(m)((1+\epsilon)^{i-1}-1)}{\alpha^2} \bigg)\nonumber\\
 &= m \bigg( \frac{k}{\alpha} + p(m) \sum_{i=1}^k
\frac{((1+\epsilon)^{i-1}-1)}{\alpha^2} \bigg)
\end{align}
which is in $\Theta(m)$ since $\epsilon, \alpha, k \in \Theta(1)$ and $p(m) \in
O(1)$.

The PageRank score of $v_i$ is:
\begin{align}
 P(v_i) &= \frac{1}{n}\bigg(1 + \alpha \Big(\frac{m\sqrt{p(m)}}{\alpha} - 1\Big)
\frac{1}{1/\sqrt{p(m)}} + \alpha + \alpha^2
\frac{m\,p(m)((1+\epsilon)^{i-1}-1)}{\alpha^2}\bigg)\nonumber\\
&\approx p(m)(1+\epsilon)^{i-1}
\end{align}
which is in $\Theta(p(n))$ since $\epsilon \in O(1)$ and $n = \Theta(m)$. The
scores satisfy
$P(v_{i+1}) \approx (1 + \epsilon)P(v_i)$, making the target nodes $\approx
\epsilon$-separated (a strict $\epsilon$-separation is obtained giving
$ m p(m) ((1+c\epsilon)^{i-1} - 1) / \alpha^2$ parents to $v_i$'s strong
ancestor for some constant $c$, but we avoid it to keep the proof simple).

We prove the lower bound, showing that it holds even if the algorithm is aware
of the structure of the graph, except for the identities and the number of
parents of the strong ancestors. To prove the bound under \emph{any}
graph exploration model, we adopt the ``most powerful'' local exploration model
possible, which allows \jump and \links queries as we prove immediately.
To find at least one strong ancestor, MC can:
\begin{enumerate}
 \item Explore the graph using global queries (recall that global queries do not
take any input and return in output a single node depending on the set of nodes
of the graph, but not on its arcs). Since the output of a global query does not
depend on what the algorithm has already visited, we imagine that all the global
queries have been performed before all the local queries, maximizing the
information obtained by MC. Observe now that \jump is the ``less adversarial''
type of global query possible. Indeed, for any global query returning nodes
not uniformly at random, we can reduce the amount of information obtained by MC
by permuting the parents of a target node so that the strong ancestor and its
parents are the nodes returned less likely. Even if MC is aware of this and
tries to use global queries to ``exclude'' parents that are less likely to be
strong ancestors, it needs $\Omega(n)$ queries to restrict to $O(n)$ candidates.
Thus, we assume that \jump is allowed. In this case, the probability that any
\jump query returns a strong ancestor or one of its parents (in which case the
algorithm can identify the strong ancestor with only an additional \links query)
is in $O((n\,p(n)((1+\epsilon)^{i-1}-1)/\alpha^2)/n) = O(p(n))$, and for $q$
queries it
is in $O(q p(n))$.
 \item Explore the graph using local queries. Note that a randomized algorithm
using \links can emulate any other type of local query; thus, we assume that \links
is allowed. MC can then start from the target nodes and sample their parents
-- the only ancestors that it knows besides those returned by \jump --
until finding the strong ancestor. For a non-uniform sampling, there
exists a worst-case instance (having the least-queried ancestor as the strong
ancestor) requiring a strictly higher number of queries than for a uniform
sampling -- therefore we assume that MC performs a
uniform sampling. The probability that querying $q$ parents yields a
strong ancestor is in $O(q/n\sqrt{p(n)})$ which, since $p(n) \in
\Omega(1/n^{2/3})$, is in $O(q n^{-\frac{2}{3}})$.
\end{enumerate}
Therefore, if MC performs $O(q)$ queries, the probability $\gamma$
of finding at least one strong ancestor, in which case it returns the correct
ranking with probability at most $1$, is $O(q \operatorname{max}\{p(n),
n^{-2/3}\})$; and hence $q \in \Omega(\gamma /
\operatorname{max}\{p(n), n^{-2/3}\})$. With probability $1 - \gamma$, MC does
not find any strong ancestor, and then outputs an arbitrary ranking -- again,
the strategy minimizing the worst-case error probability consists in returning a
ranking uniformly at random among the $k!$ possible rankings, which therefore
is correct with probability at most $1/k!$. The overall probability of
returning a correct ranking is thus upper bounded by
\begin{align}
 \gamma \, 1 + (1 - \gamma)\frac{1}{k!} \leq \frac{1}{k!} + \gamma
\end{align}
Since MC has confidence $1/k! + \delta$, we have
$\delta \leq \gamma$, and therefore $q \in \Omega(\delta
/ \operatorname{max}\{p(n), n^{-2/3}\})$.

To prove the lower bound also for $p(x) \in o(x^{-2/3})$, consider the graph
built for $p(x) = x^{-2/3}$, and add a chain of $\log_{\alpha}(p(m)/m^{-2/3})$
nodes between the parents of $v_i$ and $v_i$ itself. This damps the contribution
of the parents by a factor $\alpha^{\log_{\alpha}(p(m)/m^{-2/3})} = p(m) /
m^{-2/3}$, guaranteeing $P(v_i) \in \Theta(p(n))$ while preserving the
relative score separation of the target nodes. Clearly, MC solves correctly
this instance if and only if it solves correctly the instance built for $p(x) =
x^{-2/3}$, and the same bound holds.
\end{proof}

\end{document}